\documentclass[11pt, a4]{article}     % onecolumn (ditto)
\usepackage{a4wide}
\usepackage{fullpage}
\usepackage{hyperref}
\usepackage{authblk}

\usepackage{graphicx}
%
% \usepackage{mathptmx}      % use Times fonts if available on your TeX system
%
% insert here the call for the packages your document requires
%\usepackage{latexsym}
% etc.
%
% please place your own definitions here and don't use \def but
% \newcommand{}{}
%
% Insert the name of "your journal" with
% \journalname{myjournal}
%

% To add line numbers
%\usepackage{lineno}
%\linenumbers

\setcounter{topnumber}{9}
\setcounter{bottomnumber}{9}
\setcounter{totalnumber}{20}
\setcounter{dbltopnumber}{9}

\usepackage{amsmath}
\usepackage{amssymb}
\usepackage{tikz}
\usetikzlibrary{math,positioning,shapes,decorations.pathreplacing, calligraphy, backgrounds}
\usepackage{tikz-network}
\usepackage{multirow}
% Commands
\newcommand{\tnet}{\mbox{$\mathbb{T}$}}
\newcommand{\els}{\mathbb{E}}
\newcommand{\reach}[2]{\mbox{$\mathcal{R}_{#1}(#2)$}}
% Problems
\newcommand{\optproblem}[2]{\medskip\fbox{\parbox{0.92\textwidth}{\textsc{#1.} #2}}\medskip}
\newcommand{\mrtt}{\textsc{mrtt}}

\newcommand{\otomrtt}{\textsc{o2o-rtt}}

\newcommand{\kotomrtt}{\textsc{$k$-o2o-rtt}}
\newcommand{\ssmrtt}{\textsc{ss-mrtt}}

\newcommand{\tsat}{\textsc{3-sat}}
% Centering figures wider than the text width
\makeatletter
\newcommand*{\centerfloat}{%
  \parindent \z@
  \leftskip \z@ \@plus 1fil \@minus \textwidth
  \rightskip\leftskip
  \parfillskip \z@skip}
\makeatother
% Theorem-like environment
\newtheorem{theorem}{Theorem}
\newcommand{\qed}{\hfill\raisebox{3.5pt}{ \fbox{}}\medskip}
\newenvironment{proof}{\medskip\noindent\emph{Proof. }\ignorespaces}{\medskip\par\noindent\ignorespacesafterend}

\newtheorem{lemma}{Lemma}
\newtheorem{corollary}{Corollary}
\newtheorem{fact}{Fact}

\newtheorem{claim}{Claim}

% Complexity classes
\newcommand{\ptime}{\mbox{P}}
\newcommand{\nptime}{\mbox{NP}}
\newcommand{\npotime}{\mbox{NPO}}
% Boolean values
\newcommand{\true}{\mbox{\textsc{True}}}
\newcommand{\false}{\mbox{\textsc{False}}}
% Data structures
\newcommand{\tree}[1]{\mathcal{T}_{#1}}
% Symmetric
\newcommand{\symtrip}[1]{\mbox{\reflectbox{#1}}}
% Second descending trip

% eps

% Fix space after bold uparrow
\newcommand\fixup{\kern-\fontcharic\scriptfont2`\"}

\usepackage{xcolor}

\begin{document}

\title{Maximising Reachability in a Temporal Graph Obtained\\
by Assigning Starting Times to a Collection of Walks
%\thanks{Grants or other notes
%about the article that should go on the front page should be
%placed here. General acknowledgments should be placed at the end of the article.}
}
%\subtitle{Do you have a subtitle?\\ If so, write it here}

%\titlerunning{On The Complexity of Maximising Temporal Reachability via Trip Temporalisation}        % if too long for running head

\author[1]{Filippo Brunelli}
\author[2]{Pierluigi Crescenzi}
\author[1]{Laurent Viennot}

%\authorrunning{Short form of author list} % if too long for running head

\affil[1]{\small Université de Paris, Inria, CNRS, IRIF, F-75013 Paris, France%suported by ANR project Multimod ANR-17-CE22-0016.
}
\affil[2]{\small Gran Sasso Science Institute, I-67100 L'Aquila, Italy\break
\texttt{filippo.brunelli@inria.fr, pierluigi.crescenzi@gssi.it, laurent.viennot@inria.fr}}

%\date{Received: date / Accepted: date}
% The correct dates will be entered by the editor

\maketitle

\begin{abstract}
In a temporal graph, each edge appears and can be traversed at specific points in time. In such a graph, temporal reachability of one node from another is naturally captured by the existence of a temporal path where edges appear in chronological order. Inspired by the optimisation of bus/metro/tramway schedules in a public transport network, we consider the problem of turning a collection of walks (called trips) in a directed graph into a temporal graph by assigning a starting time to each trip in order to maximise the reachability among pairs of nodes. Each trip represents the trajectory of a vehicle and its edges must be scheduled one right after another. Setting a starting time to the trip thus forces the appearance time of all its edges. We call such a starting time assignment a trip temporalisation.

We obtain several results about the complexity of maximising reachability via trip temporalisation. Among them, we show that maximising reachability via trip temporalisation is hard to approximate within a factor $\sqrt{n}/12$ in an $n$-vertex digraph, even if we assume that for each pair of nodes, there exists a trip temporalisation connecting them. On the positive side, we show that there must exist a trip temporalisation connecting a constant fraction of all pairs if we additionally assume symmetry, that is, when the collection of trips to be scheduled is such that, for each trip, there is a symmetric trip visiting the same nodes in reverse order.

\smallskip
\textbf{Keywords:}{ edge labeling, edge scheduled network, network optimisation, temporal graph, temporal path, temporal reachability, time assignment.}
% \PACS{PACS code1 \and PACS code2 \and more}
% \subclass{MSC code1 \and MSC code2 \and more}
\end{abstract}

\medskip

\paragraph{Funding:} This work was suported by the French National Research Agency (ANR) through project Multimod with reference number ANR-17-CE22-0016.

% ------------------- \input{body/introduction}
%
\section{Introduction}
\label{sec:intro}

A temporal graph is a graph whose edges are present and can be traversed at specific points in time. They are of particular interest for studying dynamic processes in graphs that evolve with time, would it be information or epidemic propagation, or navigation in a transport network. In that context, the natural notion of connectivity arises from temporal paths, that are paths where edges appear in chronological order. An interesting measure of connectivity is then the temporal reachability that can be defined as the number of pairs of nodes connected by a temporal path.

Inspired by the problem of scheduling buses/metros/tramways in a public transport network, we consider the problem of assigning appearance times to the edges of a digraph (a directed graph) so that the resulting temporal graph maximises temporal reachability. In doing so, we assume that the trajectory of each bus/metro/tramway is fixed and is given by a walk in the digraph so that the edges of that walk must be scheduled one right after another. The input of our problem thus consists in a digraph and a collection of walks in that digraph. These given walks are called trips in reference to the application to transit networks, and the output is an assignment of starting times to these walks that we call a trip temporalisation. The goal is to maximise the reachability of the resulting temporal graph.

The problem of turning an undirected graph into a temporal graph by scheduling independently each edge has already been considered in the gossip setting where a graph is said to be label-connected if each edge can be assigned an appearance time (as a label) so that all pairs of nodes are temporally connected. Most prominently, it has been shown that it is NP-complete to decide whether an undirected graph is label-connected~\cite{Goebel1991}. Surprisingly, approximating maximum temporal reachability seem to have received little attention. More recently, a related minimisation problem has been studied in the context of epidemiology~\cite{Enright2021} where appearance times must be assigned to the edges of a graph (or a digraph) to minimise temporal reachability. The following type of temporal dependency between edges is considered: the edges are partitioned into subsets such that all edges in a subset must be scheduled in parallel (that is, they must be assigned the same appearance time). The authors prove the NP-completeness of the problem and leave as an open question the existence of a constant-factor approximation algorithm. Our trip temporalisation problem takes into account a different type of temporal dependency where edges are grouped into walks that must be sequentially scheduled. To the best of our knowledge, this idea is new although it seems natural in contexts such as transit networks. 

\subsection{Temporal graph model}

A \textit{temporal graph} is a directed multigraph where each edge is labeled with an appearance time and a travel time. It can be represented as a list of \textit{temporal edges} $(u,v,t,\lambda)$, where $u$ and $v$ are two nodes, $t$ is the \textit{appearance time} of the temporal edge, and $\lambda$ is its \textit{travel time}. That is, we can traverse the edge starting from $u$ at time $t$ (and no later) and reaching $v$ at time $t+\lambda$, which is the \textit{arrival time} of the temporal edge.
In the context of public transport networks, such a quadruple is also called a connection~\cite{Dibbelt2018}, and models an elementary movement of a vehicle departing from a stop at time $t$ and reaching the next one at time $t+\lambda$.
A node $v$ is \textit{temporally reachable} from a node $u$ when there exists a \textit{temporal path} from $u$ to $v$, that is a path with the additional temporal constraint that the arrival time of each edge is no later than the appearance time of the next edge so that all edges can be successively traversed one after another.
We then also say that the temporal path \textit{connects} $u$ to $v$, and that the pair $(u,v)$ is \textit{temporally connected}. % when a temporal path connects $u$ to $v$.
We define the \textit{temporal reachability} of a temporal graph as the number of pairs of nodes that are temporally connected.

Figure~\ref{fig:inducedtemporalgraph} shows an example of a temporal graph. For the sake of clarity, travel times are denoted with a ``$+$'' in contrast to appearance times. In that temporal graph, $v_1$ can reach $v_8$ at time 14 by following the temporal path $(v_1,v_2,1,+1),(v_2,v_3,6,+1),(v_3,v_6,7,+1),(v_6,v_7,8,+1),(v_7,v_8,12,+2)$.
However, $(v_5,v_4)$ is not temporally connected as $v_5$ cannot reach $v_7$ before time $12$ while the only temporal edge crossing the cut $\{v_5,v_6,v_7,v_8\}$, $\{v_1,v_2,v_3,v_4\}$ is $(v_7,v_2,9,+1)$ whose appearance time is $9<12$.
Indeed, node $v_1$ can reach all nodes but node $v_5$, while node $v_7$ can reach only nodes $v_2$, $v_7$, and $v_8$. The temporal reachability of this temporal graph is $30$.

\begin{figure}[t]
\centering{\SetVertexStyle[FillColor=white]
\SetEdgeStyle[Color=black]
\begin{tikzpicture}[x=2cm,y=2cm]
  \tikzmath{\xd=3;\yd=3;\x1=0;\y1=0;\x2=\x1+\xd;\y3=\y1+\yd;\x5=\x2+2*\xd;\x6=\x2+\xd;}
  \Vertex[x=\x2,y=\y3,label=$v_{3}$,size=0.5]{x3}
  \Vertex[x=\x1,y=\y3,label=$v_{4}$,size=0.5]{x4}
  \Vertex[x=\x6,y=\y1,label=$v_{6}$,size=0.5]{x6}
  \Vertex[x=\x6,y=\y3,label=$v_{7}$,size=0.5]{x7}
  \Vertex[x=\x5,y=\y3,label=$v_{8}$,size=0.5]{x8}
  \Vertex[x=\x1,y=\y1,label=$v_{1}$,size=0.5]{x1}
  \Vertex[x=\x2,y=\y1,label=$v_{2}$,size=0.5]{x2}
  \Vertex[x=\x5,y=\y1,label=$v_{5}$,size=0.5]{x5}
  \Edge[Direct,label={$1,+1$},position=below](x1)(x2)
  \Edge[Direct,bend=-30,label={$2,+2$},position=right](x2)(x3)
  \Edge[Direct,bend=30,label={$6,+1$},position=left](x2)(x3)
  \Edge[Direct,label={$4,+2$},position=above](x3)(x4)
  \Edge[Direct,label={$10,+1$},position=below](x5)(x6)
  \Edge[Direct,bend=-30,label={$11,+1$},position=right](x6)(x7)
  \Edge[Direct,bend=30,label={$8,+1$},position=left](x6)(x7)
  \Edge[Direct,label={$12,+2$},position=above](x7)(x8)
  \Edge[Direct,label={$7,+1$},distance=.27,position=above](x3)(x6)
  \Edge[Direct,label={$9,+1$},distance=.4,position=above](x7)(x2)
\end{tikzpicture}}
\caption{A temporal graph where each edge is labeled by its appearance time and its travel time respectively. For the sake of clarity, each travel time is indicated with a ``$+$'' as a reminder that the sum of appearance time and travel time of a temporal edge is equal to its arrival time.}
\label{fig:inducedtemporalgraph}
\end{figure}

The temporal constraint required for temporal paths makes the computation of the nodes reachable from a given node a little bit more complicated than in the case of graphs, but still doable in time $\tilde{O}(m)$, where $m$ is the number of temporal edges and the notation $\tilde{O}$ hides poly-logarithmic factors (see~\cite{Dibbelt2018,Wu2016}). The temporal reachability can thus be computed in $\tilde{O}(nm)$ time where $n$ denotes the number of nodes. Note that we do not put any constraint on how long it is possible to wait at a node in-between two temporal edges, as these further constraints can dramatically change the complexity of such temporal path computation~\cite{Casteigts2020Waiting}.

\subsection{The maximum reachability trip temporalisation problem}

Given a weighted directed multigraph $D$, an \textit{edge temporalisation} assigns to each edge $(u,v,\lambda)$ of $D$ an appearance time $t$, making it a temporal edge $(u,v,t,\lambda)$. For example, the weighted directed multigraph $D$ could represent the map of a public transit network where the weight of an edge represents the time needed by a vehicle to travel along that edge. Multiple types of vehicle traversing the same edge can be captured by multiple edges with appropriate weights.
However, in that context, the edges are not ``independent'', in the sense that an appearance time cannot be assigned to an edge independently of the appearance time assigned to other edges. Indeed, the edges are grouped into walks where each walk represents the journey of a vehicle.
We are thus given a collection $\tnet$ of walks in $D$ that we call \textit{trips} to distinguish them from other arbitrary walks in $D$.
When several vehicles travel along the same walk, we assume that a distinct trip is associated to each one of them. We also suppose that the waiting time at a stop is negligible and that scheduling a vehicle amounts to assigning an appearance time to the first edge of the corresponding trip, since all the other appearance times are a consequence of it: indeed, each edge of the trip appears right after the arrival of the previous one. More precisely, given a trip $T=e_1,\ldots,e_k$, where, for $i\in[k]$,\footnote{In the following, for any $h\in\mathbb{N}$, we denote by $[h]$ the set $\{1,2,\ldots,h\}$.} $e_i=(u_i,v_i,\lambda_i)$, assigning a starting time $t$ to $T$ results in the set of temporal edges $f_i=(u_i,v_i,t+\sum_{j=1}^{i-1}\lambda_j,\lambda_i)$, for $i\in[k]$.\footnote{As it is standard, we assume that the summation with no summands evaluates to zero.}
The operation of assigning a starting time to each walk in \tnet{} is called a \textit{trip temporalisation}.

\begin{figure}[t]
\centering{\SetVertexStyle[FillColor=white]
\SetEdgeStyle[Color=black]
\begin{tikzpicture}[x=2cm,y=2cm]
  \tikzmath{\xd=1.5;\yd=1.5;\x1=0;\y1=0;\x2=\x1+\xd;\y3=\y1+\yd;\x5=\x2+2*\xd;\x6=\x2+\xd;}
  \Vertex[x=\x1,y=\y1,label=$v_{1}$,size=0.5]{x1}
  \Vertex[x=\x2,y=\y1,label=$v_{2}$,size=0.5]{x2}
  \Vertex[x=\x2,y=\y3,label=$v_{3}$,size=0.5]{x3}
  \Vertex[x=\x1,y=\y3,label=$v_{4}$,size=0.5]{x4}
  \Vertex[x=\x5,y=\y1,label=$v_{5}$,size=0.5]{x5}
  \Vertex[x=\x6,y=\y1,label=$v_{6}$,size=0.5]{x6}
  \Vertex[x=\x6,y=\y3,label=$v_{7}$,size=0.5]{x7}
  \Vertex[x=\x5,y=\y3,label=$v_{8}$,size=0.5]{x8}
  \Edge[Direct,label={~$+1$~},position=below](x1)(x2)
  \Edge[Direct,bend=-20,label=$+2$,position=right](x2)(x3)
  \Edge[Direct,bend=20,label=$+1$,position=left](x2)(x3)
  \Edge[Direct,label={~$+2$~},position=above](x3)(x4)
  \Edge[Direct,label={~$+1$~},position=below](x5)(x6)
  \Edge[Direct,label=$+1$,position=right](x6)(x7)
  \Edge[Direct,label={~$+2$~},position=above](x7)(x8)
  \Edge[Direct,label={$+1$},distance=0.25,position=above](x3)(x6)
  \Edge[Direct,label={$+1$},distance=0.25,position=above](x7)(x2)
\end{tikzpicture}
\qquad
\begin{tikzpicture}[x=2cm,y=2cm]
  \tikzmath{\xd=1.5;\yd=1.5;\x1=0;\y1=0;\x2=\x1+\xd;\y3=\y1+\yd;\x5=\x2+2*\xd;\x6=\x2+\xd;}
  \Vertex[x=\x2,y=\y3,label=$v_{3}$,size=0.5]{x3}
  \Vertex[x=\x1,y=\y3,label=$v_{4}$,size=0.5]{x4}
  \Vertex[x=\x6,y=\y1,label=$v_{6}$,size=0.5]{x6}
  \Vertex[x=\x6,y=\y3,label=$v_{7}$,size=0.5]{x7}
  \Vertex[x=\x5,y=\y3,label=$v_{8}$,size=0.5]{x8}
  \SetVertexStyle[FillColor=white,LineColor=blue]
  \Vertex[x=\x1,y=\y1,label=$v_{1}$,size=0.5]{x1}
  \SetVertexStyle[FillColor=white,LineColor=green]
  \Vertex[x=\x2,y=\y1,label=$v_{2}$,size=0.5]{x2}
  \SetVertexStyle[FillColor=white,LineColor=red]
  \Vertex[x=\x5,y=\y1,label=$v_{5}$,size=0.5]{x5}
  \Edge[Direct,label={~$+1$~},position=below,color=blue,fontcolor=black](x1)(x2)
  \Edge[Direct,bend=-20,color=blue,label=$+2$,position=right,fontcolor=black](x2)(x3)
  \Edge[Direct,bend=20,color=green,style={dashed},label=$+1$,position=left,fontcolor=black](x2)(x3)
  \Edge[Direct,color=blue,label={~$+2$~},position=above,fontcolor=black,fontcolor=black](x3)(x4)
  \Edge[Direct,color=red,style={dotted},label={~$+1$~},position=below,fontcolor=black](x5)(x6)
  \Edge[Direct,bend=-20,color=red,style={dotted},label=$+1$,position=right,fontcolor=black](x6)(x7)
  \Edge[Direct,bend=20,color=green,style={dashed},label=$+1$,position=left,fontcolor=black](x6)(x7)
  \Edge[Direct,color=red,style={dotted},label={~$+2$~},position=above,fontcolor=black](x7)(x8)
  \Edge[Direct,color=green,style={dashed},label={$+1$},distance=0.25,position=above,fontcolor=black](x3)(x6)
  \Edge[Direct,color=green,style={dashed},label={$+1$},distance=0.25,position=above,fontcolor=black](x7)(x2)
\end{tikzpicture}}
\caption{An example of a weighted directed multigraph $D$ (left) where weights represent travel times of edges, and a collection $\tnet$ of walks on $D$ (right) called ``trips''. Each starting node of a trip has a colored border. The edge $(v_6,v_7,1)$ of $D$ is ``used'' by both the green dashed trip $T_2$ and the red dotted trip $T_3$. The duration of the blue solid trip $T_1$ is the sum of the travel times of its edges which amounts to $+5$, while the duration of the other two trips is $+4$. No trip temporalisation exists such that both $v_8$ is reachable from $v_1$ and $v_4$ is reachable from $v_5$ in the induced temporal graph.}
\label{fig:example}
\end{figure}

The temporal graph \textit{induced} by a trip temporalisation of $\tnet$ is the temporal graph whose node set is the same as the node set of $D$, and whose set of temporal edges is the disjoint union of all the temporal edges resulting from the assignment of the starting times to the walks in $\tnet$.
The \textit{reachability} of a trip temporalisation is the temporal reachability of the temporal graph it induces, that is, the number of pairs of nodes which are temporally connected.
For example, let us consider the weighted directed multigraph $D$ shown in the left part of Figure~\ref{fig:example}, and the following collection $\tnet$ of walks on $D$ (depicted in the right part of the figure): $T_1=(v_1,v_2,+1),(v_2,v_3,+2),(v_3,v_4,+2)$ (blue solid trip), $T_2=(v_2,v_3,+1),(v_3,v_6,+1),(v_6,v_7,+1)$, $(v_7,v_2,+1)$ (green dashed trip), and $T_3=(v_5,v_6,+1),(v_6,v_7,+1),(v_7,v_8,+2)$ (red dotted trip). Note how the edge $(v_6,v_7,+1)$ is ``used'' by two different trips (that is, $T_2$ and $T_3$): this might correspond to two different vehicles travelling through this edge. Note also that there is no trip temporalisation such that both pairs of nodes $(v_1,v_8)$ and $(v_5,v_4)$ are temporally connected in the induced temporal graph. Indeed, if $v_8$ is reachable from $v_1$, then the starting time assigned to $T_1$ has to be smaller than the starting time assigned to $T_3$ as reaching $v_7$ from $v_2$ requires at least $+3$ units of time (using edges of $T_2$), while if $v_4$ is reachable from $v_5$, then the starting time assigned to $T_3$ has to be smaller than the starting time assigned to $T_1$ as reaching $v_3$ from $v_6$ also requires at least $+3$ units of time: these two inequalities cannot be both satisfied. Let us consider the trip temporalisation which assigns to $T_1$ the starting time $1$, to $T_2$ the starting time $6$, and to $T_3$ the starting time $10$ (this trip temporalisation intuitively corresponds to scheduling the three trips one after the other). The temporal graph induced by this trip temporalisation is indeed shown in Figure~\ref{fig:inducedtemporalgraph}
and its reachability is equal to $30$ as already mentioned. On the other hand, it is possible to verify that the trip temporalisation which assigns to $T_1$ the starting time $9$, to $T_2$ the starting time $5$, and to $T_3$ the starting time $1$ induces a temporal graph whose reachability is $32$ (see also Table~\ref{tbl:example} at page~\pageref{tbl:example}). 

The network optimisation problem, called \textsc{Maximum Reachability Trip Temporalisation} (in short, \mrtt{}), that we will analyse in this paper is, hence, the following one: \textit{given a weighted directed multigraph $D$ and a collection \tnet{} of walks on $D$, find a trip temporalisation of \tnet{} which maximises the reachability of the induced temporal graph.}

\subsection{Our results}

Our results are summarised in Table~\ref{tbl:results}, where two other combinatorial problems are also considered. The first decision problem, denoted by \otomrtt{}, is the one-to-one version of the \mrtt{} problem, in which the question is whether a trip temporalisation exists making one given node $t$ temporally reachable from another given node $s$. The second maximisation problem, denoted by \ssmrtt{}, is the single-source version of the \mrtt{} problem, in which the question is to find a trip temporalisation maximising the number of nodes temporally reachable from a given source node $s$.

Note that, although we assume negligible waiting times in the sense that edges of a trip must be scheduled one right after the other, our results can be generalized in a setting where, for each pair of consecutive edges of a trip, a fixed waiting time is imposed. The reason is similar to the fact that we can restrict ourselves to a setting where all travel times are 1 as explained in Section~\ref{sec:preliminaries}.

Quite surprisingly, our first result (see Theorem~\ref{thm:otomrtthard}) shows that \textit{the \otomrtt{} problem is NP-complete}. Using a classical gap technique, we then obtain that if $\ptime\neq\nptime$, then \textit{the \mrtt{} and the \ssmrtt{} problems cannot be approximated within a factor $n^{1-\varepsilon}$ for any $\varepsilon>0$, where $n$ is the number of nodes} (see Theorems~\ref{thm:mrtthard} and~\ref{thm:ssmrtthard}). We also show that the parameterised version of the \otomrtt{} problem with respect to the number $k$ of trips used in the resulting temporal graph, in order to go from $s$ to $t$, can be solved in time $2^{O(k)}m\log\left|\tnet\right|$ where $m=\sum_{T\in \tnet}|T|$ is the sum of trip lengths (see Theorem~\ref{thm:parameterised}).

The above non-approximability results are the main reason for focusing our attention on an  interesting restriction of the \mrtt{} problem, that is, the one in which the collection of trips \tnet{} satisfies the very natural property of being ``temporally'' strongly connected in the following sense. A collection of trips \tnet{} is \textit{strongly temporalisable} if, for each pair of nodes $u$ and $v$, there exists a trip temporalisation of \tnet{} that allows $u$ to (temporally) reach $v$. Note that this requirement is a rather weak one, since we are not asking for a unique trip temporalisation, but for a trip temporalisation for each pair of nodes (indeed, this is a requirement which is satisfied in many applications of temporal graphs). We first show that the strong temporalisability property is not sufficient to get high reachability. To this aim, we prove that \textit{there exists an infinite family of trip collections, all strongly temporalisable, such that any trip temporalisation connects  at most an $O(1/\sqrt{n})$ fraction of all pairs} (see Theorem~\ref{thm:pairscheduleconnected}). By using this construction, we then show that if $\ptime\neq\nptime$, then \textit{the \mrtt{} and the \ssmrtt{} problems cannot be approximated within a factor less than $\sqrt{n}/12$, when restricted to strongly temporalisable trip networks} (see Theorems~\ref{th:inapprox-mrtt} and~\ref{th:inapprox-ssmrtt}).

However, the situation changes if we add another quite natural property of a trip collection, that is, symmetry. A trip collection \tnet{} is \textit{symmetric} if, for each trip $T\in\tnet$, \tnet{} includes also a reverse trip, that is, a trip starting from the last node of $T$, arriving in the first node of $T$, and passing through all the nodes in $T$ in reverse order.\footnote{For what concerns our results, the travel time of a temporal edge and the travel time of its ``corresponding reverse'' temporal edge need not be necessarily equal.} For example, referring to public transport systems, symmetry is almost always respected, since for any bus/metro/tramway trip, there is usually also the same bus/metro/tramway trip in the opposite direction. It is quite easy to show that, if the collection of trips \tnet{} is \textit{symmetric}, then \tnet{} is strongly temporalisable if and only if the weighted directed multigraph $D$ is strongly connected (see Corollary~\ref{cor:connected}). 

We show that \textit{the \mrtt{} problem is NP-hard even if restricted to symmetric and strongly temporalisable collection of trips} (see Theorem~\ref{th:sym-np}). However, our final result shows that, \textit{given a symmetric and strongly temporalisable collection of trips, it is possible to find in polynomial time a trip temporalisation achieving a reachability proportional to the total number of pairs} (see Theorem~\ref{th:symmetric}). This implies the existence of a constant-factor approximation algorithm in the symmetric and strongly temporalisable setting.
It also gives ground to the classical design of public transit networks using symmetric lines.

All our hardness results are proved starting from the \tsat{} problem, which is \nptime-complete~\cite{Garey1979}. Moreover, it is easy to show that the decision and maximisation problems we consider are in \nptime{} or in \npotime{} (that is, the class of \nptime{} optimisation problems~\cite{AusielloMCGPK99}), respectively. Indeed, given an instance of the problem and a trip temporalisation, checking that a node $t$ is reachable from a node $s$ in the induced temporal graph can be done in polynomial time as already mentioned.

\begin{table}[t]
\centerfloat

\begin{tabular}{||p{2cm}||p{12cm}||}
\hline
\multicolumn{2}{||c||}{\textbf{Maximisation problems}}\\
\hline
\multicolumn{1}{||c||}{\textbf{Problem}} & \multicolumn{1}{c||}{\textbf{Complexity}}\\
\hline\hline
% \mret & NP-hard (Theorem~\ref{thm:hardnessedgetemporalisation})\\
% \hline
\mrtt & %\multirow{2}{*}{
       Not approximable within a factor $n^{1-\varepsilon}$ for any $\varepsilon>0$ (Theorem~\ref{thm:mrtthard} \\
\cline{1-1}
% \ssmret & Linear-time solvable (trivial)\\
% \hline
\ssmrtt & and Theorem~\ref{thm:ssmrtthard})\\
\hline\hline
\end{tabular}

\vspace*{0.25cm}

\begin{tabular}{||p{2cm}||p{12cm}||}
\hline
\multicolumn{2}{||c||}{\textbf{Decision problems}}\\
\hline
\multicolumn{1}{||c||}{\textbf{Problem}} & \multicolumn{1}{c||}{\textbf{Complexity}}\\
\hline\hline
% \otomret & Linear-time solvable (trivial)\\
% \hline
\otomrtt & NP-complete (Theorem~\ref{thm:otomrtthard})\\
\hline
\kotomrtt & Solvable in time $2^{O(k)}m\log\left|\tnet{}\right|$ (Theorem~\ref{thm:parameterised})\\
\hline\hline
\end{tabular}

\vspace*{0.25cm}

\begin{tabular}{||p{2cm}||p{4.5cm}|p{7.15cm}||}
\cline{2-3}
\multicolumn{1}{c||}{} & \multicolumn{2}{c||}{\textbf{Property}}\\
\hline
\textbf{Problem} & \textit{Strongly temporalisable} & \textit{Strongly temporalisable and symmetric}\\
\hline\hline
\otomrtt & \multicolumn{2}{c||}{Linear-time solvable (trivial)}\\
\hline
\mrtt & \multirow{2}{*}{\begin{minipage}{4.25cm}Not approximable within a factor less than $\sqrt{n}/12$ (Theorems~\ref{th:inapprox-mrtt} and~\ref{th:inapprox-ssmrtt})\end{minipage}} & \begin{minipage}{6.75cm}NP-hard (Theorem~\ref{th:sym-np}) and $r$-approxi\-mable for some $r>0$ (Theorem~\ref{th:symmetric})\end{minipage}\\
\cline{1-1}\cline{3-3}
\begin{minipage}{2cm}\ssmrtt\end{minipage} &  & \begin{minipage}{6.75cm}Linear-time solvable (consequence of Fact 4)\end{minipage}\\[10pt]
\hline\hline
\end{tabular}

\caption{Our results assuming $\ptime\neq\nptime$ ($n$ denotes the number of nodes, $m=\sum_{T\in \tnet}|T|$ denotes the sum of trip lengths and $k$ denotes the number of trips that can be used in a temporal path). The approximability result is obtained by proving that we can get a high temporal reachability (that is, a temporal reachability proportional to the total number of pairs of nodes). A further result is Theorem~\ref{thm:pairscheduleconnected}, which intuitively states that the strong temporalisability property is not sufficient to get high reachability. The table leaves as the main open problem the question whether the \mrtt{} and the \ssmrtt{} problems are approximable within a sub-linear factor, when restricted to strongly temporalisable collections of trips.}
\label{tbl:results}
\end{table}

\subsection{Related work}
\label{sec:relatedwork}

Temporal graphs (also known as edge-scheduled networks~\cite{Berman1996}, dynamic graphs~\cite{Harary1997}, temporal networks~\cite{Kempe2002Connectivity}, evolving graphs~\cite{BhadraF03}, time-stamped graphs~\cite{Cheng2003}, time-varying graphs~\cite{Casteigts2012}, link streams~\cite{Latapy2018}, or point-availability time-dependent networks~\cite{Brunelli2021}) have received increasing attention over the last two decades~\cite{Holme2012,Holme2013,Masuda2016,Michail2016} as they have applications in a wide variety of contexts, ranging from phone  calls to  contact tracing,  from cattle  exchanges to messaging, from communication traffic to public transport systems. They also have been repeatedly used in order to study classical notions from the field of graph theory (such as degree, path, connectivity, clique, and so on) in a more realistic framework, in which the topology of the graph changes over time.

Optimisation of timetables in a transit network has been studied as an operation research problem (see \cite{CacchianiT2012} for a survey) at a fine-grained level of modeling, taking into account sharing of route segments or tracks, and mixing various objectives such as operation costs or overall user waiting time when the traffic demand is known. We have a higher-level approach that aims at grasping the connectivity of the network. 

Problems similar to the one considered in this paper have already been analysed~\cite{Coro2019,Enright2019,Kempe2002Connectivity,MertziosMCS2013,Mertzios2019,Mertzios2021,Molter2021}. For instance, in~\cite{Enright2019} the authors consider the problem of deleting edges from a given temporal graph in order to reduce its reachability, motivated by the context of epidemiology. Later on, the problem of assigning appearance times of to the edges of a graph in order to minimise the reachability of the resulting temporal graph is studied in~\cite{Enright2021} as already mentioned.
Another closely related work~\cite{DeligkasP2020} studies the problem of minimising the average reachability (as well as other similar objectives which could be interesting goals in the context of transport networks also) in a temporal graph by delaying some edges. Various NP-hardness results as well as a polynomial-time algorithm are given, depending on the type of delay operations that are permitted. The authors leave as open the complexity of maximising reachability which is similar to our goal.

As far as we know, the trip temporalisation problem has never been studied before in such generality. The topic of temporalising edges to increase reachability is connected to gossip and broadcasting protocols (see~\cite{HedetniemiHL1988} for a survey). However, apart from the already mentioned~\cite{Goebel1991} where the undirected setting and the independence of edges makes the problem different from here, the objective is usually different, that is, minimising the time for a message to reach all nodes.
From an application point of view, our trip temporalisation is closely related to the last train timetable synchronization problem for which several algorithms (based on mixed-integer programming and genetic approaches) are proposed in~\cite{Chen2019,Zhou2019}.

\section{Preliminary definitions and results}
\label{sec:preliminaries}

A \textit{weighted directed multigraph} (or just \textit{weighted multidigraph}) $D=(V,E)$ consists of a set $V$ of \textit{nodes} and a set $E\subseteq V\times V\times\mathbb{R}^{+}$ of \textit{(weighted) edges} (for each edge $(u,v,w)$, we say that $u$ is the \textit{tail}, $v$ is the \textit{head}, and $w$ is the weight of the edge). A \textit{walk} $T$ in a weighted multidigraph $D=(V,E)$ from a node $u$ to a node $v$ is a sequence $e_1,\ldots,e_{k}$ of edges in $E$ such that, for each $i$ with $i\in[k-1]$, the head of $e_i$ is equal to the tail of $e_{i+1}$, $u$ is the tail of $e_1$, and $v$ is the head of $e_k$. The \textit{duration} $\delta(T)$ is defined as the sum of the weights of all the edges of $T$. A node $v$ is said to be \textit{reachable} from a node $u$ in $D$, if there exists a walk in $D$ from $u$ to $v$ (in the following, we will assume that a node is reachable from itself).

A \textit{trip network} is a weighted multidigraph $D=(V,E)$ (also called the \textit{underlying multidigraph} of the trip network) along with a collection $\tnet=\{T_1,\ldots,T_{\left|\tnet\right|}\}$ of walks in $D$.\footnote{Here, the term ``collection'' is used to mean a multiset, that is, a set in which order is ignored but multiplicity is significant. The cardinality $|A|$ of a multiset $A$ denotes the sum of the multiplicities of the distinct elements in $A$: in our case, $|\tnet|$ denotes the number of (not necessarily distinct) walks in $\tnet$.} The walks in \tnet{} are also called \textit{trips} to distinguish them from other arbitrary walks in $D$. In the following, without loss of generality, we will assume that any node in $V$ and any edge in $E$ appears in at least one trip in $\tnet$. Note that the disjoint union of the trips in \tnet{} defines a weighted multidigraph $M$ that we call the \textit{induced multidigraph} of $(D,\tnet)$ (we assume that the edges of $M$ have an additional label specifying which trip they belong to, and which is represented in the figures by means of different line colors and styles).

A \textit{temporal graph} $G=(V,\els)$ consists of a set $V$ of \textit{nodes} and a set $\els\subseteq V\times V\times\mathbb{R}\times\mathbb{R}^{+}$ of \textit{temporal edges}. Given a temporal edge $e=(u,v,t,\lambda)$, we say that $u$ is the \textit{tail}, $v$ is the \textit{head}, $t$ is the \textit{appearance} time, $\lambda >0$ is the \textit{travel} time, and $t+\lambda$ is the \textit{arrival} time of $e$. A \textit{temporal path} from a node $u$ to a node $v$ in a temporal graph $G$ is a sequence of temporal edges $e_1,\ldots,e_k$ such that the tail of $e_1$ is $u$, the head of $e_k$ is $v$, and, for each $i\in[k-1]$, the tail of $e_{i+1}$ is equal to the head of $e_i$ and the appearance time of $e_{i+1}$ is at least equal to the arrival time of $e_i$ (as travel times are strictly positive, the path is strict in the sense that the appearance time of $e_{i+1}$ is greater than the appearance time of $e_i$, for each $i\in[k-1]$). A node $v$ is \textit{temporally reachable} from a node $u$ if there exists a temporal path from $u$ to $v$ (in the following, we will assume that a node is temporally reachable from itself). The set of nodes temporally reachable from a node $u$ in $G$ is denoted as $\reach{G}{u}$. The \textit{temporal reachability} of a node $u$ in $G$ is defined as $\left|\reach{G}{u}\right|$, while the \textit{temporal reachability} of $G$ is defined as $\sum_{u\in V}\left|\reach{G}{u}\right|$.

Given a trip network $(D,\tnet)$, a \textit{temporalisation} $\tau$ of the trip network assigns a real number $\tau(T)$ to each trip $T$ in $\tnet$, indicating the \textit{starting time} of $T$. Such a temporalisation induces the temporal graph $G[D,\tnet,\tau]=(V,\els)$ defined as follows. For each $T=e_1,\ldots,e_{k}$ in $\tnet$, with $e_i=(u_i,v_i,w_i)$, $\els$ contains the temporal edges $(u_i,v_i,\tau(T)+\sum_{j=1}^{i-1}w_{j},w_{i})$: we say that these temporal edges are \textit{induced} by $T$ (with respect to the temporalisation $\tau$). A node $v$ is said to be $\tau$-reachable from a node $u$ if $v\in\reach{G[D,\tnet,\tau]}{u}$. The \textit{$\tau$-reachability} of a node $u$ is the temporal reachability of $u$ in $G[D,\tnet,\tau]$, and the \textit{$\tau$-reachability} of the trip network is the temporal reachability of $G[D,\tnet,\tau]$. Our main optimisation problem is the following one.

\optproblem{Maximum Reachability Trip Temporalisation (\mrtt)}{Given a trip network $(D,\tnet)$, find a temporalisation $\tau$ of the trip network which maximises its $\tau$-reachability.}

We will also study the restriction of the \mrtt{} problem to the case in which, for each pair of nodes, there exists a temporalisation allowing us to reach one from the other. More precisely, given a trip network $(D,\tnet)$ and two nodes $s$ and $t$, $(D,\tnet)$ is said to be \textit{$(s,t)$-temporalisable} if there exists a temporalisation $\tau$ of $(D,\tnet)$ such that $t$ is $\tau$-reachable from $s$, and it is said to be \textit{strongly temporalisable} if, for any two nodes $s$ and $t$, $(D,\tnet)$ is $(s,t)$-temporalisable. Moreover, we will also consider symmetric trip networks in the following sense. We say that a trip network $(D,\tnet)$ is \textit{symmetric} if all trips in \tnet{} can be grouped into disjoint pairs $(T,\symtrip{$T$})$ such that \symtrip{$T$} is the reverse of $T$ ($T$ and \symtrip{$T$} are two distinct trips in \tnet{}): \symtrip{$T$} visits the same nodes as $T$, but in reverse order. 

We will often refer to a particular kind of temporalisation. Given a trip network $(D,\tnet)$, a \textit{schedule} of the trip network is an ordering of the trips in $\tnet$. Note that a schedule $S$ immediately induces a temporalisation $\tau_{S}$ of the trip network defined as follows. If $S=T_1,\ldots,T_{\left|\tnet\right|}$, then $\tau_S(T_1)=0$ and $\tau_S(T_{i+1})=\sum_{j=1}^{i}\delta(T_{j})$, for $i\in[|\tnet|-1]$. A node $v$ is said to be $S$-reachable from a node $u$ if it is $\tau_S$-reachable. The \textit{$S$-reachability} of the trip network is defined as its $\tau_S$-reachability (see Table~\ref{tbl:example} where, for any possible schedule $S$, we indicate the $S$-reachability of the trip network shown in Figure~\ref{fig:example}).

\begin{table}[t]
\centerfloat
\begin{small}
\begin{tabular}{||c||c|c|c|c|c|c||}
\cline{2-7}
\multicolumn{1}{c||}{} & $T_1,T_2,T_3$ & $T_1,T_3,T_2$ & $T_2,T_1,T_3$ & $T_2,T_3,T_1$ & $T_3,T_1,T_2$ & $T_3,T_2,T_1$\\
\hline\hline
$v_1$ & $V\setminus\{v_5\}$ & $V\setminus\{v_5,v_8\}$ & $\{v_1,v_2,v_3,v_4\}$ & $\{v_1,v_2,v_3,v_4\}$ & $V\setminus\{v_5,v_8\}$ & $\{v_1,v_2,v_3,v_4\}$\\
\hline
$v_2$ & $V\setminus\{v_1,v_5\}$ & $V\setminus\{v_1,v_5,v_8\}$ & $V\setminus\{v_1,v_5\}$ & $V\setminus\{v_1,v_5\}$ & $V\setminus\{v_1,v_5,v_8\}$ & $V\setminus\{v_1,v_5,v_8\}$\\
\hline
$v_3$ & $V\setminus\{v_1,v_5\}$ & $V\setminus\{v_1,v_5,v_8\}$ & $V\setminus\{v_1,v_5\}$ & $V\setminus\{v_1,v_5\}$ & $V\setminus\{v_1,v_5,v_8\}$ & $V\setminus\{v_1,v_5,v_8\}$\\
\hline
$v_5$ & $\{v_5,v_6,v_7,v_8\}$ & $V\setminus\{v_1,v_3,v_4\}$ & $\{v_5,v_6,v_7,v_8\}$ & $\{v_5,v_6,v_7,v_8\}$ & $V\setminus\{v_1,v_3,v_4\}$ & $V\setminus\{v_1\}$\\
\hline
$v_6$ & $\{v_2,v_6,v_7,v_8\}$ & $\{v_2,v_6,v_7,v_8\}$ & $\{v_2,v_6,v_7,v_8\}$ & $V\setminus\{v_1,v_5\}$ & $\{v_2,v_6,v_7,v_8\}$ & $V\setminus\{v_1,v_5\}$\\
\hline
$v_7$ & $\{v_2,v_7,v_8\}$ & $\{v_2,v_7,v_8\}$ & $V\setminus\{v_1,v_5,v_6\}$ & $V\setminus\{v_1,v_5,v_6\}$ & $\{v_2,v_7,v_8\}$ & $V\setminus\{v_1,v_6,v_5\}$\\
\hline\hline
\multicolumn{1}{c||}{} & $30$ & $28$ & $29$ & $31$ & $28$ & $32$\\
\cline{2-7}
\end{tabular}
\end{small}
\caption{The possible schedules of the trip network of Figure~\ref{fig:example}. For each schedule $S$ and for each source node $v$, the corresponding cell shows the set of nodes $S$-reachable from $v$ (the last row shows the value of the $S$-reachability). Note that in the underling multidigraph of the trip network the number of pairs of nodes $u$ and $v$ such that $v$ is reachable from $u$ is equal to $38$.}
\label{tbl:example}
\end{table}

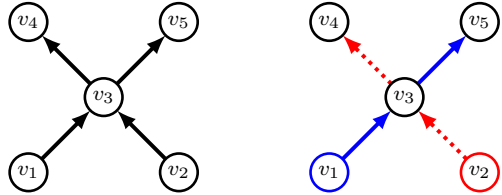
\begin{figure}[b]
\centering{\SetVertexStyle[FillColor=white]
\SetEdgeStyle[Color=black]
\begin{tikzpicture}[x=2cm,y=2cm]
  \tikzmath{\xd=1;\yd=1;\x1=0;\x2=\x1+\xd;\x3=\x1+2*\xd;\y1=0;\y2=\y1+\yd;\y3=\y1+2*\yd;}
  \Vertex[x=\x1,y=\y1,label=$v_{1}$,size=0.5]{x1}
  \Vertex[x=\x3,y=\y1,label=$v_{2}$,size=0.5]{x2}
  \Vertex[x=\x2,y=\y2,label=$v_{3}$,size=0.5]{x3}
  \Vertex[x=\x1,y=\y3,label=$v_{4}$,size=0.5]{x4}
  \Vertex[x=\x3,y=\y3,label=$v_{5}$,size=0.5]{x5}
  \Edge[Direct](x1)(x3)
  \Edge[Direct](x3)(x5)
  \Edge[Direct](x2)(x3)
  \Edge[Direct](x3)(x4)
\begin{scope}[xshift=4cm]
  \Vertex[x=\x2,y=\y2,label=$v_{3}$,size=0.5]{x3}
  \Vertex[x=\x1,y=\y3,label=$v_{4}$,size=0.5]{x4}
  \Vertex[x=\x3,y=\y3,label=$v_{5}$,size=0.5]{x5}
  \SetVertexStyle[FillColor=white,LineColor=blue]
  \Vertex[x=\x1,y=\y1,label=$v_{1}$,size=0.5]{x1}
  \SetVertexStyle[FillColor=white,LineColor=red]
  \Vertex[x=\x3,y=\y1,label=$v_{2}$,size=0.5]{x2}
  \Edge[Direct,color=blue](x1)(x3)
  \Edge[Direct,color=blue](x3)(x5)
  \Edge[Direct,color=red,style={dotted}](x2)(x3)
  \Edge[Direct,color=red,style={dotted}](x3)(x4)
\end{scope}
\end{tikzpicture}}
\caption{An example of a trip network $(D,\tnet)$, where the underlying digraph $D$ is depicted on the left (all edges have weight 1, so that $D$ is a simple digraph) and $\tnet$ (depicted in the induced multidigraph on the right) contains the trips $T_1=\langle v_1,v_3,v_5\rangle$ (blue solid trip) and $T_2=\langle v_2,v_3,v_4\rangle$ (red dotted trip), such that the maximum $\tau$-reachability obtainable through a temporalisation is higher than the maximum $S$-reachability obtainable through a schedule. Indeed, the two possible schedules both achieve a reachability equal to $12$, while a temporalisation that assign the same starting time to $T_1$ and $T_2$ achieves a reachability equal to $13$ (which is also the number of pairs of nodes $u$ and $v$ such that $v$ is reachable from $u$ in the underlying digraph).}
\label{fig:taubetterthans}
\end{figure}

\begin{fact}\label{fact:schedule2}
Let $(D,\tnet)$ be a trip network and $S$ be a schedule of $(D,\tnet)$. Let $C$ be a weighted multidigraph obtained starting from $D$ by arbitrarily modifying only the weights of the edges of $D$. The $S$-reachability of $(D,\tnet)$ is equal to the $S$-reachability of $(C,\tnet)$.
\end{fact}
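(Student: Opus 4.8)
The plan is to prove that, for a fixed schedule $S$, the set of (ordered) pairs of nodes connected in $G[D,\tnet,\tau_S]$ admits a description that is purely combinatorial -- it refers only to the trips of $\tnet$, their internal order of edges, and the order of the trips in $S$ -- and in particular does not mention the edge weights at all. Since modifying the weights of $D$ changes neither the trips (as sequences of head/tail pairs) nor the schedule $S$, such a description transfers verbatim to $(C,\tnet)$, and the two $S$-reachabilities must coincide. The first ingredient is the defining feature of a schedule: if $S=T_1,\dots,T_N$ with $N=|\tnet|$, then $\tau_S(T_{i+1})=\tau_S(T_i)+\delta(T_i)$, so the trips run back to back. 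Consequently the temporal edge induced by the $\ell$-th edge of $T_i$, whose starting time is $\tau_S(T_i)+\sum_{j<\ell}w_j$ and arrival time $\tau_S(T_i)+\sum_{j\le\ell}w_j$, has both its starting and arrival times inside the closed window $[\tau_S(T_i),\tau_S(T_{i+1})]$, and the windows for $i=1,\dots,N$ are consecutive and pairwise meet only at endpoints.

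The core step is the following characterization. Every temporal path of $G[D,\tnet,\tau_S]$ projects to a walk $e_1,\dots,e_p$ in the induced multidigraph $M$ of $(D,\tnet)$, where each $e_r$ is a specific \emph{occurrence} of an edge, sitting at a specific position $\ell_r$ inside a specific trip $T_{a_r}$; this occurrence must be tracked, since the same underlying edge may belong to several trips. I claim that such a walk lifts to a temporal path in $G[D,\tnet,\tau_S]$ if and only if, for every $r\in[p-1]$, either $a_r<a_{r+1}$, or $a_r=a_{r+1}$ and $\ell_r<\ell_{r+1}$. This is established by comparing, for each $r$, the arrival time of $e_r$ with the starting time of $e_{r+1}$, via the windows above: if $a_r<a_{r+1}$ the temporal constraint always holds, because the arrival time of $e_r$ is at most $\tau_S(T_{a_r+1})\le\tau_S(T_{a_{r+1}})$, which is at most the starting time of $e_{r+1}$; if $a_r>a_{r+1}$ it always fails, by the symmetric estimate, using the strict positivity of all travel times to get a strict inequality; and if $a_r=a_{r+1}$ the two temporal edges lie in the same trip, and the constraint reduces to $\sum_{j\le\ell_r}w_j\le\sum_{j<\ell_{r+1}}w_j$, which, since all $w_j>0$, is equivalent to $\ell_r<\ell_{r+1}$. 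In every case the truth value is independent of the particular weights.

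It then remains only to observe that the induced multidigraphs of $(D,\tnet)$ and of $(C,\tnet)$ have identical combinatorial structure: the same node set, and the same trips, each being the same sequence of head/tail pairs, since $C$ is obtained from $D$ solely by altering edge weights; and the schedule $S$ picks out the same order of trips in both settings (even though the induced temporalisations $\tau_S$ differ, as the durations $\delta(T_i)$ change). Applying the characterization of the previous paragraph to $(D,\tnet)$ and to $(C,\tnet)$ therefore produces exactly the same collection of pairs $(u,v)$ for which $v$ is $S$-reachable from $u$, whence the $S$-reachability of $(D,\tnet)$ equals that of $(C,\tnet)$.

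The main obstacle is not conceptual depth but care in the case analysis of the core step: one must handle the boundary subcases cleanly (when $e_{r+1}$ is the first edge of its trip, so the preceding partial sum is empty, or $e_r$ is the last edge of its trip), and must use the strict positivity of travel times at precisely the right places to separate the ``always holds'' regime from the ``always fails'' regime. The only other point requiring attention is to phrase everything consistently in terms of edge occurrences in $M$ rather than edges of $D$, so that trips sharing an underlying edge are treated correctly.
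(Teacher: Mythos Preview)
Your proposal is correct and follows essentially the same approach as the paper: both arguments rest on the observation that under a schedule the trips run consecutively, so a temporal path in $G[D,\tnet,\tau_S]$ is determined purely by the order of the trips in $S$ and the positions of the edges within each trip, not by the actual weights. The paper's proof is a one-sentence sketch of this idea (``edges of different trips cannot be interleaved inside a temporal path obtained through a schedule''), whereas you spell out the full case analysis and make the weight-independence of the characterization explicit.
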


\begin{proof}
The fact simply follows from the fact that a temporal path in $G[D,\tnet,\tau_S]$ is also a temporal path in $G[C,\tnet,\tau_S]$, since edges of different trips cannot be interleaved inside a temporal path obtained through a schedule, where all edges of a trip $T$ are assigned smaller appearance times than all edges of the trips scheduled after $T$.\qed
\end{proof}

For the sake of simplicity and without loss of generality, in the following we will present our results by referring to trip networks in which the weight of all edges are equal to $1$: indeed, as a consequence of Fact~\ref{fact:schedule2}, \textit{all} our results will apply to general trip networks as well (either because they are hardness results or because the lower bounds on the temporal reachability are obtained by referring to schedules). Under this assumption, a trip $T_i=(u_0,u_1,1),\ldots,(u_{k-1},u_k,1)$ will also be indicated as $T_i=\langle u_0,\ldots,u_k\rangle$. Note, however, that, in general, the maximum $\tau$-reachability obtainable through a temporalisation can be higher than the maximum $S$-reachability obtainable through a schedule (see, for example, Figure~\ref{fig:taubetterthans}), and that the presence of weights can, in general, increase the maximum $\tau$-reachability of a trip network (see, for example, Figure~\ref{fig:weightbetterthannoweight}).

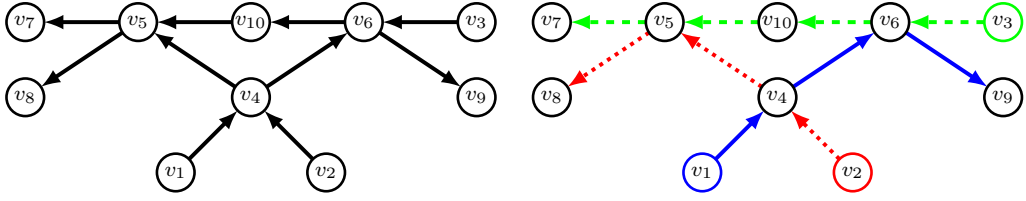
\begin{figure}[t]
\centerfloat
\SetVertexStyle[FillColor=white]
\SetEdgeStyle[Color=black]
\begin{tikzpicture}[x=2cm,y=2cm]
  \tikzmath{\xd=1;\yd=1;\x1=0;\x2=\x1+2*\xd;\x3=\x1+\xd;\x6=\x1-2*\xd;\x4=\x1-0.5*\xd;\x5=\x2+0.5*\xd;\x8=\x2+2*\xd;\y1=0;\y3=\y1+\yd;\y4=\y1+2*\yd;}
  \Vertex[x=\x1,y=\y1,label=$v_{1}$,size=0.5]{x1}
  \Vertex[x=\x2,y=\y1,label=$v_{2}$,size=0.5]{x2}
  \Vertex[x=\x3,y=\y3,label=$v_{4}$,size=0.5]{x3}
  \Vertex[x=\x4,y=\y4,label=$v_{5}$,size=0.5]{x4}
  \Vertex[x=\x3,y=\y4,label=$v_{10}$,size=0.5]{x10}
  \Vertex[x=\x5,y=\y4,label=$v_{6}$,size=0.5]{x5}
  \Vertex[x=\x6,y=\y4,label=$v_{7}$,size=0.5]{x6}
  \Vertex[x=\x6,y=\y3,label=$v_{8}$,size=0.5]{x7}
  \Vertex[x=\x8,y=\y4,label=$v_{3}$,size=0.5]{x8}
  \Vertex[x=\x8,y=\y3,label=$v_{9}$,size=0.5]{x9}
  \Edge[Direct](x1)(x3)
  \Edge[Direct](x3)(x5)
  \Edge[Direct](x2)(x3)
  \Edge[Direct](x3)(x4)
  \Edge[Direct](x5)(x10)
  \Edge[Direct](x10)(x4)
  \Edge[Direct](x4)(x6)
  \Edge[Direct](x4)(x7)
  \Edge[Direct](x8)(x5)
  \Edge[Direct](x5)(x9)
\begin{scope}[xshift=7cm]
  \Vertex[x=\x3,y=\y3,label=$v_{4}$,size=0.5]{x3}
  \Vertex[x=\x4,y=\y4,label=$v_{5}$,size=0.5]{x4}
  \Vertex[x=\x3,y=\y4,label=$v_{10}$,size=0.5]{x10}
  \Vertex[x=\x5,y=\y4,label=$v_{6}$,size=0.5]{x5}
  \Vertex[x=\x6,y=\y4,label=$v_{7}$,size=0.5]{x6}
  \Vertex[x=\x6,y=\y3,label=$v_{8}$,size=0.5]{x7}
  \Vertex[x=\x8,y=\y3,label=$v_{9}$,size=0.5]{x9}
  \SetVertexStyle[FillColor=white,LineColor=blue]
  \Vertex[x=\x1,y=\y1,label=$v_{1}$,size=0.5]{x1}
  \SetVertexStyle[FillColor=white,LineColor=red]
  \Vertex[x=\x2,y=\y1,label=$v_{2}$,size=0.5]{x2}
  \SetVertexStyle[FillColor=white,LineColor=green]
  \Vertex[x=\x8,y=\y4,label=$v_{3}$,size=0.5]{x8}
  \Edge[Direct,color=blue](x1)(x3)
  \Edge[Direct,color=blue](x3)(x5)
  \Edge[Direct,color=red,style={dotted}](x2)(x3)
  \Edge[Direct,color=red,style={dotted}](x3)(x4)
  \Edge[Direct,color=green,style={dashed}](x5)(x10)
  \Edge[Direct,color=green,style={dashed}](x10)(x4)
  \Edge[Direct,color=green,style={dashed}](x4)(x6)
  \Edge[Direct,color=red,style={dotted}](x4)(x7)
  \Edge[Direct,color=green,style={dashed}](x8)(x5)
  \Edge[Direct,color=blue](x5)(x9)
\end{scope}
\end{tikzpicture}
\caption{An example of a trip network $(D,\tnet)$, where the underlying digraph $D$ is depicted on the left and $\tnet$ contains the three trips (depicted on the right) $T_1$ (blue solid trip), $T_2$ (green dashed trip), and $T_3$ (red dotted trip), such that the presence of weights can increase the maximum $\tau$-reachability obtainable through a temporalisation. Indeed, if all weights are equal to $1$, no temporalisation $\tau$ can make the four nodes $v_{7}$, $v_{8}$, $v_{9}$, and $v_{10}$ all $\tau$-reachable from the three nodes $v_{1}$, $v_{2}$, and $v_{3}$: hence, for any temporalisation $\tau$, the $\tau$-reachability is less than the number $R$ of pairs of nodes $u$ and $v$ such that $v$ is reachable from $u$ in $D$. On the contrary, if the edge from $v_{4}$ to $v_{5}$ has weight $3$, then there exists a temporalisation whose reachability is equal to $R$ (such a temporalisation assigns $1$ to the trips $T_{1}$ and $T_{2}$, and $2$ to $T_{3}$).}
\label{fig:weightbetterthannoweight}
\end{figure}

\section{The maximum reachability trip temporalisation problem}
\label{sec:mrtt}

We first consider the following one-to-one version of the \mrtt{} problem, called \textsc{One-To-One Reachability Trip Temporalisation} (in short, \otomrtt): given a trip network $(D,\tnet)$ and two nodes $s$ and $t$, is $(D,\tnet)$ $(s,t)$-temporalisable? Quite surprisingly, even this restricted version of the \mrtt{} problem seems to be difficult to solve in polynomial time.

\begin{theorem}\label{thm:otomrtthard}
The \otomrtt{} problem is \nptime-complete.
\end{theorem}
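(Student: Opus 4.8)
The plan is to prove \nptime-completeness in the usual two halves.

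\emph{Membership in \nptime.} First I would observe that along any temporal path the starting times strictly increase (travel times are positive, so $\mathrm{start}(e_{i+1})\ge\mathrm{start}(e_i)+\lambda_i>\mathrm{start}(e_i)$); hence no temporal edge of $G[D,\tnet,\tau]$ is used twice and a temporal $s$--$t$ path has length at most $m:=\sum_{T\in\tnet}|T|$. Such a path therefore admits a polynomial‑size description as a sequence of at most $m$ pairs \emph{(trip, position inside the trip)}. Given a candidate sequence one checks in polynomial time that the induced head/tail incidences are consistent and that it runs from $s$ to $t$, and that a realising temporalisation exists: the latter is exactly the feasibility of a system of difference constraints $\tau(T')-\tau(T)\ge d$ (one per consecutive pair of the sequence, with the integer $d$ computed from the two positions and the edge weights), which is tested in polynomial time by looking for a negative cycle in the associated constraint graph. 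A feasible such sequence, together with an integral solution $\tau$ (which exists with polynomial bit‑size whenever the system is feasible), is a polynomial certificate.

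\emph{\nptime-hardness.} I would reduce from \tsat. Given a 3‑CNF formula $\phi$ on variables $x_1,\dots,x_n$ and clauses $C_1,\dots,C_m$, I would build a trip network $(D,\tnet)$ with unit edge weights (which suffices by Fact~\ref{fact:schedule2}) and nodes $s,t$ so that $(D,\tnet)$ is $(s,t)$-temporalisable iff $\phi$ is satisfiable, using two kinds of gadgets. \textbf{Variable gadgets:} for each $x_i$ a pair of trips $P_i,N_i$ sharing a small common piece (a shared‑edge gadget in the spirit of the one underlying Figure~\ref{fig:example}) designed so that what a temporalisation really fixes about $P_i,N_i$ is the relative order of $\tau(P_i)$ and $\tau(N_i)$; we read $\tau(P_i)\le\tau(N_i)$ as ``$x_i=\true$'' and $\tau(N_i)<\tau(P_i)$ as ``$x_i=\false$''. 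Since one temporalisation fixes all these orders at once, it encodes a single consistent truth assignment. \textbf{Clause gadgets:} I would chain gadgets $\Gamma_1,\dots,\Gamma_m$ between $s$ and $t$ so that every temporal $s$--$t$ path must traverse them in this order, and so that inside $\Gamma_j$ the path is forced to route through exactly one of three parallel \emph{literal sub‑gadgets}, one per literal of $C_j$; the sub‑gadget of a positive literal $x_i$ is built from edges of $P_i,N_i$ so that a temporal path can cross it precisely when $\tau(P_i)\le\tau(N_i)$ (i.e.\ when $x_i$ is true), and symmetrically for $\neg x_i$. The three parallel options realise the disjunction of a clause, and since all sub‑gadgets touching $x_i$ reuse the same $P_i,N_i$, the truth values are consistent across clauses. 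Thus a temporal $s$--$t$ path exists iff there is an ordering of the starting times — equivalently a truth assignment — satisfying all clauses; correctness is checked in both directions, turning a satisfying assignment into an explicit temporalisation and exhibiting the path, and conversely reading off from a temporal $s$--$t$ path the literal it selects in each clause together with the orderings these selections force.

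\emph{The main obstacle.} The delicate point is to \emph{decouple the truth constraints from the scheduling constraints}: a literal sub‑gadget is made of genuine trips with fixed internal offsets, so it a priori constrains not only the \emph{relative} order of $\tau(P_i)$ and $\tau(N_i)$ but also ties their \emph{absolute} values to the instant at which the $s$--$t$ path reaches $\Gamma_j$; as a variable may occur in several clauses, these absolute constraints could be mutually inconsistent even when the relative (truth) constraints are satisfiable. The reduction must be arranged so this never happens — e.g.\ by making $P_i,N_i$ long and routing the $s$--$t$ path so that it enters them at a controlled offset, and by padding the spine $s\to\Gamma_1\to\dots\to\Gamma_m\to t$ so that, once all relative orders are set according to a satisfying assignment, the chain of difference constraints along the selected path has enough slack. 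Carrying out this slack bookkeeping, and ruling out spurious $s$--$t$ paths that bypass the intended gadgets, is where the real work lies.
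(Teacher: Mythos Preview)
Your \nptime-membership argument is fine, and in fact cleaner than the paper's (which simply guesses a temporalisation $\tau$ and checks $t\in\reach{G[D,\tnet,\tau]}{s}$; your difference-constraint formulation handles the real-valued case without further comment).

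For hardness, however, there is a genuine gap --- and you have put your finger on it yourself. If the two variable trips $P_i,N_i$ are reused across \emph{all} occurrences of $x_i$, then each literal sub-gadget for $x_i$ forces the $s$--$t$ path to enter $P_i$ (or $N_i$) at a specific internal offset, and that offset must match the absolute time at which the path arrives at the corresponding clause. Since the internal timing of a trip is rigid once $\tau$ fixes its starting time, the several occurrences of $x_i$ impose several equalities of the form $\tau(P_i)=\text{(arrival time at }\Gamma_j)-\text{offset}_j$, and there is no reason these should be simultaneously satisfiable even when the relative order $\tau(P_i)\le\tau(N_i)$ is the same everywhere. ``Making $P_i,N_i$ long'' and ``padding the spine'' does not obviously help: longer trips give you more entry points, but each entry point still pins $\tau(P_i)$ to a single value. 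You would need something stronger --- e.g.\ a separate copy of the variable trip per occurrence together with an extra gadget forcing all copies to have the same relative order --- and that is essentially a new construction, not a bookkeeping detail.

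The paper avoids this obstacle by a quite different design. The path from $s$ to $t$ first runs through a chain of variable gadgets, each offering two parallel branches (the ``true'' side and the ``false'' side), and then through a chain of clause gadgets. Almost every edge is a one-edge trip, hence freely schedulable. The only multi-edge trips are $3m$ trips of length three, one per literal occurrence: the trip for the $h$-th literal of $c_j$ is $\langle w_j,l_j^h,y,y'\rangle$, where $(y,y')$ is an edge on the \emph{wrong} branch of the corresponding variable gadget (the branch the path should \emph{not} take if that literal is true). The constraint is then purely structural: if the $s$--$t$ path uses $(w_j,l_j^h)$, the edge $(y,y')$ of the same trip gets a later timestamp, so the path cannot have traversed $(y,y')$ earlier; hence it must have taken the other branch of that variable gadget. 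Because every other edge is its own trip, one can schedule along the chosen path trip by trip with no absolute-time conflict. This is the trick your sketch is missing: rather than encoding $x_i$ by the order of two dedicated trips, the paper encodes it by which branch the path takes, and uses one short trip per literal occurrence to rule out the inconsistent branch.
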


\begin{proof}
We reduce in polynomial time \tsat{} to \otomrtt. Let us consider a \tsat{} formula $\Phi$, with $n$ variables $x_1, \dots, x_n$ and $m$ clauses $c_1, \dots, c_m$. We first define the directed graph $D=(V,E)$ as the union of the following gadgets.

\medskip
\noindent\textbf{Intermediate and final nodes}. $V$ contains two nodes $v_{n+1}$ and $w_{m+1}$.

\medskip
\noindent\textbf{Variable gadgets} (see Figure~\ref{fig:theorem1}(a)). For each variable $x_i$ with $i\in[n]$, let $p_i$ be the number of clauses that contain the literal $x_i$, and $n_i$ the number of clauses that contain the literal $\neg x_i$ (without loss of generality, we may assume that both $p_i$ and $n_i$ are positive numbers). Then, $V$ contains the following $p_i+n_i+1$ nodes: $v_i,f_i^{1},\ldots,f_i^{p_i},t_i^{1},\ldots,t_i^{n_i}$. Moreover, $E$ contains the following $p_i+n_i+2$ directed edges: $(v_i,f_i^{1})$, $(f_i^{h},f_i^{h+1})$ for $h\in[p_i-1]$, $(f_i^{p_i},v_{i+1})$, $(v_i,t_i^{1})$, $(t_i^{h},t_i^{h+1})$ for $h\in[n_i-1]$, and $(t_i^{n_i},v_{i+1})$.

\medskip
\noindent\textbf{Clause gadgets} (see Figure~\ref{fig:theorem1}(b)). For each clause $c_j$ with $j\in[m]$, $V$ contains the following four nodes: $w_j,l_j^{1},l_j^{2},l_j^{3}$. Moreover, $E$ contains the following six edges: $(w_j,l_j^{h})$ and $(l_j^{h},w_{j+1})$, for $h\in[3]$.

\medskip
\noindent\textbf{Variable-clause edge}. $E$ contains the edge $(v_{n+1},w_1)$.

\medskip
\noindent\textbf{Clause-variable edges} (see Figure~\ref{fig:theorem1}(c)). For each clause $c_j$ with $j\in[m]$, for each variable $x_i$ with $i\in[n]$, for $h\in[3]$, and for $k\in[n_i]$, $E$ contains the edge $(l_j^{h},t_i^{k})$ if the $h$-th literal of $c_j$ is $\neg x_i$ and $c_j$ is the $k$-th clause in which the literal $\neg x_i$ occurs. Analogously, for each clause $c_j$ with $j\in[m]$, for each variable $x_i$ with $i\in[n]$, for $h\in[3]$, and for $k\in[p_i]$, $E$ contains the edge $(l_j^{h},f_i^{k})$ if the $h$-th literal of $c_j$ is $x_i$ and $c_j$ is the $k$-th clause in which the literal $x_i$ occurs.

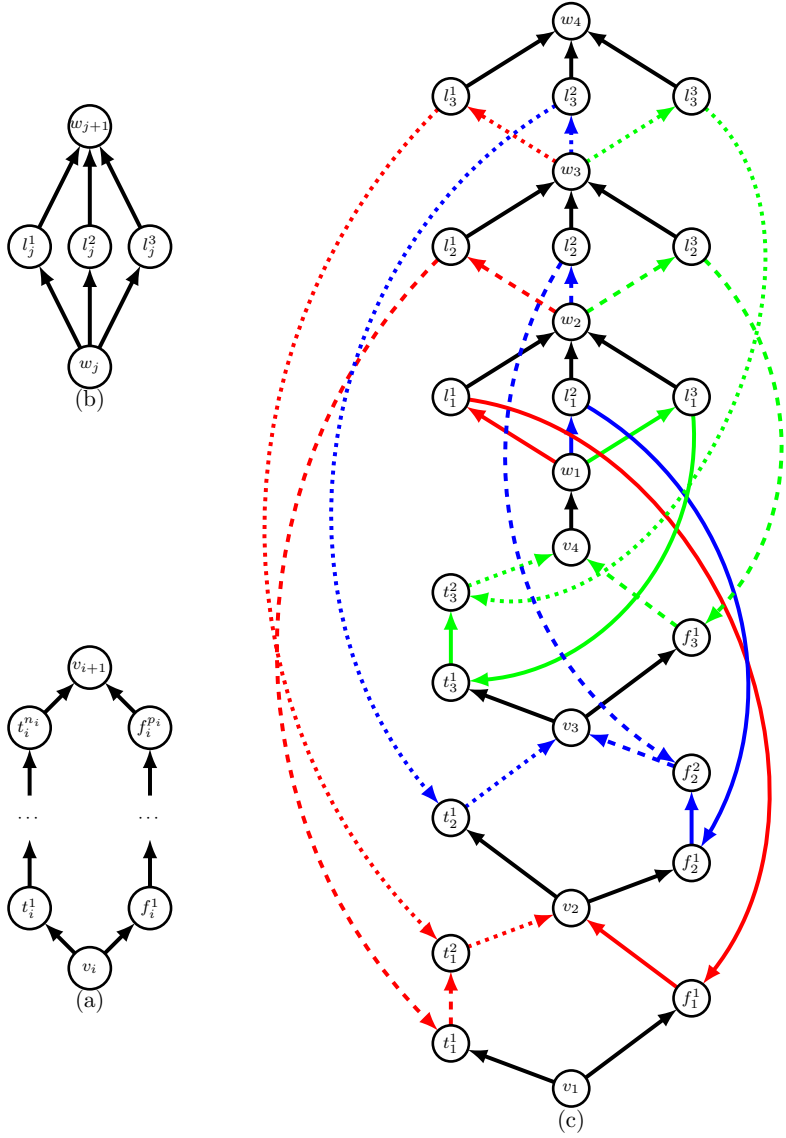
\begin{figure}[th]
    \centering{\SetVertexStyle[FillColor=white]
\SetEdgeStyle[Color=black]
\begin{tikzpicture}[scale=0.8, every node/.style={scale=0.8}]
\node at (0,1.45) {(a)};
\node at (0,11.45) {(b)};
\node at (8,-0.55) {(c)};
\begin{scope}[yshift=2cm]
  \Vertex[y=0,x=0,label=$v_{i}$,size=0.7]{vi}
  \Vertex[y=1,x=1,label=$f_i^1$,size=0.7]{fif}
  \Vertex[y=2.5,x=1,label=$\cdots$,style={color=white},size=0.7]{fim}
  \Vertex[y=4,x=1,label=$f_i^{p_i}$,size=0.7]{fil}
  \Vertex[y=1,x=-1,label=$t_i^1$,size=0.7]{tif}
  \Vertex[y=2.5,x=-1,label=$\cdots$,style={color=white},size=0.7]{tim}
  \Vertex[y=4,x=-1,label=$t_i^{n_i}$,size=0.7]{til}
  \Vertex[y=5,x=0,label=$v_{i+1}$,size=0.7]{vip1}
  \Edge[Direct](vi)(tif)
  \Edge[Direct](tif)(tim)
  \Edge[Direct](tim)(til)
  \Edge[Direct](vi)(fif)
  \Edge[Direct](fif)(fim)
  \Edge[Direct](fim)(fil)
  \Edge[Direct](til)(vip1)
  \Edge[Direct](fil)(vip1)
\end{scope}
\begin{scope}[yshift=12cm]
  \Vertex[y=0,x=0,label=$w_{j}$,size=0.7]{wj}
  \Vertex[y=2,x=1,label=$l_j^3$,size=0.7]{lj3}
  \Vertex[y=2,x=0,label=$l_j^2$,size=0.7]{lj2}
  \Vertex[y=2,x=-1,label=$l_j^1$,size=0.7]{lj1}
  \Vertex[y=4,x=0,label=$w_{j+1}$,size=0.7]{wjp1}
  \Edge[Direct](wj)(lj1)
  \Edge[Direct](wj)(lj2)
  \Edge[Direct](wj)(lj3)
  \Edge[Direct](lj1)(wjp1)
  \Edge[Direct](lj2)(wjp1)
  \Edge[Direct](lj3)(wjp1)
\end{scope}
\begin{scope}[xshift=8cm]
  \Vertex[y=0,x=0,label=$v_{1}$]{v1}
  \Vertex[y=1.5,x=2,label=$f_1^1$]{f11}
  \Vertex[y=0.75,x=-2,label=$t_1^1$]{t11}
  \Vertex[y=2.25,x=-2,label=$t_1^2$]{t12}
  \Vertex[y=3,x=0,label=$v_2$]{v2}
  \Vertex[y=3.75,x=2,label=$f_2^1$]{f21}
  \Vertex[y=5.25,x=2,label=$f_2^2$]{f22}
  \Vertex[y=4.5,x=-2,label=$t_2^1$]{t21}
  \Vertex[y=6,x=0,label=$v_3$]{v3}
  \Vertex[y=7.5,x=2,label=$f_3^1$]{f31}
  \Vertex[y=6.75,x=-2,label=$t_3^1$]{t31}
  \Vertex[y=8.25,x=-2,label=$t_3^2$]{t32}
  \Vertex[y=9,x=0,label=$v_4$]{v4}
  \Vertex[y=10.25,x=0,label=$w_1$]{w1}
  \Vertex[y=11.5,x=2,label=$l_1^3$]{l13}
  \Vertex[y=11.5,x=0,label=$l_1^2$]{l12}
  \Vertex[y=11.5,x=-2,label=$l_1^1$]{l11}
  \Vertex[y=12.75,x=0,label=$w_2$]{w2}
  \Vertex[y=14,x=2,label=$l_2^3$]{l23}
  \Vertex[y=14,x=0,label=$l_2^2$]{l22}
  \Vertex[y=14,x=-2,label=$l_2^1$]{l21}
  \Vertex[y=15.25,x=0,label=$w_3$]{w3}
  \Vertex[y=16.5,x=2,label=$l_3^3$]{l33}
  \Vertex[y=16.5,x=0,label=$l_3^2$]{l32}
  \Vertex[y=16.5,x=-2,label=$l_3^1$]{l31}
  \Vertex[y=17.75,x=0,label=$w_4$]{w4}
  \Edge[Direct](v1)(f11)
  \Edge[Direct](v1)(t11)
  \Edge[Direct,color=red,style={dashed}](t11)(t12)
  \Edge[Direct,color=red](f11)(v2)
  \Edge[Direct,color=red,style={dotted}](t12)(v2)
  \Edge[Direct](v2)(f21)
  \Edge[Direct](v2)(t21)
  \Edge[Direct,color=blue](f21)(f22)
  \Edge[Direct,color=blue,style={dashed}](f22)(v3)
  \Edge[Direct,color=blue,style={dotted}](t21)(v3)
  \Edge[Direct](v3)(f31)
  \Edge[Direct](v3)(t31)
  \Edge[Direct,color=green](t31)(t32)
  \Edge[Direct,color=green,style={dashed}](f31)(v4)
  \Edge[Direct,color=green,style={dotted}](t32)(v4)
  \Edge[Direct](v4)(w1)
  \Edge[Direct,color=red](w1)(l11)
  \Edge[Direct,color=blue](w1)(l12)
  \Edge[Direct,color=green](w1)(l13)
  \Edge[Direct](l11)(w2)
  \Edge[Direct](l12)(w2)
  \Edge[Direct](l13)(w2)
  \Edge[Direct,color=red,style={dashed}](w2)(l21)
  \Edge[Direct,color=blue,style={dashed}](w2)(l22)
  \Edge[Direct,color=green,style={dashed}](w2)(l23)
  \Edge[Direct](l21)(w3)
  \Edge[Direct](l22)(w3)
  \Edge[Direct](l23)(w3)
  \Edge[Direct,color=red,style={dotted}](w3)(l31)
  \Edge[Direct,color=blue,style={dotted}](w3)(l32)
  \Edge[Direct,color=green,style={dotted}](w3)(l33)
  \Edge[Direct](l31)(w4)
  \Edge[Direct](l32)(w4)
  \Edge[Direct](l33)(w4)
  \Edge[Direct,bend=60,color=red](l11)(f11)
  \Edge[Direct,bend=45,color=blue](l12)(f21)
  \Edge[Direct,bend=45,color=green](l13)(t31)
  \Edge[Direct,bend=-45,color=red,style={dashed}](l21)(t11)
  \Edge[Direct,bend=-45,color=blue,style={dashed}](l22)(f22)
  \Edge[Direct,bend=45,color=green,style={dashed}](l23)(f31)
  \Edge[Direct,bend=-45,color=red,style={dotted}](l31)(t12)
  \Edge[Direct,bend=-50,color=blue,style={dotted}](l32)(t21)
  \Edge[Direct,bend=75,color=green,style={dotted}](l33)(t32)
\end{scope}
\end{tikzpicture}}
    \caption{The reduction from \tsat{} to \otomrtt{}. The variable gadget (a) corresponding to the variable $x_i$ ($p_i$ is the number of clauses that contains the literal $x_i$, while $n_i$ is the number of clauses that contains the literal $\neg x_i$), the clause gadget (b) corresponding to the clause $c_j$, and the trip network (c) corresponding to the \tsat{} formula $(x_1\vee x_2\vee\neg x_3)\wedge(\neg x_1\vee x_2\vee x_3)\wedge(\neg x_1\vee\neg x_2\vee\neg x_3)$ (colors correspond to variables and line styles correspond to clauses).}
    \vspace{-5mm}
    \label{fig:theorem1}
\end{figure}

\medskip
We now define the trip collection \tnet{} on $D$. For each clause $c_j$ with $j\in[m]$ and for $h\in[3]$, \tnet{} contains the trip $\langle w_j,l_j^{h},t_i^{k},o_i^{k}\rangle$, if $(l_j^{h},t_i^{k})\in E$ and $o_i^{k}$ is defined as the unique out-neighbour of $t_i^{k}$ (that is, $o_i^k=t_i^{k+1}$ if $k<n_i$, and $o_i^k=v_{i+1}$ if $k=n_i$), and the trip $\langle w_j,l_j^{h},f_i^{k},o_i^{k}\rangle$, if $(l_j^{h},f_i^{k})\in E$ and $o_i^{k}$ is defined as the unique out-neighbour of $f_i^{k}$ (that is, $o_i^k=f_i^{k+1}$ if $k<p_i$, and $o_i^k=v_{i+1}$ if $k=p_i$). Each of the other $2n+3m+1$ edges, that are not yet included in a trip, forms a one-edge trip. Figure~\ref{fig:theorem1}(c) shows an example of the reduction in the case of the Boolean formula $(x_1\vee x_2\vee\neg x_3)\wedge(\neg x_1\vee x_2\vee x_3)\wedge(\neg x_1\vee\neg x_2\vee\neg x_3)$.

Let $\tau$ be a temporalisation of the trip network $(D,\tnet)$ and let $G=G[D,\tnet{},\tau]$ be the temporal graph induced by $\tau$. Note that, each edge in $D$ belongs to exactly one trip in $\tnet$, which means that, for each edge $e \in E$, there is exactly one temporal edge in $G$ with the same head and tail of $e$. Note also that, due to the topology of $D$, if $w_{m+1}\in\reach{G}{v_1}$, then the first part of the temporal path $P$ from $v_1$ to $w_{m+1}$ consists in moving from $v_1$ to $v_{n+1}$ by passing, for each $i\in[n]$, through the node $v_i$ and either through the nodes $t_i^1,\ldots,t_i^{n_i}$ or through the nodes $f_i^1,\ldots,f_i^{p_i}$. The second part of the temporal path $P$ consists in moving from $v_{n+1}$ to $w_{1}$ and, then, from $w_{1}$ to $w_{m+1}$ by passing, for each $j\in[m]$, through the node $w_j$ and exactly one $l_j$-node. Indeed, we can assume that $P$ does not go back from a $l_j$-node to a variable node to which it is connected, since otherwise $P$ should have to pass again through the edge $(v_{n+1},w_1)$, contradicting the fact that, as observed above, there is only one temporal edge in $G$ corresponding to this edge. Moreover, if $P$ uses a temporal edge with tail $w_j$ and head $l_j^{h}$, for some $h\in[3]$, and if $(l_j^{h},t_i^{k})\in E$ (respectively, $(l_j^{h},f_i^{k})\in E$), for some $i\in[n]$ and $k\in[n_i]$ (respectively, $k\in[p_i]$), then $P$ must have passed, in its first part, through the $f_i$-nodes (respectively, $t_i$-nodes) corresponding to the variable $x_i$. Otherwise, as $P$ is a temporal path, the edge outgoing $t_i^k$ (respectively, $f_i^k$) with head $o_i^k$ would have an appearance time smaller than that of $(w_j,l_j^h)$, contradicting the fact that $\tau$ is a temporalisation of the trip $\langle w_j,l_j^{h},t_i^{k},o_i^{k}\rangle$ (respectively, $\langle w_j,l_j^{h},f_i^{k},o_i^{k}\rangle$).

Let us now prove that $(D,\tnet{})$ is $(v_1,w_{m+1})$-temporalisable if and only if there exists an assignment $\alpha$ to the variables that satisfies the Boolean formula $\Phi$. Let us first suppose that $(D,\tnet{})$ is $(v_1,w_{m+1})$-temporalisable, that is, there exists a temporalisation $\tau$ of $(D,\tnet{})$ such that $w_{m+1}\in\reach{G}{v_1}$, where $G=G[D,\tnet{},\tau]$ is the temporal graph induced by $\tau$. Let $P$ be a temporal path from $v_1$ to $w_{m+1}$ in $G$. For each variable $x_i$ with $i\in[n]$, we set $\alpha(x_i)=\true$ if and only if $P$ passes through the $t_i$-nodes corresponding to $x_i$. We now prove that any clause $c_j$, with $j\in[m]$, is satisfied by $\alpha$. Let $l_j^{h}$, for some $h\in[3]$, be the node which $P$ goes to, when moving from $w_j$. If $(l_j^{h},f_i^{k})\in E$ (respectively, $(l_j^{h},t_i^{k})\in E$), for some $i\in[n]$ and $k\in[p_i]$ (respectively, $k\in[n_i]$), then the $h$-th literal of $c_j$ is $x_i$ (respectively, $\neg x_i$), and, because of the previous observations, $P$ passes through the $t_i$-nodes (respectively, $f_i$-nodes) corresponding to $x_i$: this implies that $\alpha(x_i)=\true$ (respectively, $\alpha(x_i)=\false$) and, hence, that the clause $c_j$ is satisfied.

Let us now suppose that there exists an assignment $\alpha$ to the variables that satisfies the formula $\Phi$, and let us consider the following walk $P$ in $D$, which starts from $v_{1}$ and arrives in $w_{m+1}$. The first part of $P$ arrives at $v_{n+1}$ and consists in moving from $v_i$ to $v_{i+1}$, for $i\in[n]$, by passing through the $t_i$-nodes (respectively, $f_i$-nodes) corresponding to $x_i$, if $\alpha(x_i)=\true$ (respectively, $\alpha(x_i)=\false$). We know that, for each clause $c_j$ with $j\in[m]$, at least one literal of $c_j$ is satisfied: suppose that the first such literal is the $h_j$-th one, for some $h_j\in[3]$. Note that the choice of $h_j$ implies that the first part of $P$ does not use any edge of the trip $T$ containing the edge $(c_j,l_j^{h_j})$: indeed, if the corresponding literal is $x_i$ (respectively, $\neg x_i$), $T$ goes through the $f_i$-nodes (respectively, $t_i$-nodes), while $P$ goes through the $t_i$-nodes (respectively, $f_i$-nodes) as $\alpha(x_i)=\true$ (respectively, $\alpha(x_i)=\false$). The second part of $P$ starts from $v_{n+1}$, moves to $w_1$, arrives at $w_{m+1}$, and consists in moving from $w_j$ to $w_{j+1}$, for $j\in[m]$, by passing through the node $l_j^{h_j}$. Because of the definition of \tnet{} and the choice of $h_1,\ldots,h_m$, each edge of $P$ belongs to a different trip in \tnet. We can then consider a schedule $S$ of $(D,\tnet{})$ in which the trips corresponding to the edges in $P$ are scheduled in the same order as they appear in $P$ itself. The walk $P$ thus induces a temporal path in $G[D,\tnet{},\tau_S]$ from $v_{1}$ to $w_{m+1}$. 
Thus, $(D,\tnet{})$ is $(s,t)$-temporalisable and the theorem has been proved.\qed
\end{proof}

Note that, as a consequence of its proof, the above theorem holds even with the following restrictions: the in-degree and the  out-degree of the nodes in $D$ are bounded by 3, and $\tnet$ contains only \textit{simple} trips (that is, trips that do not pass through the same node more than once), which are pairwise edge-disjoint. 

We are now ready to prove the inapproximability result of the \mrtt\ problem. Note that, given an instance $(D,\tnet)$ with $n$ nodes, any solution of the problem has a value greater than or equal to $n$ (since we have assumed that any node is temporally reachable from itself). Since the number of pairs of nodes is $n^2$, the \mrtt\ problem is trivially $n$-approximable. The following theorem states that this approximation ratio cannot be improved by a polynomial factor, unless $\ptime=\nptime$.

\begin{theorem}\label{thm:mrtthard}
Unless $\mathrm{P}=\mathrm{NP}$, the \mrtt{} problem is not $r(\cdot)$-approximable, for any $\epsilon\in(0,1)$ and for any non-decreasing function $r$ in $O(n^{1-\epsilon})$, where $n$ is the number of nodes.
\end{theorem}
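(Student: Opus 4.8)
The strategy is a standard gap-amplification (padding) argument on top of the \nptime-completeness of \otomrtt{} established in Theorem~\ref{thm:otomrtthard}. Fix $\epsilon\in(0,1)$ and a non-decreasing $r\in O(n^{1-\epsilon})$, and pick an integer constant $c$ with $\frac{c-1}{c}>1-\epsilon$ (equivalently $c>1/\epsilon$; note $c\geq 2$). Given an \otomrtt{} instance $((D,\tnet),s,t)$ with $\nu:=|V(D)|\geq 2$ without loss of generality, I would build the \mrtt{} instance $(D',\tnet')$ by keeping $D$ and $\tnet$, adding $N:=\nu^{c}$ fresh \emph{entry} nodes $y_1,\dots,y_N$ with one-edge trips $\langle y_j,s\rangle$, and adding $N$ fresh \emph{exit} nodes $z_1,\dots,z_N$ with one-edge trips $\langle t,z_i\rangle$, where the $y_j$ have no incoming edge and the $z_i$ have no outgoing edge. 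This is polynomial-time computable, and $(D',\tnet')$ has $n=\nu+2N=\Theta(\nu^{c})$ nodes, so $\nu^{c-1}=\Theta(n^{(c-1)/c})$.

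\textbf{Completeness.} If $(D,\tnet)$ is $(s,t)$-temporalisable via $\tau$, I would extend $\tau$ to $\tau'$ by scheduling all trips $\langle y_j,s\rangle$ at one common, sufficiently small (negative) time and all trips $\langle t,z_i\rangle$ at one common, sufficiently large time. Then in $G[D',\tnet',\tau']$ every $y_j$ reaches $s$, and from $t$ one reaches every $z_i$; since $t$ is $\tau$-reachable from $s$, every $y_j$ reaches every $z_i$. Counting only the $N^{2}$ pairs $(y_j,z_i)$ already gives $\mathrm{OPT}\geq N^{2}$.

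\textbf{Soundness.} If $(D,\tnet)$ is not $(s,t)$-temporalisable, then because the added nodes are pendant, no temporalisation of $(D',\tnet')$ lets $s$ reach $t$, and any temporal path from some $y_j$ to some $z_i$ would have to contain a temporal subpath from $s$ to $t$ — impossible. Hence, for \emph{every} $\tau'$: each $y_j$ reaches at most $1+\nu$ nodes, each $z_i$ reaches only itself, and each node of $V(D)$ reaches at most $\nu+N$ nodes; summing and using $\nu\geq 2$ and $c\geq 2$ gives $\tau'$-reachability $\leq 4N\nu$, so $\mathrm{OPT}\leq 4N\nu$ in this case.

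\textbf{Conclusion and main difficulty.} The optima in the two cases thus differ by a factor at least $N^{2}/(4N\nu)=\nu^{c-1}/4=\Theta(n^{(c-1)/c})$, whereas $r(n)=O(n^{1-\epsilon})=o(n^{(c-1)/c})$ by the choice of $c$, so $r(n)<N^{2}/(4N\nu)$ for all $n$ past some threshold $n_0$. Running an alleged polynomial-time $r(\cdot)$-approximation algorithm on $(D',\tnet')$ and comparing the returned value with $4N\nu$ would therefore decide \otomrtt{} in polynomial time on all instances with $n\geq n_0$ (the finitely many smaller ones being trivially decidable by brute force), contradicting Theorem~\ref{thm:otomrtthard} unless $\ptime=\nptime$. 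The step requiring the most care is the soundness count: one must certify that the $N^{2}$ ``cross'' pairs $(y_j,z_i)$ are entirely absent whenever $s$ cannot reach $t$, which rests on the topological observation that every $y_j\to z_i$ temporal path is forced through $s$ and then through $t$, together with the remark that the pendant entry/exit gadgets cannot create any new $s\to t$ connection.
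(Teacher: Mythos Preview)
Your proof is correct and follows essentially the same approach as the paper: both reduce from \otomrtt{} by attaching $K$ pendant entry nodes to $s$ and $K$ pendant exit nodes from $t$ via one-edge trips, yielding $\Omega(K^{2})$ reachability in the yes-case versus $O(\nu K)$ in the no-case, and then choosing $K$ large enough (polynomial in $\nu$) so that the resulting gap exceeds any $O(n^{1-\epsilon})$ approximation factor. Your parametrisation $N=\nu^{c}$ with $c>1/\epsilon$ is somewhat cleaner than the paper's choice of $K=\lceil (cn)^{1/\epsilon}(n+2)^{(2-\epsilon)/\epsilon}\rceil$, but the construction and analysis are otherwise the same.
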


\begin{proof}
The proof makes use of the well-known gap technique (see Section~3.1.4 of~\cite{AusielloMCGPK99}). Suppose by contradiction that there exists a $r(\cdot)$-approximation algorithm $\cal{A}$ for the \mrtt{} problem, for some $\epsilon\in(0,1)$ and for some function $r(n)$ of the number $n$ of nodes that satisfies $r(n)\le cn^{1-\epsilon}$ for some constant $c$. We will now show that it is possible to exploit such an algorithm in order to solve in polynomial time the \otomrtt{} problem, which would imply that $\ptime=\nptime$ (because of Theorem \ref{thm:otomrtthard}). Let us consider an instance $\langle (D=(V,E),\tnet),s,t\rangle$ of the \otomrtt{} problem, where $V=\{s=v_1,\ldots,v_n=t\}$. Without loss of generality, we assume that $n>c+1$. We define an instance $(D'=(V',E'),\tnet')$ of \mrtt{} as follows.

\begin{itemize}
    \item $V' = V \cup \{ v_{n+i} : i\in[2K]\}$ with $K = \left\lceil (c n)^{1/\epsilon} (n+2)^{\frac{2 - \epsilon}{\epsilon}}\right\rceil$.
    \item $E' = E  \cup \{(v_{n+i},s), (t,v_{n+K+i}) : i\in[K]\}\}$.
    \item \tnet$'$ is the union of \tnet{} with all the one-edge trips corresponding to the edges in $E' \setminus E$.
\end{itemize}

Consider an optimal temporalisation $\tau^*$ of $(D',\tnet{}')$: the maximum reachability is thus $\mathtt{opt} = \sum_{u\in V}|\reach{G[D',\tnet',\tau^*]}{u}|$. Moreover, let $x$ be the value of the reachability achieved by the temporalisation computed by the approximation algorithm $\cal{A}$ with input $(D',\tnet{}')$: hence, $\frac{\mathtt{opt}}{r(n')}\leq x\leq \mathtt{opt}$ where $n'=n+2K$. 

Let us upper-bound $\mathtt{opt}$ in the case in which $(D,\tnet)$ is not $(s,t)$-temporalisable. To this aim, we upper-bound the number of nodes $\tau^*$-reachable from each node in $V'$ (in the following, $G^* = G[D',\tnet{}',\tau^*]$ is the temporal graph induced by $\tau^*$).

\begin{itemize}
    \item $|\reach{G^*}{v_1}|\leq n-1$ (since $t$ is not $\tau^*$-reachable from $s=v_1$).
    
    \item For each $i \in [n-1]$, $|\reach{G^*}{v_{i+1}}|\leq n+K$ (since all nodes $v_{n+i}$ with $i \in [K]$ have in-degree equal to zero in $D'$).

    \item For each $i \in [K]$, $|\reach{G^*}{v_{n+i}}|\leq n$ (since $t$ is not $\tau^*$-reachable from $s$).
    
    \item For each $i \in [K]$, $|\reach{G^*}{v_{n+K+i}}|=1$ (since all these nodes have out-degree equal to zero in $D'$).
\end{itemize}

Thus, if $(D,\tnet)$ is not $(s,t)$-temporalisable, we have that $x \leq \mathtt{opt} \leq (n-1) + (n-1)(n+K)+ Kn + K = n^2+2nK-1$. On the other hand, if $(D,\tnet)$ is  $(s,t)$-temporalisable, then it is easy to produce a {temporalisation $\tau'$} of $(D',\tnet')$ such that $v_{n+K+i}\in\reach{G[D',\tnet',\tau']}{v_{n+j}}$, for any $i,j\in[K]$. Hence, in this case we have that $x \geq \frac{\mathtt{opt}}{r(n')} \geq \frac{K^2}{r(n')}$.  

If we prove that $\frac{K^2}{r(n')} > n^2+2nK-1$, then we have that $x > n^2+2nK-1$ if and only if $(D,\tnet)$ is $(s,t)$-temporalisable. This would imply that the \otomrtt{} problem is solvable in polynomial time, and the theorem is proved. Let us then show that $\frac{K^2}{r(n')} > n^2+2nK-1$. Since

\begin{equation*}
    \frac{K^2}{r(n')} \ge \frac{K^2}{c (n + 2K)^{1 - \epsilon}} >\frac{K^2}{c (n + 2)^{1 - \epsilon}K^{1 - \epsilon}} = \frac{K^{1 + \epsilon}}{c (n + 2)^{1 - \epsilon}},
\end{equation*}
it is sufficient to prove that $K^{1 + \epsilon} \geq c(n+2)^{1-\epsilon}n(n+2K)$. Since, by definition, $K \geq (cn)^{1/\epsilon} (n+2)^{\frac{2 - \epsilon}{\epsilon}}$, that is, $K^\epsilon \geq c n(n+2)^{2 - \epsilon}$, we have that $K^{1 + \epsilon} \geq cnK(n+2)^{2 - \epsilon} > cn(n + 2K)(n + 2)^{1 - \epsilon}$, and the proof is completed.\qed
\end{proof}

We can state a third hardness result concerning an optimisation problem which is, somehow, in between the \otomrtt{} problem and the \mrtt{} problem, that is, the following \textsc{Single Source Maximum Reachability Trip Temporalisation} (in short, \ssmrtt) problem: given a trip network $(D,\tnet)$ and a node $s$, find a temporalisation $\tau$ of $(D,\tnet)$ which maximises the $\tau$-reachability of $s$. Note that, given an instance of this problem, any solution has a value greater than or equal to $1$. Since the maximum number of nodes temporally reachable from $s$ is $n$, we have that the \ssmrtt{} problem is trivially $n$-approximable. The following theorem states that this approximation ratio cannot be improved by a polynomial factor, unless $\ptime=\nptime$.

\begin{theorem}\label{thm:ssmrtthard}
Unless $\mathrm{P}=\mathrm{NP}$, the \ssmrtt{} problem is not $r(\cdot)$-approximable, for any $\epsilon\in(0,1)$ and for any non-decreasing function $r$ in $O(n^{1-\epsilon})$, where $n$ is the number of nodes.
\end{theorem}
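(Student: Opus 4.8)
The plan is to adapt the gap-technique reduction behind Theorem~\ref{thm:mrtthard} to the single-source setting, again starting from the \otomrtt{} problem, which is \nptime-complete by Theorem~\ref{thm:otomrtthard}. Suppose, for contradiction, that $\mathcal{A}$ is an $r(\cdot)$-approximation algorithm for \ssmrtt{} with $r(n)\le cn^{1-\epsilon}$ for a fixed constant $c$ (which we may assume to be at least $1$) and a fixed $\epsilon\in(0,1)$. Given an instance $\langle(D=(V,E),\tnet),s,t\rangle$ of \otomrtt{}, with $V=\{s=v_1,\dots,v_n=t\}$, I would build an instance of \ssmrtt{} with source $s$ by attaching to $t$ a large fan of sink nodes: set $V'=V\cup\{v_{n+i}:i\in[K]\}$, $E'=E\cup\{(t,v_{n+i}):i\in[K]\}$, and let $\tnet'$ be $\tnet{}$ together with one one-edge trip for each new edge, where $K$ is a polynomial in $n$ to be fixed (roughly $K=\lceil(2cn)^{1/\epsilon}\rceil$); since $\epsilon$ is fixed, this keeps the reduction polynomial-time.

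The crucial dichotomy is that each new node $v_{n+i}$ has in-degree one in $D'$, its only incoming edge being $(t,v_{n+i})$, so $v_{n+i}$ is $\tau'$-reachable from $s$ if and only if $t$ is. Hence, if $(D,\tnet)$ is \emph{not} $(s,t)$-temporalisable, then for every temporalisation $\tau'$ the set $\reach{G[D',\tnet',\tau']}{s}$ is contained in $V\setminus\{t\}$, so $\mathtt{opt}\le n-1$. Conversely, if $(D,\tnet)$ is $(s,t)$-temporalisable through some temporalisation $\tau$, then extending $\tau$ to $\tnet'$ by giving every new one-edge trip a sufficiently large starting time makes all $K$ new sink nodes (together with $s$ and $t$) $\tau'$-reachable from $s$, so $\mathtt{opt}\ge K$. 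Writing $n'=|V'|=n+K$ and $x$ for the value returned by $\mathcal{A}$, we get $x\le\mathtt{opt}\le n-1$ in the first case and $x\ge\mathtt{opt}/r(n')\ge K/r(n')$ in the second. It then remains to check, exactly as in the proof of Theorem~\ref{thm:mrtthard}, that the choice of $K$ forces $K/r(n')>n-1$: using $K\ge n$ one has $n'\le 2K$ and hence $r(n')\le 2cK^{1-\epsilon}$, so that $K/r(n')\ge K^{\epsilon}/(2c)>n-1$ by the choice of $K$. Consequently $x>n-1$ holds if and only if $(D,\tnet)$ is $(s,t)$-temporalisable, so $\mathcal{A}$ would decide \otomrtt{} in polynomial time, yielding $\ptime=\nptime$.

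The only delicate point is pinning down $K$ and verifying the inequality $K/r(n')>n-1$ uniformly, including the small-$n$ corner cases, which can be absorbed by the harmless assumption that $n$ exceeds a suitable constant depending on $c$ (mirroring the assumption $n>c+1$ in Theorem~\ref{thm:mrtthard}). Everything else is a direct, and slightly simpler, transcription of that argument, since here the gap in the ``yes'' instances is produced entirely by the single out-fan at $t$, with no need for the incoming fan at $s$ used in the all-pairs reduction.
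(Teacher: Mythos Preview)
Your proposal is correct and follows essentially the same route as the paper: both reduce from \otomrtt{} by attaching to $t$ a polynomial-size out-fan of sink nodes with one-edge trips, yielding a gap between $\mathtt{opt}\le n-1$ and $\mathtt{opt}\ge K$ depending on $(s,t)$-temporalisability. The only difference is the specific choice of $K$ (you take $K=\lceil(2cn)^{1/\epsilon}\rceil$, the paper takes $K=\lceil cn^{2/\epsilon}\rceil$) and the accompanying arithmetic to verify $K/r(n')>n-1$, which is routine in both cases.
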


\begin{proof}
Similar to the proof of Theorem~\ref{thm:mrtthard}, we will make use of the gap technique. Suppose by contradiction that there exists an $r(\cdot)$-approximation algorithm $\cal{A}$ for the \ssmrtt{} problem, for some $\epsilon\in(0,1)$ and for some function $r(n)$ of the number $n$ of nodes that satisfies $r(n)\le cn^{1-\epsilon}$ for some constant $c$. We will now show that it is possible to exploit such an algorithm in order to solve in polynomial time the \otomrtt{} problem, which would imply that $\ptime=\nptime$ (because of Theorem \ref{thm:otomrtthard}). Let us consider an instance $\langle(D=(V,E),\tnet),s,t\rangle$ of the \otomrtt{} problem, where $V=\{s=v_1,\ldots,v_n=t\}$. Without loss of generality, we assume that $n>c+1$. We define an instance $\langle(D'=(V',E'),\tnet'),s\rangle$ of \ssmrtt{} as follows.

\begin{itemize}
    \item $V' = V \cup \{ v_{n+i}:i\in[K]\}$ with $K = \lceil c n^{2/\epsilon}\rceil$.
    \item $E' = E \cup \{(t,v_{n+i}) : i\in[K]\}$.
    \item \tnet$'$ is the union of \tnet{} with all the one-edge trips corresponding to the edges in $E' \setminus E$.
\end{itemize}

Consider an optimal temporalisation $\tau^*$ of $\langle(D',\tnet{}'),s\rangle$: the maximum reachability of $s$ is thus $\mathtt{opt} =|\reach{G[D',\tnet',\tau^*]}{s}|$. Moreover, let $x$ be the value of the temporalisation computed by the approximation algorithm $\cal{A}$: hence, $\frac{\mathtt{opt}}{r(n')}\leq x\leq \mathtt{opt}$ where $n'=n+K$.

If $(D,\tnet)$ is $(s,t)$-temporalisable, then $\mathtt{opt}\geq K+2$. Indeed, the very same temporalisation can be extended to $(D',\tnet')$, by assigning to all the new one-edge trips a starting time greater than the arrival time in $t$, so that all the $K$ new out-neighbors of $t$ are reachable. Note that $\mathtt{opt}\geq K + 2$ implies that $x\geq \frac{\mathtt{opt}}{r(n')}\geq \frac{K+2}{r(n')}\geq\frac{cn^{2/\epsilon}}{r(n')}$. Since $n'=n+K$, we have that $n'<n+cn^{2/\epsilon}+1<n^{1+2/\epsilon}$ (since $n>c+1$ and $\epsilon\in(0,1)$). Since $r$ is non-decreasing, we get $x\geq\frac{cn^{2/\epsilon}}{r(n')} > \frac{cn^{2/\epsilon}}{r(n^{1+2/\epsilon})} \geq \frac{cn^{2/\epsilon}}{c(n^{1+2/\epsilon})^{1-\epsilon}} = n^{1+\epsilon}>n$. Hence, if $(D,\tnet)$ is $(s,t)$-temporalisable, then $x>n$. On the other hand, if $(D,\tnet)$ is not $(s,t)$-temporalisable, then $\mathtt{opt}<n$, since none of the new out-neighbors of $t$ are $\tau^*$-reachable from $s$ (without passing through $t$). Hence, $x\leq\mathtt{opt}<n$.

We can conclude that $(D,\tnet)$ is $(s,t)$-temporalisable if and only if the value $x$ of the solution computed by the approximation algorithm $\cal{A}$ is greater than $n$. This implies that the \otomrtt{} problem is solvable in polynomial time, and the theorem is proved.\qed
\end{proof}

\subsection{Bounding the number of used trips}
\label{sec:fpt}

In this section, we study the \otomrtt{} problem parameterised by the number of trips needed to go from the source to the destination.

Given a trip network $(D,\tnet)$, a temporalisation $\tau$ and $k \in \mathbb{N}$, a node $v$ is said to be \emph{$(k,\tau)$-reachable} from $u$ if there exists a temporal path $P$ in $G=G[D,\tnet,\tau]$ from $u$ to $v$ which is composed of edges induced by at most $k$ different trips in $\tnet$.
Given such a path $P$, let $\tnet_P$ be a set of at most $k$ trips which induce the edges contained in $P$. 
Note that, without loss of generality, we can suppose that the edges induced by the same trip are contiguous in $P$.
Indeed, if $P=e_1, \dots, e_p$, let $T_1\in\tnet_P$ be one of the trips that induces $e_{1}$ and let $e_h$ be the last edge in $P$ which is induced by $T_{1}$. The temporalisation of $T_{1}$ induces a temporal path $e_1'=e_1,e_2',\dots,e_{l-1}',e_l'=e_h$ in $G$. We can then consider the temporal path $P'= e_1',\dots,e_l',e_{h+1},\dots,e_p$, which has the property that all the edges which are induced by $T_{1}$ are contiguous. We can now apply the same argument by considering the trip $T_{2}\in\tnet_P$ as one of the trips that induces $e_{h+1}$, and go on like this until all the edges induced by each trip considered are contiguous in the final temporal path from $u$ to $v$.

We now consider the \otomrtt{} problem parameterised by the number $k$ of trips used in the resulting temporal graph, in order to go from $s$ to $t$. More precisely, given a trip network $(D,\tnet)$, a source node $s$, and a target node $t$, the parameterised problem $k$-\otomrtt{} consists in deciding whether there exists a temporalisation $\tau$ such that $t$ is $(k,\tau)$-reachable from $s$. By using the color coding technique developed in~\cite{AlonYZ16}, we can obtain the following result.

\begin{theorem}\label{thm:parameterised}
The $k$-\otomrtt{} problem can be solved in $2^{O(k)}m\log |\tnet|$ time where $m=\sum_{T\in \tnet}|T|$ is the sum of trip lengths.
\end{theorem}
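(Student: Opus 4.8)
The plan is to recast $k$-\otomrtt{} as the search for a ``colourful block-walk'' and then apply the colour coding technique of~\cite{AlonYZ16}. The key structural step builds on the normalisation recorded just above the statement: if $t$ is $(k,\tau)$-reachable from $s$ for some temporalisation $\tau$, there is a witnessing temporal path in which the edges induced by each trip form a contiguous block, and moreover — since each new block produced by the re-routing begins with an edge not induced by any previously processed trip — the $\le k$ trips carrying these blocks are pairwise \emph{distinct}. Thus the path decomposes into $j\le k$ contiguous blocks $B_1,\dots,B_j$, where $B_\ell$ is a contiguous sub-walk of a trip $T_\ell$, the concatenation $B_1\cdots B_j$ is a walk from $s$ to $t$ in $D$, and $T_1,\dots,T_j$ are distinct. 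Conversely, given any such distinct trips and sub-walks, the schedule running $T_1,\dots,T_j$ one right after another (and the remaining trips afterwards) makes $B_1\cdots B_j$ a temporal path from $s$ to $t$ using $j\le k$ trips, because all travel times are positive. Hence the input is a yes-instance iff such a block-walk through at most $k$ distinct trips exists.

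To find such a block-walk I would colour every trip of $\tnet$ with one of $k$ colours and look for a block-walk whose blocks receive pairwise distinct colours (hence at most one trip per colour). If the colouring is injective on the trip set of an optimal block-walk, which has size $\le k$, such a colourful block-walk exists; conversely a colourful block-walk always uses $\le k$ distinct trips and therefore certifies a yes-instance. It thus suffices to iterate over a family of colourings $h\colon\tnet\to[k]$ that is perfect for sets of at most $k$ trips (for every such set, some member of the family is injective on it). Families of size $2^{O(k)}\log|\tnet|$, constructible in $2^{O(k)}|\tnet|\log|\tnet|$ time, are provided by the standard derandomisation of colour coding~\cite{AlonYZ16}.

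For a fixed colouring $h$ I would run a subset dynamic program. For $C\subseteq[k]$ and $v\in V$, let $\mathrm{Reach}[C][v]$ be true iff there is a block-walk from $s$ to $v$ using exactly one block (of $\ge 1$ edge) of each colour in $C$. The base case is $\mathrm{Reach}[\emptyset][s]=\true$, and $\mathrm{Reach}[C][v]$ holds iff for some $c\in C$ there is a node $u$ with $\mathrm{Reach}[C\setminus\{c\}][u]=\true$ and a trip $T$ with $h(T)=c$ in which $u$ occurs before $v$; the instance is accepted iff $\mathrm{Reach}[C][t]$ holds for some $C$. Processing the $C$'s by increasing cardinality, the update for a fixed pair $(C,c)$ with $c\in C$ is: form $S=\{u:\mathrm{Reach}[C\setminus\{c\}][u]\}$, and for each trip $T=\langle u_0,\dots,u_\ell\rangle$ with $h(T)=c$, scan $T$ once and, at the first index $i$ with $u_i\in S$, mark $u_{i+1},\dots,u_\ell$ as $C$-reachable. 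Being generous here — marking reachability from the first occurrence of a node rather than from the occurrence an optimal path uses — is harmless, since every node so marked is genuinely reachable by a sub-walk of $T$.

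For the running time, the update for $(C,c)$ costs $O\!\bigl(\sum_{T:\,h(T)=c}|T|\bigr)$; summing over $c\in C$ and over all $C\subseteq[k]$ gives $2^{O(k)}\sum_{T\in\tnet}|T|=2^{O(k)}m$ per colouring (the table has $2^k$ entries per node, and $n=O(m)$ since each node lies on a trip). Multiplying by the $2^{O(k)}\log|\tnet|$ colourings, and adding the hash-family construction time (which is $\le 2^{O(k)}m\log|\tnet|$ as $|\tnet|\le m$), yields the claimed bound $2^{O(k)}m\log|\tnet|$. The one place that needs genuine care is the structural equivalence of the first paragraph — turning an arbitrary witnessing temporal path into a block-walk through \emph{distinct} trips and, conversely, scheduling such a block-walk back into a full temporalisation; everything after that is a routine instantiation of colour coding together with a direct cost accounting.
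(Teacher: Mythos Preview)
Your proposal is correct and follows essentially the same approach as the paper: both reduce the problem to finding a concatenation of at most $k$ sub-walks of pairwise distinct trips, apply colour coding on the trips, run a dynamic program over colour subsets, and derandomise with a perfect hash family of size $2^{O(k)}\log|\tnet|$. The only cosmetic difference is that the paper indexes its table by $(v,i)$ and stores, for each $i$, the \emph{collection} of colour sets realisable by an $i$-block path to $v$, whereas you index directly by $(C,v)$ with a Boolean; the two formulations are equivalent and yield the same $2^{O(k)}m$ cost per colouring.
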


\begin{proof}
     Let us consider a trip network $(D=(V,E),\tnet)$, a source node $s$ and a target node $t$, and let $M$ be the induced multidigraph of $(D,\tnet{})$ (note that $m$ is also the number of edges in $M$). We suppose that for each trip $T\in\tnet$, we are given the list of its edges in their respective order in $T$. We also let $V(T)$ denote the set of nodes appearing in $T$ and write $u\preceq_T v$ when $u,v\in V(T)$ and the first occurrence of $u$ precedes the last occurrence of $v$ in $T$. In particular, we have $u\preceq_Tu$ for $u\in V(T)$ only when $u$ occurs several times in $T$. Let $\chi:\tnet{}\rightarrow [k]$ be any color assignment to the trips in $\tnet{}$. For $i\in[k]$, an \emph{$(i,\chi)$-walk} $P$ in $M$ is a walk which is the concatenation of exactly $i$ subtrips of distinct trips in $\tnet{}$ with pairwise distinct colors (in the following, $\chi(P)$ will denote the set of colors ``used'' by such a walk $P$). 
     %Although we could restrict our attention to paths in $M$ rather than walks, it is more convenient to do so as our computation will allow loops. 
     Note that the existence of an $(i,\chi)$-walk in $M$ from $s$ to $t$ with $i\in[k]$ implies the existence of a schedule $S$ such that $t$ is $(k,\tau_{S})$-reachable from $s$: indeed, we can first schedule the $i$ trips of the walk in the order they appear in it, and then the remaining trips of $\tnet{}$ in any order. In order to test the existence of an $(i,\chi)$-walk in $M$ from $s$ to $t$ for $i\in[k]$, we can use the dynamic programming technique~\cite{Bellman1954}. For any node $v\in V$, let $A_{\chi}[v,i]$ denote the collection of sets $C$ of $i$ colors for which there exists an $(i,\chi)$-walk $P$ in $M$ from $s$ to $v$ such that $\chi(P)=C$. Clearly, there exists an $(i,\chi)$-walk in $M$ from $s$ to $t$ if and only if $A_{\chi}[t,i]\neq\emptyset$. We have that, for any node $v\in V$,
     \[A_{\chi}[v,1] = \{\{\chi(T)\}:T \in\tnet \wedge s \preceq_T v\}.\]
    \newcommand{\negspace}{\!\!\!\!\!\!}
    Moreover, for any $i\in[k-1]$,
    \begin{align*}
    \begin{split}
        A_{\chi}[v,i+1] = &
        \negspace\bigcup_{T\in\mathbb{T}: v\in V(T)}\negspace
    \left\{C\cup\{\chi(T)\}: \exists u\left[u \preceq_T v\wedge C\in A_{\chi}[u,i]\wedge\chi(T)\not\in C\right]\right\}\\
         = & 
        \negspace\bigcup_{T\in\mathbb{T}: v\in V(T)}\negspace
        \left\{C\cup\{\chi(T)\}:C\in
        {\cal C}_i(T,v) \wedge\chi(T)\not\in C\right\},
    \end{split}
    \end{align*}
where ${\cal C}_i(T,v) = \bigcup_{u \preceq_T v}A_{\chi}[u,i]$. We can compute $A_{\chi}[v,1]$ for all $v\in V$ by simply scanning all trips in $\tnet$: this can be done in $O(m)$ time. Moreover, we can execute the above update rule for all $v\in V$ by scanning each trip $T\in \tnet$ once as follows. We first set  $A_{\chi}[v,i+1] := \emptyset$ for all $v\in V$. Then, for each trip $T$, we iterate over the edges of $T$ in their respective order in $T$: we first initialize a collection ${\cal C}:=\emptyset$, and for each edge $(u,v)$ of $T$, we update ${\cal C}:={\cal C}\cup A_\chi[u,i]$ and $A_{\chi}[v,i+1] := A_{\chi}[v,i+1] \cup \left\{C\cup\{\chi(T)\}:C\in{\cal C} \wedge\chi(T)\not\in C\right\}$. % (if $(u,v)$ is the first edge of $T$ we simply use ${\cal C}_i(T,v)=A_\chi[u,i]$ as $u$ is then the only node preceding $v$ in $T$).
Note that when we scan the last edge $(u,v)$ of $T$ having $v$ as head, it corresponds to the last occurrence of $v$ in $T$, and the update of $\cal C$ leads to ${\cal C}={\cal C}_i(T,v)$. This implies that the update of $A_{\chi}[v,i+1]$ then coincides with the above update rule.
Note also that both updates of ${\cal C}$ and $A_{\chi}[v,i+1]$
take $O(k2^k)$ time since, for any node $u\in V$, we have %and for $j\in[k]$,
$|A_{\chi}[u,i]|\leq 2^{k}$ and $|{\cal C}|\le 2^k$. Hence computing the values $A_\chi[v,i+1]$ for all $v$ from values $A_\chi[v,i]$ takes $O(k2^k m)$ time and the whole computation of matrix $A_\chi$ requires $2^{O(k)} m$ time.

Observe now that if there exists a temporalisation of $(D,\tnet{})$ such that $t$ is $(k,\tau)$-reachable from $s$, then there must exist a color assignment $\chi$ such that $M$ includes an $(i,\chi)$-walk from $s$ to $t$ for some $i\in[k]$. In order to find such a walk, we can use an appropriate set of perfect hash functions from $[|\tnet|]$ to $[k]$. Indeed, it is possible to design $2^{O(k)}\log |\tnet|$ hash functions such that any subset of $k$ trips has image $\{1,\ldots,k\}$ for at least one function~\cite{NaorSS95}. This implies that any subset of $i$ trips has $i$ pairwise distinct colors as image for at least one function as such a set can be completed in a set of $k$ trips. %Each hash function can be coded with $O(k+\log\log |\tnet|)$ bits, and can be generated in $O(k^3\log |\tnet|)$ time~\cite{FredmanKS84}. The number of such functions is $O(2^k\log|\tnet|)$ and their computation takes $2^{O(k)}\log^2|\tnet|$ time.
More precisely, a set of $e^kk^{O(\log k)}\log |\tnet|=2^{O(k)}\log |\tnet|$ such functions can be explicitly computed in $e^kk^{O(\log k)}|\tnet| \log |\tnet|=2^{O(k)}|\tnet| \log |\tnet|$ time~\cite{NaorSS95} (see also Theorem~5.18 in~\cite{CyganFKLMPPS15}). Testing the coloring obtained through each function thus yields a $2^{O(k)}m\log |\tnet|$-time algorithm for solving the $k$-\otomrtt{} problem (we use $|\tnet|\le m$). The theorem is thus proved.
    \qed
\end{proof}

% ------------------- \input{body/stronglytemporalisable}
%

\section{Strongly temporalisable trip networks}
\label{sec:stronglytemporisable}

We now switch to strongly temporalisable trip networks where one-to-one reachability is assumed for all pairs of nodes. This clearly implies that the \otomrtt{} problem is trivially solvable when restricted to strongly temporalisable trip networks, since the answer is always yes (actually, one-to-one reachability is always satisfied under strong temporalisability). On the other hand, we will prove that both the \mrtt{} and the \ssmrtt{} problem cannot be approximated within a factor less than $\frac{\sqrt{n}}{12}$ (unless $\ptime=\nptime$). To this aim, we first show that the strong temporalisability by itself is not enough to ensure the existence of a temporalisation $\tau$ with a $\tau$-reachability which is a constant fraction of all pairs of nodes.

\begin{theorem}\label{thm:pairscheduleconnected}
For any $r>3$, there exists a strongly temporalisable trip network $(D_{r},\tnet_{r})$ with $n=r^2+2r$ nodes, such that any temporalisation $\tau$ of $(D_{r},\tnet_{r})$ has $\tau$-reachability $O(n^{\frac{3}{2}})$.
\end{theorem}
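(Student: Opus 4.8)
The plan is to exhibit an explicit family $(D_r,\tnet_r)$ and then verify the two required properties separately. I would take the $n=r^2+2r$ nodes to consist of an $r\times r$ array of \emph{inner} nodes together with $2r$ \emph{terminal} nodes ($r$ ``entry'' terminals and $r$ ``exit'' terminals), and take $\tnet_r$ to consist of $O(r)$ trips, each of which sweeps once across the array (a row-type or column-type walk) and is hooked to one entry and one exit terminal. The design is governed by a tension: on one hand every inner node is shared by several trips, so that trips can be composed freely into long temporal paths \emph{when they are timed favourably}; on the other hand the crossing pattern must be such that, under \emph{any} fixed temporalisation, a temporal path that composes two trips is forced to re-enter the second trip close to its end, so reachability cannot keep ``compounding''. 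I would encode this by letting the $i$-th sweeping trip meet the $j$-th one at a single array cell whose position along each of the two trips depends on $i$ and $j$ in a way that makes the transfer constraint between them \emph{directional}: feasible in only one of the two orders for any given $\tau$.

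For strong temporalisability I would argue exactly as in the proof of Theorem~\ref{thm:otomrtthard} together with Fact~\ref{fact:schedule2}. Given an ordered pair of nodes $(u,v)$, I would first exhibit a walk from $u$ to $v$ in $D_r$ that decomposes into a sequence of pairwise distinct (sub)trips of $\tnet_r$ --- a purely combinatorial property built into $D_r$ (which in particular makes $D_r$ strongly connected and guarantees that the $O(r)$ trips needed to realise each such walk can be listed without repetition). Then the schedule $S$ that orders those trips in the order they are used along the walk and places all remaining trips afterwards lifts the walk to a temporal path in $G[D_r,\tnet_r,\tau_S]$, because edges of one trip get consecutive appearing times and a trip scheduled later than $T$ gets appearing times larger than all of $T$'s. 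Hence $(D_r,\tnet_r)$ is $(u,v)$-temporalisable for every $u,v$, i.e.\ strongly temporalisable. (All weights are $1$, so by Fact~\ref{fact:schedule2} nothing is lost.)

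For the reachability bound I would fix an arbitrary temporalisation $\tau$, set $G=G[D_r,\tnet_r,\tau]$, and show $|\reach{G}{u}|=O(\sqrt n)=O(r)$ for every inner node $u$. The heart is a monotonicity lemma: $\tau$ determines, for each pair of trips $T,T'$ sharing a node $x$, whether a temporal path may switch from $T$ to $T'$ at $x$, and by construction this is equivalent to a comparison between two reals $\beta_T,\beta_{T'}$ derived from $\tau$ and from the positions of $x$ along the two trips; consequently the sequence of trips traversed by any temporal path starting at $u$ has non-decreasing $\beta$-values, and --- since consecutive switches also force the array coordinate to move monotonically --- the set of array cells it can reach is confined to an $O(r)$-size region (an initial suffix of one sweep plus $O(1)$ suffixes of further sweeps). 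Summing, the $r^2$ inner nodes contribute $O(r^2\cdot r)$ reachable pairs and the $2r$ terminal nodes contribute at most $2r\cdot n=O(r^3)$ trivially, so the $\tau$-reachability of $(D_r,\tnet_r)$ is $O(r^3)=O(n^{3/2})$.

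The main obstacle is the construction itself, and specifically the monotonicity lemma: it is easy to make the network strongly temporalisable (just add enough trips) and easy to make \emph{some} temporalisation poor, but one must ensure that \emph{every} temporalisation is poor. The delicate requirement is that each potential trip-to-trip transfer be one-directional for every $\tau$ \emph{and} land near the end of the target trip, so that a temporal path leaving an inner node cannot reach more than $O(\sqrt n)$ further inner nodes --- while still leaving, for each pair of nodes, some schedule realising it. Verifying that the array-sweeping trips have exactly these properties, in particular ruling out ``staircase'' temporal paths that would sweep out an $\Omega(r^2)$-cell quadrant of the array, is where essentially all the work lies; the strong-temporalisability check and the final counting are routine.
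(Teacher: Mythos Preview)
Your proposal is a plan rather than a proof, and the gap you flag at the end is real. You never give the construction, and the ``monotonicity lemma'' is stated only as a desideratum. With the natural $r\times r$ array and row/column sweeps, the transfer condition at cell $(i,j)$ between row trip $R_i$ and column trip $C_j$ reads $\tau(R_i)+j\le \tau(C_j)+i$ (or its reverse), which is \emph{not} a comparison of a single value $\beta_T$ per trip: it genuinely depends on both $i$ and $j$. This is exactly why the staircase obstruction is hard to kill --- for many $\tau$ one can alternate row-then-column transfers and sweep out an $\Omega(r^2)$-cell quadrant from a single source. Your further requirement that every transfer ``land near the end of the target trip'' would fix this, but you give no trips that achieve it while remaining strongly temporalisable, and it is not at all clear such a construction exists along these lines.

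The paper's argument is structurally different and sidesteps the per-node analysis entirely. It does \emph{not} bound $|\reach{G}{u}|$ for each $u$ (in fact some nodes can $\tau$-reach $\Theta(r^2)$ others); instead it engineers a \emph{global bottleneck}. The bulk of the nodes sit in $r$ parallel ``descending'' chains of length $r$, and the only way from one chain to another is through a lower gadget followed by an upper gadget, each built so that traversing it in the required direction forces one to use \emph{all} $r$ of its trips in order of their starting times. Consequently, for any $\tau$, only the chain indexed by the earliest-starting lower trip $i_{\min}$ can enter the bottleneck, and only the chain indexed by the latest-starting upper trip $i_{\max}$ can be reached after exiting it. This single pair $(i_{\min},i_{\max})$ blocks almost all inter-chain pairs at once, yielding at least $r^4-O(r^3)$ unreachable ordered pairs and hence $\tau$-reachability $O(r^3)=O(n^{3/2})$. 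Strong temporalisability is verified separately by an explicit case table of schedules (and requires a second copy of each descending trip). If you want to salvage your grid approach, the essential missing piece is an explicit family of $O(r)$ trips for which the staircase argument provably fails for \emph{every} $\tau$; absent that, the bottleneck construction is what actually works.
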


\begin{proof}
We first define the trip network $(D_{r},\tnet_{r})$ through the gadgets that compose it (see Figure~\ref{fig:pairscheduleconnected}). We then prove that the trip network is strongly temporalisable and, finally, we prove that, for any temporalisation $\tau$, the $\tau$-reachability is $O(n^{\frac{3}{2}})$.

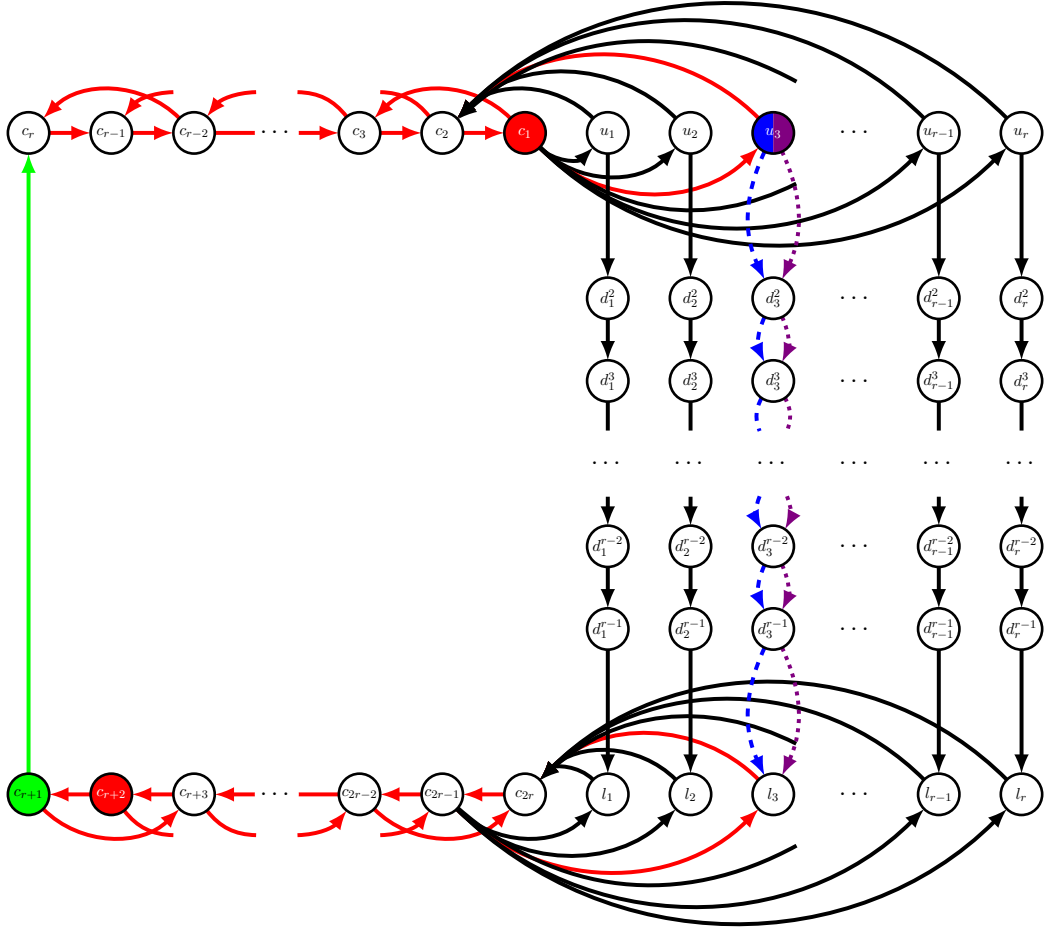
\begin{figure}[t]
    \centerfloat
    \SetVertexStyle[FillColor=white,TextFont=\normalsize]
\SetEdgeStyle[Color=black]
\begin{tikzpicture}[x=2cm,y=2cm,scale=0.55, every node/.style={scale=0.55}]
  \Vertex[x=8,y=0,size=1,label=$c_1$,color=red,fontcolor=white]{c1}
  \Vertex[x=6,y=0,size=1,label=$c_2$]{c2}
  \Vertex[x=4,y=0,size=1,label=$c_3$]{c3}
  \Vertex[x=2,y=0,size=1,label=$\cdots$,fontsize=\LARGE,fontsize=\LARGE,style={color=white},opacity=0]{c4}
  \Vertex[x=0,y=0,size=1,label=$c_{r-2}$]{c5}
  \Vertex[x=-2,y=0,size=1,label=$c_{r-1}$]{c6}
  \Vertex[x=-4,y=0,size=1,label=$c_{r}$]{c6a}
  \Vertex[x=10,y=0,size=1,label=$u_1$]{t1}
  \Vertex[x=12,y=0,size=1,label=$u_2$]{t2}
  \draw[fill=blue] (7,0.25) arc (90:270:0.5cm);
  \draw[fill=violet] (7,0.25) arc (90:-90:0.5cm);
  \Vertex[x=14,y=0,size=1,label=$u_3$,opacity=0,fontcolor=white]{t3}
  \Vertex[x=16,y=0,size=1,label=$\cdots$,fontsize=\LARGE,style={color=white},opacity=0]{t4}
  \Vertex[x=18,y=0,size=1,label=$u_{r-1}$]{t5}
  \Vertex[x=20,y=0,size=1,label=$u_{r}$]{t6}
  \Vertex[x=-4,y=-16,size=1,label=$c_{r+1}$,color=green,fontcolor=black]{c7a}
  \Vertex[x=-2,y=-16,size=1,label=$c_{r+2}$,color=red,fontcolor=white]{c7}
  \Vertex[x=0,y=-16,size=1,label=$c_{r+3}$]{c8}
  \Vertex[x=2,y=-16,size=1,label=$\cdots$,fontsize=\LARGE,style={color=white},opacity=0]{c9}
  \Vertex[x=4,y=-16,size=1,label=$c_{2r-2}$]{c10}
  \Vertex[x=6,y=-16,size=1,label=$c_{2r-1}$]{c11}
  \Vertex[x=8,y=-16,size=1,label=$c_{2r}$]{c12}
  \Vertex[x=10,y=-16,size=1,label=$l_1$]{b1}
  \Vertex[x=12,y=-16,size=1,label=$l_2$]{b2}
  \Vertex[x=14,y=-16,size=1,label=$l_3$]{b3}
  \Vertex[x=16,y=-16,size=1,label=$\cdots$,fontsize=\LARGE,style={color=white},opacity=0]{b4}
  \Vertex[x=18,y=-16,size=1,label=$l_{r-1}$]{b5}
  \Vertex[x=20,y=-16,size=1,label=$l_{r}$]{b6}
  \Vertex[x=10,y=-4,size=1,label=$d_1^2$]{d1-f}
  \Vertex[x=12,y=-4,size=1,label=$d_2^2$]{d2-f}
  \Vertex[x=14,y=-4,size=1,label=$d_3^2$]{d3-f}
  \Vertex[x=16,y=-4,size=1,label=$\cdots$,fontsize=\LARGE,style={color=white},opacity=0]{d4-f}
  \Vertex[x=18,y=-4,size=1,label=$d_{r-1}^2$]{d5-f}
  \Vertex[x=20,y=-4,size=1,label=$d_{r}^2$]{d6-f}
  \Vertex[x=10,y=-6,size=1,label=$d_1^3$]{d1-s}
  \Vertex[x=12,y=-6,size=1,label=$d_2^3$]{d2-s}
  \Vertex[x=14,y=-6,size=1,label=$d_3^3$]{d3-s}
  \Vertex[x=16,y=-6,size=1,label=$\cdots$,fontsize=\LARGE,style={color=white},opacity=0]{d4-s}
  \Vertex[x=18,y=-6,size=1,label=$d_{r-1}^3$]{d5-s}
  \Vertex[x=20,y=-6,size=1,label=$d_{r}^3$]{d6-s}
  \Vertex[x=10,y=-8,size=1,label=$\cdots$,fontsize=\LARGE,style={color=white},opacity=0]{cd1}
  \Vertex[x=12,y=-8,size=1,label=$\cdots$,fontsize=\LARGE,style={color=white},opacity=0]{cd1}
  \Vertex[x=14,y=-8,size=1,label=$\cdots$,fontsize=\LARGE,style={color=white},opacity=0]{cd1}
  \Vertex[x=16,y=-8,size=1,label=$\cdots$,fontsize=\LARGE,style={color=white},opacity=0]{cd1}
  \Vertex[x=18,y=-8,size=1,label=$\cdots$,fontsize=\LARGE,style={color=white},opacity=0]{cd1}
  \Vertex[x=20,y=-8,size=1,label=$\cdots$,fontsize=\LARGE,style={color=white},opacity=0]{cd1}
  \Vertex[x=10,y=-10,size=1,label=$d_1^{r-2}$]{d1-bl}
  \Vertex[x=12,y=-10,size=1,label=$d_2^{r-2}$]{d2-bl}
  \Vertex[x=14,y=-10,size=1,label=$d_3^{r-2}$]{d3-bl}
  \Vertex[x=16,y=-10,size=1,label=$\cdots$,fontsize=\LARGE,style={color=white},opacity=0]{d4-bl}
  \Vertex[x=18,y=-10,size=1,label=$d_{r-1}^{r-2}$]{d5-bl}
  \Vertex[x=20,y=-10,size=1,label=$d_{r}^{r-2}$]{d6-bl}
  \Vertex[x=10,y=-12,size=1,label=$d_1^{r-1}$]{d1-l}
  \Vertex[x=12,y=-12,size=1,label=$d_2^{r-1}$]{d2-l}
  \Vertex[x=14,y=-12,size=1,label=$d_3^{r-1}$]{d3-l}
  \Vertex[x=16,y=-12,size=1,label=$\cdots$,fontsize=\LARGE,style={color=white},opacity=0]{d4-l}
  \Vertex[x=18,y=-12,size=1,label=$d_{r-1}^{r-1}$]{d5-l}
  \Vertex[x=20,y=-12,size=1,label=$d_{r}^{r-1}$]{d6-l}
  \Edge[Direct,color=red](c6a)(c6)
  \Edge[Direct,color=red](c6)(c5)
  \Edge[color=red](c5)(c4)
  \Edge[Direct,color=red](c4)(c3)
  \Edge[Direct,color=red](c3)(c2)
  \Edge[Direct,color=red](c2)(c1)
  \Edge[Direct,bend=-45,color=red](c1)(c3)
  \Vertex[x=4,y=1,size=1,Pseudo]{etd3}
  \Edge[bend=-25,color=red](c2)(etd3)
  \Vertex[x=2,y=1,size=1,Pseudo]{etd4}
  \Edge[bend=-25,color=red](c3)(etd4)
  \Edge[Direct,bend=-25,color=red](etd4)(c5)
  \Vertex[x=0,y=1,size=1,Pseudo]{etd5}
  \Edge[Direct,bend=-25,color=red](etd5)(c6)
  \Edge[Direct,bend=-45,color=red](c5)(c6a)
  \Edge[Direct,bend=-45](c1)(t1)
  \Edge[Direct,bend=-45](c1)(t2)
  \Edge[Direct,bend=-45,color=red](c1)(t3)
  \Vertex[x=15,y=-1,size=1,Pseudo]{etd2}
  \Edge[bend=-35](c1)(etd2)
  \Edge[Direct,bend=-45](c1)(t5)
  \Edge[Direct,bend=-45](c1)(t6)
  \Edge[Direct,bend=-45](t1)(c2)
  \Edge[Direct,bend=-45](t2)(c2)
  \Edge[Direct,bend=-45,color=red](t3)(c2)
  \Vertex[x=15,y=1,size=1,Pseudo]{etd1}
  \Edge[Direct,bend=-35](etd1)(c2)
  \Edge[Direct,bend=-45](t5)(c2)
  \Edge[Direct,bend=-45](t6)(c2)
  \Edge[Direct,color=red](c7)(c7a)
  \Edge[Direct,color=red](c8)(c7)
  \Edge[Direct,color=red](c9)(c8)
  \Edge[color=red](c10)(c9)
  \Edge[Direct,color=red](c11)(c10)
  \Edge[Direct,color=red](c12)(c11)
  \Edge[Direct,bend=-45,color=red](c7a)(c8)
  \Vertex[x=0,y=-17,size=1,Pseudo]{ebd3}
  \Edge[bend=-25,color=red](c7)(ebd3)
  \Vertex[x=2,y=-17,size=1,Pseudo]{ebd4}
  \Edge[bend=-25,color=red](c8)(ebd4)
  \Edge[Direct,bend=-25,color=red](ebd4)(c10)
  \Vertex[x=4,y=-17,size=1,Pseudo]{ebd5}
  \Edge[Direct,bend=-25,color=red](ebd5)(c11)
  \Edge[Direct,bend=-45,color=red](c10)(c12)
  \Edge[Direct,bend=-45](c11)(b1)
  \Edge[Direct,bend=-45](c11)(b2)
  \Edge[Direct,bend=-45,color=red](c11)(b3)
  \Vertex[x=15,y=-17,size=1,Pseudo]{ebd2}
  \Edge[bend=-35](c11)(ebd2)
  \Edge[Direct,bend=-45](c11)(b5)
  \Edge[Direct,bend=-45](c11)(b6)
  \Edge[Direct,bend=-45](b1)(c12)
  \Edge[Direct,bend=-45](b2)(c12)
  \Edge[Direct,bend=-45,color=red](b3)(c12)
  \Vertex[x=15,y=-15,size=1,Pseudo]{ebd1}
  \Edge[Direct,bend=-35](ebd1)(c12)
  \Edge[Direct,bend=-45](b5)(c12)
  \Edge[Direct,bend=-45](b6)(c12)
  \Edge[Direct,color=green](c7a)(c6a)
  \Edge[Direct](t1)(d1-f)
  \Edge[Direct](d1-f)(d1-s)
  \Vertex[x=10,y=-7.5,Pseudo]{td1-1}
  \Edge(d1-s)(td1-1)
  \Edge[Direct](t2)(d2-f)
  \Edge[Direct](d2-f)(d2-s)
  \Vertex[x=12,y=-7.5,Pseudo]{td2-1}
  \Edge(d2-s)(td2-1)
  \Edge[Direct,color=blue,bend=-25,style={dashed}](t3)(d3-f)
  \Edge[Direct,color=violet,bend=25,style={dotted}](t3)(d3-f)
  \Edge[Direct,bend=-25,color=blue,style={dashed}](d3-f)(d3-s)
  \Edge[Direct,bend=25,color=violet,style={dotted}](d3-f)(d3-s)
  \Vertex[x=13.75,y=-7.5,Pseudo]{td3-1}
  \Edge[bend=-25,color=blue,style={dashed}](d3-s)(td3-1)
  \Vertex[x=14.25,y=-7.5,Pseudo]{td3-2}
  \Edge[bend=25,color=violet,style={dotted}](d3-s)(td3-2)
  \Edge[Direct](t5)(d5-f)
  \Edge[Direct](d5-f)(d5-s)
  \Vertex[x=18,y=-7.5,Pseudo]{td5-1}
  \Edge(d5-s)(td5-1)
  \Edge[Direct](t6)(d6-f)
  \Edge[Direct](d6-f)(d6-s)
  \Vertex[x=20,y=-7.5,Pseudo]{td6-1}
  \Edge(d6-s)(td6-1)
  \Edge[Direct](d1-l)(b1)
  \Edge[Direct](d1-bl)(d1-l)
  \Vertex[x=10,y=-8.5,Pseudo]{bd1-2}
  \Edge[Direct](bd1-2)(d1-bl)
  \Edge[Direct](d2-l)(b2)
  \Edge[Direct](d2-bl)(d2-l)
  \Vertex[x=12.25,y=-8.5,Pseudo]{bd2-1}
  \Vertex[x=12,y=-8.5,Pseudo]{bd2-2}
  \Edge[Direct](bd2-2)(d2-bl)
  \Edge[Direct,bend=-25,color=blue,style={dashed}](d3-l)(b3)
  \Edge[Direct,bend=25,color=violet,style={dotted}](d3-l)(b3)
  \Edge[Direct,bend=-25,color=blue,style={dashed}](d3-bl)(d3-l)
  \Edge[Direct,bend=25,color=violet,style={dotted}](d3-bl)(d3-l)
  \Vertex[x=14.25,y=-8.5,Pseudo]{bd3-1}
  \Edge[Direct,bend=25,color=violet,style={dotted}](bd3-1)(d3-bl)
  \Vertex[x=13.75,y=-8.5,Pseudo]{bd3-2}
  \Edge[Direct,bend=-25,color=blue,style={dashed}](bd3-2)(d3-bl)
  \Edge[Direct](d5-l)(b5)
  \Edge[Direct](d5-bl)(d5-l)
  \Vertex[x=18,y=-8.5,Pseudo]{bd5-2}
  \Edge[Direct](bd5-2)(d5-bl)
  \Edge[Direct](d6-l)(b6)
  \Edge[Direct](d6-bl)(d6-l)
  \Vertex[x=20,y=-8.5,Pseudo]{bd6-2}
  \Edge[Direct](bd6-2)(d6-bl)
\end{tikzpicture}
    \caption{An example of a strongly temporalisable trip network, such that any temporalisation cannot connect a constant fraction of the total pairs of nodes, obtained via the construction described in the proof of Theorem~\ref{thm:pairscheduleconnected}. The two red solid trips, starting from the two red nodes, correspond to the trip $T_3^U$ and $T_3^L$, respectively, in the construction (the first time a trip passes through a node with more than one red outgoing edge, it continues towards the node with the smaller index). The blue dashed (respectively, violet dotted) trip, starting from the half blue (respectively, violet) node, corresponds to the trip $T_3^{\downarrow\mathrm{l}}$ (respectively, $T_3^{\downarrow\mathrm{r}}$) in the construction. Finally, the green solid trip, starting from the green node, corresponds to the trip $T^\uparrow$ in the construction.}
    \label{fig:pairscheduleconnected}
\end{figure}

\begin{description}
\item[\textbf{Upper gadget} \boldmath$V^U,E^U,\tnet^U$.] The set $V^{U}$ contains the nodes $c_1, \dots, c_{r}$, and the nodes $u_1, \dots, u_{r}$. These nodes are connected through the following set of directed edges:
\[E^{U} = \{(c_{i+1},c_{i}) : i\in[r-1] \} \cup \{(c_i,c_{i+2}) : i\in[r-2] \}\cup \{(c_1,u_i),(u_i,c_2) : i\in[r]\}.\]
On this gadget, we have the following collection of trips: $\tnet^{U} = \{ T^{U}_i : i\in[r] \}$, where
\[T_i^{U} = \langle c_1,u_i,c_2, c_1, c_3, c_2, \dots, c_{r - 1},c_{r - 2}, c_{r}, c_{r-1}\rangle\]
(each edge $(c_{i+1},c_i)$ is followed by $(c_i,c_{i+2})$, see also Figure~\ref{fig:pairscheduleconnected}, where the upper red solid trip is $T_{3}^{U}$).

\item[\textbf{Lower gadget} \boldmath$V^L,E^L,\tnet^L$.] The set $V^{L}$ contains the nodes $c_{r + 1}, \dots, c_{2r}$, and the nodes $l_1, \dots, l_{r}$. These nodes are connected through the following set of directed edges:
\begin{eqnarray*}
E^{L} & = & \{(c_{r+i+1},c_{r+i}) : i\in[r-1] \} \cup \{(c_{r+i},c_{r+i+2}) : i\in[r-2] \}\\
& & \cup \{(c_{2r -1},l_i),(l_i, c_{2r}) : i\in[r] \}.
\end{eqnarray*}
On this gadget, we have the following collection of trips: $\tnet^{L} = \{ T^{L}_i : i\in[r] \}$, where
\[T_i^{L} = \langle c_{r+2},c_{r+1}, c_{r+3}, c_{r+2},\dots,c_{2r-2},c_{2r}, c_{2r -1}, l_i, c_{2r}\rangle\]
(each edge $(c_{r+i+1},c_{r+i})$ is followed by $(c_{r+i},c_{r+i+2})$, see also Figure~\ref{fig:pairscheduleconnected}, where the lower red solid trip is $T_{3}^{L}$).

\item[\textbf{Descending gadgets} \boldmath$V^{\downarrow}_i,E^{\downarrow}_i,{\tnet^{\downarrow}}$.]  For any $i$ with $i\in[r]$, we refer to node $u_i$ and $l_i$ as $d_i^1$ and $d_i^{r}$, respectively. The set $V^{\downarrow}_i$ contains the nodes $d^2_i, \dots, d^{r - 1}_i$. These nodes are connected among them and to the previous gadgets through the following set of directed edges:
\[E^{\downarrow}_i = \{(d_i^j,d_i^{j+1}) : j\in[r - 1] \}.\]
On this gadget, we have the following collection of trips: $\tnet^{\downarrow} = \{ T^{\downarrow\mathrm{l}}_i, T^{\downarrow\mathrm{r}}_i : i\in[r] \}$, where
\[T^{\downarrow\mathrm{l}}_i = T^{\downarrow\mathrm{r}}_i = \langle u_i=d_i^1, d_i^2, \dots, d_i^{r-1}, d_i^{r}=l_i  \rangle\]
(see Figure~\ref{fig:pairscheduleconnected}, where the blue dashed trip is $T_{3}^{\downarrow\mathrm{l}}$ and the violet dotted trip is $T_{3}^{\downarrow\mathrm{r}}$). 

\item[\textbf{Ascending gadget} \boldmath$e^{\uparrow\fixup},T^{\uparrow\fixup}$.] This gadget contains the edge $e^{\uparrow}=(c_{r+1}, c_{r})$, which connects the lower gadget to the upper gadget, and is also a one-edge trip $T^{\uparrow}$. Note that this gadget does not introduce any new nodes.
\end{description}

To conclude the definition of the network, we set $D_r=(V,E)$ where
\[V = V^{U} \cup V^{L} \cup\bigcup_{i=1}^{r} V^{\downarrow}_i\]
(note that $|V|=2r+2r+r(r-2) = r^2+2r$) and
\[E = E^{U} \cup E^{L} \cup\bigcup_{i=1}^{r} E^{\downarrow}_i\cup\{e^{\uparrow}\},\]
and we set
\[\tnet_r = \tnet^{U} \cup \tnet^{L} \cup \tnet^{\downarrow}\cup \{T^{\uparrow}\}.\]
Note that the descending gadgets contain the majority of the nodes in the trip network, and that it is possible to visit them by entering from the upper nodes and by travelling all the way down (the second descending trip $T_i^{\downarrow\mathrm{r}}$ is necessary in order to ensure that the trip network is strongly temporalisable).

\medskip
\noindent\textbf{$(D_{r},\tnet_{r})$ is strongly temporalisable.} In order to ease the reading of the proof let us first define the following $r$ (partial) schedules of the trip networks $(D_{r},\tnet^U)$ and $(D_{r},\tnet^L)$, respectively. 

\begin{itemize}
    \item Schedule $S_i^{U}$ for $i\in[r]$. This schedule has the purpose of making $u_i$ $S_{i}^{U}$-reachable from $c_{r}$. The schedule $S_i^{U}$ consists in having the trips $\tnet^{U}$ scheduled in any order that has $T_i^{U}$ as the last one. Starting from $c_{r}$, it is possible to go to $c_{r-1}$ through the edge $(c_{r}, c_{r-1})$ of the first trip scheduled, to $c_{r-2}$ through the edge $(c_{r-1}, c_{r-2})$ of the second trip scheduled, and so on. In this way, it is possible to reach $c_1$ using the first $r-1$ trips scheduled. Finally, from $c_1$ it is possible to go to $u_{i}$ through the edge $(c_1,u_i)$ of the trip $T_i^{U}$. Note that this also shows that, for any $h,k\in[r]$ with $h > k$, $c_k$ is $S_{i}^{U}$-reachable from $c_h$.
    
    \item Schedule $S_i^{L}$ for $i\in[r]$. This schedule is similar to the previous one, but applied to the lower gadget. It allows node $c_{r+1}$ to be $S_i^{L}$-reachable from $l_i$. The schedule $S_i^{L}$ consists in having the trips $\tnet^{L}$ scheduled in any order that has $T_i^{L}$ as the first one. Starting from $l_i$, it is possible to go to $c_{2r}$ through the edge $(l_{i},c_{2r})$ of the trip $T_i^{L}$. At this point, it is possible to reach $c_{r+1}$ from $c_{2r}$ by using one edge of each of the remaining $r - 1$ trips. Note that this also shows that, for any $h,k\in[r]$ with $h > k$, $c_{k+r}$ is $S_{i}^{L}$-reachable from $c_{h+r}$.
\end{itemize}

\begin{table}[t]
\scalebox{.82}{\begin{small}
  \small
\begin{tabular}{||p{1.4cm}||p{2.2cm}|p{1.6cm}|p{2.2cm}|p{3.3cm}|p{6.4cm}||}
\cline{2-6}
\multicolumn{1}{c||}{} & \multicolumn{5}{c||}{\textbf{Destination}}\\
\hline
\hfil \textbf{Source} & \hfil $c_k\in V^U$ & \hfil $u_k\in V^U$ & \hfil $c_k\in V^L$ & \hfil $l_k\in V^L$  & \hfil $d_k^{l_2}\in V^{\downarrow\mathrm{l}}_k$\\
\hline\hline
$c_h\in V^U$ & $\begin{array}{ll}S_1^U & \mbox{if $k<h$}\\ T_1^U & \mbox{if $k>h$}\end{array}$  & \centering{$S_k^U$} & \centering{$S_1^U,T_{1}^{\downarrow\mathrm{l}},S_1^L$}  & \multicolumn{2}{p{8.4cm}||}{\centering{$S_k^U,T_k^{\downarrow\mathrm{l}}$}}\\
\hline
$u_h\in V^U$ & \centering{$T_h^U$} & \centering{$T_h^U,T_k^U$} &  \centering{$T_h^{\downarrow\mathrm{l}},S_h^L$}  & \multicolumn{2}{p{8.4cm}||}{\centering{$\begin{array}{ll}T_k^{\downarrow\mathrm{l}} & \mbox{if $k=h$}\\ T_h^U,T_k^U,T_k^{\downarrow\mathrm{l}} & \mbox{if $k\neq h$}\end{array}$}}\\
\hline
$c_h\in V^L$ & \centering{$S_1^L,T^{\uparrow},S_1^U$} & \centering{$S_1^L,T^{\uparrow},S_k^U$} & $\begin{array}{ll}S_1^U & \mbox{if $k<h$}\\ T_1^L & \mbox{if $k>h$}\end{array}$ & \centering{$T_k^L$}  & \multirow{2}{3cm}{\centering{$S_h^L,T^{\uparrow},S_k^U,T_k^{\downarrow\mathrm{l}}$}}\\
\cline{1-5}
$l_h\in V^L$ & \multicolumn{2}{p{3.4cm}|}{\centering{$S_h^U,T^{\uparrow},S_k^U$}} & \centering{$S_h^L$} & \centering{$T_h^L,T_k^L$} &\\
\hline
$d_h^{l_1}\in V^{\downarrow\mathrm{l}}_h$ & \multicolumn{2}{p{3.4cm}|}{\centering{$T_h^{\downarrow\mathrm{l}},S_h^L,T^{\uparrow},S_k^U$}} &  \centering{$T_h^{\downarrow\mathrm{l}},S_h^L$}  &  $\begin{array}{ll}T_h^{\downarrow\mathrm{l}} & \mbox{if $k=h$}\\ T_h^{\downarrow\mathrm{l}},T_h^L,T_k^L & \mbox{if $k\neq h$}\end{array}$ & $\begin{array}{ll}T_h^{\downarrow\mathrm{l}},S_h^L,T^{\uparrow},S_k^U,T_k^{\downarrow\mathrm{l}} & \mbox{if $k\neq h$}\\ T_h^{\downarrow\mathrm{l}} & \mbox{if $k=h$ and $l_{2}>l_{1}$}\\ T_h^{\downarrow\mathrm{l}},S_h^L,T^{\uparrow},S_k^U,T_k^{\downarrow\mathrm{r}} & \mbox{if $k=h$ and $l_{2}<l_{1}$}\end{array}$\\
\hline\hline
\end{tabular}
\end{small}
}
\caption{The different cases in the proof that the trip network $(D_{r},\tnet_{r})$, defined in the proof of Theorem~\ref{thm:pairscheduleconnected} and illustrated in Figure~\ref{fig:pairscheduleconnected}, is strongly temporalisable. For each node $u$ labeling the row and for each node $v$ labeling the column, the corresponding cell specifies which (partial) schedule $S$ has to be used in order to guarantee that $v$ is $S$-reachable from $u$ (the trips that do not appear can be scheduled in any order).}
\label{tbl:pair-scheduleconnectivity}
\end{table}

By using the (partial) schedules above, we can now easily show that, for any two nodes $u$ and $v$ in $V$, there exists a schedule $S$ such that $v$ is $S$-reachable from $u$. These schedules are specified in Table~\ref{tbl:pair-scheduleconnectivity} for each possible pair of nodes. For example, in order to reach $d_h^{l_2}\in V^{\downarrow}$ from $d_h^{l_1}\in V^{\downarrow}$ with $l_{2} < l_{1}$, we first schedule the trip $T_h^{\downarrow\mathrm{l}}$ in order to reach $l_h\in V^L$, we then use the (partial) schedule $S_h^L$ in order to reach $c_{r+1}$, we then schedule the trip $T^{\uparrow}$ in order to reach $c_{r}$, we then use the (partial) schedule $S_h^U$ in order to reach $u_h$, and we finally schedule the trip $T_h^{\downarrow\mathrm{r}}$ to reach $d_h^{l_2}$ (note how, \textit{only} in this case, we need the second descending trip). 

\medskip
\noindent\textbf{Any temporalisation $\tau$ has $O(n \sqrt{n})$ $\tau$-reachability.} Given any temporalisation $\tau$ of the trip network $(D_r,\tnet_r)$, let $T_{i_{\min}}^L$ be one of the trips with minimum starting time according to $\tau$ among all the trips in the lower gadget, and let $T_{i_{\max}}^U$ be one of the trips with maximum starting time according to $\tau$ among all the trips in the upper gadget. We will prove the following claim.
\begin{claim}
For any pair of nodes $(d^{l_1}_{h_1},d^{l_2}_{h_2})$ with $1 < l_1,l_2 < r$, $h_1,h_2\in[r]$, $h_1\neq h_2$, and $h_1\neq i_{\min} \vee h_2 \neq i_{\max}$, $d^{l_2}_{h_2}$ is not $\tau$-reachable from $d^{l_1}_{h_1}$.
\end{claim}

Note that there are $r-2$ possible choices for $l_1$ and $l_2$, $r-1$ choices for $h_1$, and at least $r-2$ choices for $h_2$. Hence, the number of pairs of nodes satisfying the conditions in the claim is at least $(r-1)(r-2)^3 > (r-1)(r^3-6r^2)=r^4 - 7r^3+6r^2>r^4 - 7r^3$, thus implying that, since $n=r^2+2r$, the $\tau$-reachability is at most $(r^2+2r)^2-(r^4 - 7r^3)=r^4+4r^3+4r^2-r^4+7r^3=11r^3+4r^2<15r^3$. Since $n=r^2+2r$, we have that $r<\sqrt{n}$, and that the $\tau$-reachability is at most $15n\sqrt{n}$. The theorem thus follows.

It thus remains to prove the claim. To this aim, let $(d^{l_1}_{h_1},d^{l_2}_{h_2})$ be such that $1 < l_1,l_2 < r$, $h_1,h_2\in[r]$, $h_1\neq h_2$, and $h_1\neq i_{\min} \vee h_2 \neq i_{\max}$. Note that, since $h_1\neq h_2$, each walk in $D_{r}$ from $d^{l_1}_{h_1}$ to $d^{l_2}_{h_2}$ has to pass through the nodes $l_{h_1}$, $c_{r+1}$, $c_{r}$, and $u_{h_2}$ in this order. Note also that any walk from $l_{h_1}$ to $c_{r+1}$ contains at least $r$ edges. Since travelling on more than one edge of the same trip in the lower gadget results in going back towards $l_{h_1}$, all the $r$ trips in $\tnet^{L}$ have to be used in order to go from $l_{h_1}$ to $c_{r+1}$. As in any temporal path from $l_{h_1}$ to $c_{r+1}$ in $G[D_{r},\tnet_{r},\tau]$, the appearance times of the temporal edges must increase, this implies that all starting times of trips in $\tnet^{L}$ must be pairwise distinct and that the trip with the earliest starting time is $T_{h_1}^{L}$ as it is the only one containing the edge $(l_{h_1},c_{2r})$. If $h_1\neq i_{min}$, then $\tau$ fails to make $c_{r+1}$ $\tau$-reachable from $l_{h_1}$. If $h_2\neq i_{max}$, a similar reasoning allows us to show that $\tau$ fails to make $u_{h_2}$ reachable from $c_r$ by considering a temporal path from $c_{r}$ to $u_{h_2}$ in $G[D_{r},\tnet_{r},\tau]$, and the trips in $\tnet^{U}$. Hence, $h_1\neq i_{\min} \vee h_2 \neq i_{\max}$ and $h_1 \neq h_2$ implies that $\tau$ fails to make $d^{l_2}_{h_2}$ $\tau$-reachable from $d^{l_1}_{h_1}$, and the claim is proved.\qed
\end{proof}

\begin{figure}[t]
    \centering{\SetVertexStyle[FillColor=white,TextFont=\normalsize]
\SetEdgeStyle[Color=black]
\begin{tikzpicture}[x=2cm,y=2cm,scale=0.6, every node/.style={scale=0.6}]
\Vertex[x=8,y=0,size=1,label=$c_1$]{c1}
\Vertex[x=6,y=0,size=1,label=$c_2$]{c2}
\Vertex[x=4,y=0,size=1,label=$c_3$]{c3}
\Vertex[x=2,y=0,size=1,label=$\cdots$,fontsize=\LARGE,fontsize=\LARGE,style={color=white},opacity=0]{c4}
\Vertex[x=0,y=0,size=1,label=$c_{r-2}$]{c5}
\Vertex[x=-2,y=0,size=1,label=$c_{r-1}$]{c6}
\Vertex[x=-4,y=0,size=1,label=$c_{r}$]{c6a}
\Vertex[x=10,y=0,size=1,label=$u_1$]{t1}
\Vertex[x=12,y=0,size=1,label=$u_2$]{t2}
\Vertex[x=14,y=0,size=1,label=$\cdots$,fontsize=\LARGE,style={color=white},opacity=0]{t4}
\Vertex[x=16,y=0,size=1,label=$u_{r-1}$]{t5}
\Vertex[x=20,y=0,size=1,label=$t$]{t6}
\Edge[Direct](t2)(t1)
\Edge[Direct](t4)(t2)
\Edge(t5)(t4)
\Edge[Direct](t6)(t5)
\Vertex[x=-4,y=-16,size=1,label=$c_{r+1}$]{c7a}
\Vertex[x=-2,y=-16,size=1,label=$c_{r+2}$]{c7}
\Vertex[x=0,y=-16,size=1,label=$c_{r+3}$]{c8}
\Vertex[x=2,y=-16,size=1,label=$\cdots$,fontsize=\LARGE,style={color=white},opacity=0]{c9}
\Vertex[x=4,y=-16,size=1,label=$c_{2r-2}$]{c10}
\Vertex[x=6,y=-16,size=1,label=$c_{2r-1}$]{c11}
\Vertex[x=8,y=-16,size=1,label=$c_{2r}$]{c12}
\Vertex[x=10,y=-16,size=1,label=$l_1$]{b1}
\Vertex[x=12,y=-16,size=1,label=$l_2$]{b2}
\Vertex[x=14,y=-16,size=1,label=$\cdots$,fontsize=\LARGE,style={color=white},opacity=0]{b4}
\Vertex[x=16,y=-16,size=1,label=$l_{r-1}$]{b5}
\Vertex[x=20,y=-16,size=1,label=$l_{r}$]{b6}
\Edge[Direct](b1)(b2)
\Edge(b2)(b4)
\Edge[Direct](b4)(b5)
\Edge[Direct](b5)(b6)
\Vertex[x=10,y=-4,size=1,label=$d_1^2$]{d1-f}
\Vertex[x=12,y=-4,size=1,label=$d_2^2$]{d2-f}
\Vertex[x=14,y=-4,size=1,label=$\cdots$,fontsize=\LARGE,style={color=white},opacity=0]{d4-f}
\Vertex[x=16,y=-4,size=1,label=$d_{r-1}^2$]{d5-f}
\Vertex[x=20,y=-4,size=1,label=$x_{2}$]{d6-f}
\Vertex[x=10,y=-6,size=1,label=$d_1^3$]{d1-s}
\Vertex[x=12,y=-6,size=1,label=$d_2^3$]{d2-s}
\Vertex[x=14,y=-6,size=1,label=$\cdots$,fontsize=\LARGE,style={color=white},opacity=0]{d4-s}
\Vertex[x=16,y=-6,size=1,label=$d_{r-1}^3$]{d5-s}
\Vertex[x=20,y=-6,size=1,label=$x_{3}$]{d6-s}
\Vertex[x=10,y=-8,size=1,label=$\cdots$,fontsize=\LARGE,style={color=white},opacity=0]{cd1}
\Vertex[x=12,y=-8,size=1,label=$\cdots$,fontsize=\LARGE,style={color=white},opacity=0]{cd1}
\Vertex[x=14,y=-8,size=1,label=$\cdots$,fontsize=\LARGE,style={color=white},opacity=0]{cd1}
\Vertex[x=16,y=-8,size=1,label=$\cdots$,fontsize=\LARGE,style={color=white},opacity=0]{cd1}
\Vertex[x=20,y=-8,size=1,label=$\cdots$,fontsize=\LARGE,style={color=white},opacity=0]{cd1}
\Vertex[x=10,y=-10,size=1,label=$d_1^{r-2}$]{d1-bl}
\Vertex[x=12,y=-10,size=1,label=$d_2^{r-2}$]{d2-bl}
\Vertex[x=14,y=-10,size=1,label=$\cdots$,fontsize=\LARGE,style={color=white},opacity=0]{d4-bl}
\Vertex[x=16,y=-10,size=1,label=$d_{r-1}^{r-2}$]{d5-bl}
\Vertex[x=20,y=-10,size=1,label=$x_{p-1}$]{d6-bl}
\Vertex[x=10,y=-12,size=1,label=$d_1^{r-1}$]{d1-l}
\Vertex[x=12,y=-12,size=1,label=$d_2^{r-1}$]{d2-l}
\Vertex[x=14,y=-12,size=1,label=$\cdots$,fontsize=\LARGE,style={color=white},opacity=0]{d4-l}
\Vertex[x=16,y=-12,size=1,label=$d_{r-1}^{r-1}$]{d5-l}
\Vertex[x=20,y=-12,size=1,label=$s$]{d6-l}
\Edge[Direct](c6a)(c6)
\Edge[Direct](c6)(c5)
\Edge(c5)(c4)
\Edge[Direct](c4)(c3)
\Edge[Direct](c3)(c2)
\Edge[Direct](c2)(c1)
\Edge[Direct,bend=-45](c1)(c3)
\Vertex[x=4,y=1,size=1,Pseudo]{etd3}
\Edge[bend=-25](c2)(etd3)
\Vertex[x=2,y=1,size=1,Pseudo]{etd4}
\Edge[bend=-25](c3)(etd4)
\Edge[Direct,bend=-25](etd4)(c5)
\Vertex[x=0,y=1,size=1,Pseudo]{etd5}
\Edge[Direct,bend=-25](etd5)(c6)
\Edge[Direct,bend=-45](c5)(c6a)
\Edge(c1)(t1)
\Edge[Direct,bend=-45](c1)(t2)
\Vertex[x=13,y=-1,size=1,Pseudo]{etd2}
\Edge[bend=-35](c1)(etd2)
\Edge[Direct,bend=-45](c1)(t5)
\Edge[Direct,bend=-45](c1)(t6)
\Edge[Direct,bend=-45](t1)(c2)
\Edge[Direct,bend=-45](t2)(c2)
\Vertex[x=13,y=1,size=1,Pseudo]{etd1}
\Edge[Direct,bend=-35](etd1)(c2)
\Edge[Direct,bend=-45](t5)(c2)
\Edge[Direct,bend=-45](t6)(c2)
\Edge[Direct](c7)(c7a)
\Edge[Direct](c8)(c7)
\Edge[Direct](c9)(c8)
\Edge(c10)(c9)
\Edge[Direct](c11)(c10)
\Edge[Direct](c12)(c11)
\Edge[Direct,bend=-45](c7a)(c8)
\Vertex[x=0,y=-17,size=1,Pseudo]{ebd3}
\Edge[bend=-25](c7)(ebd3)
\Vertex[x=2,y=-17,size=1,Pseudo]{ebd4}
\Edge[bend=-25](c8)(ebd4)
\Edge[Direct,bend=-25](ebd4)(c10)
\Vertex[x=4,y=-17,size=1,Pseudo]{ebd5}
\Edge[Direct,bend=-25](ebd5)(c11)
\Edge[Direct,bend=-45](c10)(c12)
\Edge[Direct,bend=-45](c11)(b1)
\Edge[Direct,bend=-45](c11)(b2)
\Vertex[x=13,y=-17,size=1,Pseudo]{ebd2}
\Edge[bend=-35](c11)(ebd2)
\Edge[Direct,bend=-45](c11)(b5)
\Edge[Direct,bend=-45](c11)(b6)
\Edge(b1)(c12)
\Edge[Direct,bend=-45](b2)(c12)
\Vertex[x=13,y=-15,size=1,Pseudo]{ebd1}
\Edge[Direct,bend=-35](ebd1)(c12)
\Edge[Direct,bend=-45](b5)(c12)
\Edge[Direct,bend=-45](b6)(c12)
\Edge[Direct](c7a)(c6a)
\Edge[Direct](t1)(d1-f)
\Edge[Direct](d1-f)(d1-s)
\Vertex[x=10,y=-7.5,Pseudo]{td1-1}
\Edge(d1-s)(td1-1)
\Edge[Direct](t2)(d2-f)
\Edge[Direct](d2-f)(d2-s)
\Vertex[x=12,y=-7.5,Pseudo]{td2-1}
\Edge(d2-s)(td2-1)
\Edge[Direct](t5)(d5-f)
\Edge[Direct](d5-f)(d5-s)
\Vertex[x=16,y=-7.5,Pseudo]{td5-1}
\Edge(d5-s)(td5-1)
\Edge(t6)(d6-f)
\Edge(d6-f)(d6-s)
\Vertex[x=20,y=-7.5,Pseudo]{td6-1}
\Edge(d6-s)(td6-1)
\Edge[Direct](d1-l)(b1)
\Edge[Direct](d1-bl)(d1-l)
\Vertex[x=10,y=-8.5,Pseudo]{bd1-2}
\Edge[Direct](bd1-2)(d1-bl)
\Edge[Direct](d2-l)(b2)
\Edge[Direct](d2-bl)(d2-l)
\Vertex[x=12.25,y=-8.5,Pseudo]{bd2-1}
\Vertex[x=12,y=-8.5,Pseudo]{bd2-2}
\Edge[Direct](bd2-2)(d2-bl)
\Edge[Direct](d5-l)(b5)
\Edge[Direct](d5-bl)(d5-l)
\Vertex[x=16,y=-8.5,Pseudo]{bd5-2}
\Edge[Direct](bd5-2)(d5-bl)
\Edge(d6-l)(b6)
\Edge(d6-bl)(d6-l)
\Vertex[x=20,y=-8.5,Pseudo]{bd6-2}
\Edge(bd6-2)(d6-bl)
\node[thick,cloud, cloud puffs=25, cloud ignores aspect, minimum width=6cm, minimum height=16cm, align=center, draw] (cloud) at (20cm, -6cm) {};
\node at (10,-3.75) {\Large{($D=(V,E),\mbox{$\mathbb{T}$}$)}};
\end{tikzpicture}}
    \caption{The reduction from \otomrtt{} to \mrtt{} used in the proof of Theorem~\ref{th:inapprox-mrtt} (the edges in  $D$ are not shown, unless they coincide with one of the shown edges). All the edges without arrows are present in both directions (for instance, both $(t,x_{2})$ and $(x_{2},t)$ are included in the set of edges, while only $(t,u_{r-1})$ is included in the set of edges).} 
    \label{fig:stronglytemporalisablenonapproximability}
\end{figure}
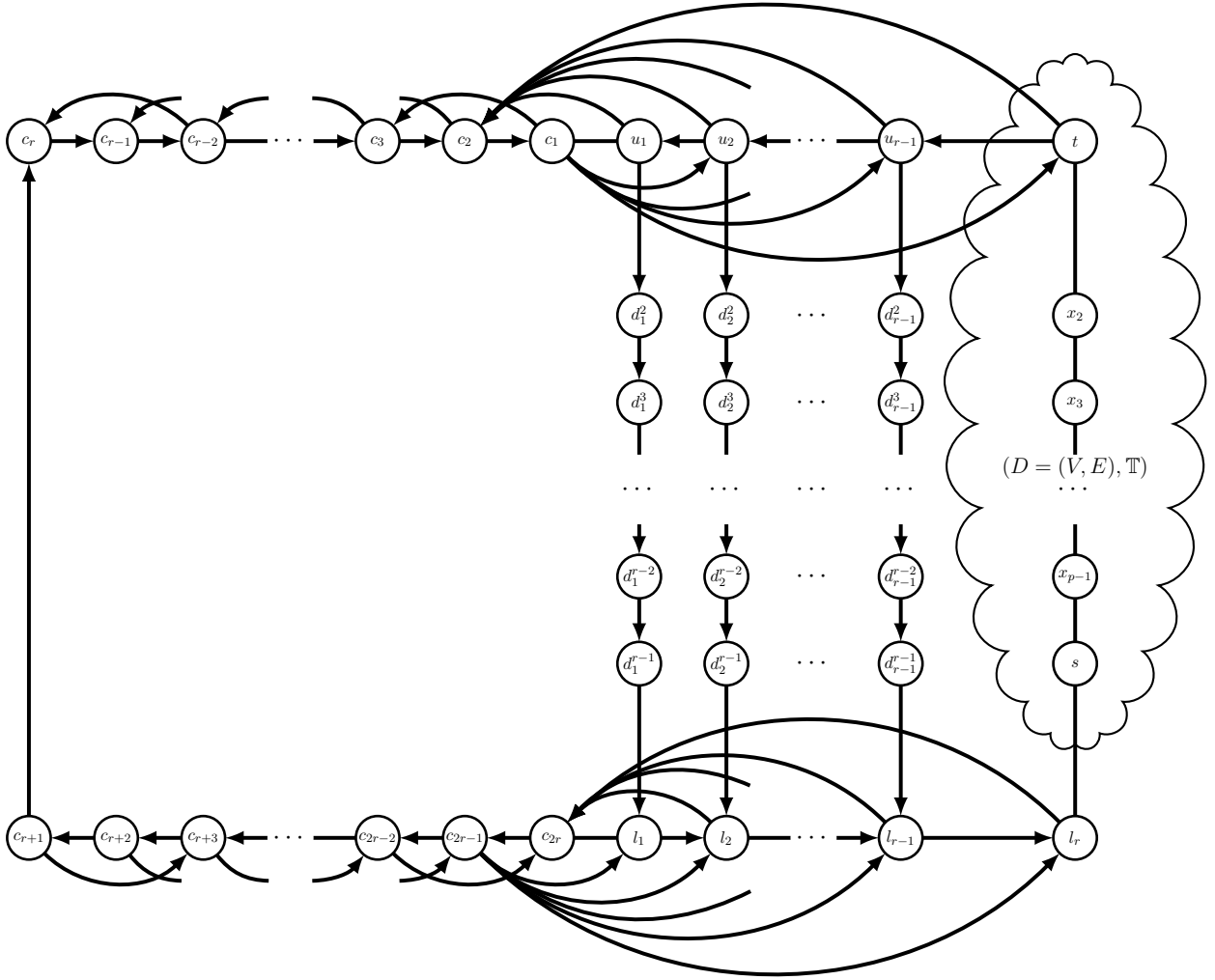

The construction of the trip network $(D_{r},\tnet_{r})$ in the proof of the above theorem can be adapted in order to prove the following inapproximability results for both the \mrtt{} and the \ssmrtt{} problem, in the case of strongly temporalisable trip networks.

\begin{theorem}\label{th:inapprox-mrtt}
  Unless $P=NP$, the \mrtt{} problem cannot be approximated within a factor less than $\frac{\sqrt{n}}{12}$ even if the input trip network is strongly temporalisable.
\end{theorem}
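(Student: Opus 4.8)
The plan is a gap reduction from \otomrtt{}, which is \nptime-complete by Theorem~\ref{thm:otomrtthard}, to \mrtt{} restricted to strongly temporalisable trip networks; since \mrtt{} lies in \npotime{} (as observed in the introduction), this gives the claimed inapproximability. Starting from an instance $\langle(D,\tnet),s,t\rangle$ of \otomrtt{} I would build, in polynomial time, a \emph{strongly temporalisable} trip network $(D',\tnet')$ on $n$ nodes whose maximum $\tau$-reachability is $\Theta(n^2)$ if $(D,\tnet)$ is $(s,t)$-temporalisable and $O(n^{3/2})$ otherwise. Keeping the constants explicit (as in the count at the end of the proof of Theorem~\ref{thm:pairscheduleconnected}), the ratio between the two cases would exceed $\frac{\sqrt n}{12}$ on the produced instances, so a $\frac{\sqrt n}{12}$-approximation algorithm for \mrtt{} would decide \otomrtt{} in polynomial time, forcing $\mathrm P=\mathrm{NP}$.

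The network $(D',\tnet')$ would be the trip network $(D_r,\tnet_r)$ of Theorem~\ref{thm:pairscheduleconnected} (see Figure~\ref{fig:stronglytemporalisablenonapproximability}), with $r$ a sufficiently large polynomial in $|\tnet|+|V(D)|$, in which one of the $r$ descending gadgets is replaced by a block $X$ built around the given instance: $X$ contains $(D,\tnet)$ together with $\Theta(r^2)$ auxiliary ``feeding'' nodes pointing to $s$ and $\Theta(r^2)$ auxiliary ``fed'' nodes reached from $t$, so that $X$ carries the bulk of the $n=\Theta(r^2)$ nodes, and $X$ is wired so that (i)~when $(D,\tnet)$ is $(s,t)$-temporalisable a single temporalisation connects every feeding node to every fed node along a temporal $s$-to-$t$ path in $D$, i.e.\ $\Theta(r^4)$ pairs, while (ii)~when it is not, no temporalisation crosses $X$ this way. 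On top of this I would add the auxiliary edges and one-edge trips shown in Figure~\ref{fig:stronglytemporalisablenonapproximability} (the $u_i$-chain, the $l_i$-chain and the bidirectional ``backbone'' through $X$) whose sole purpose is to provide, for \emph{every} ordered pair of nodes, \emph{some} temporalisation connecting them, i.e.\ to make $(D',\tnet')$ strongly temporalisable whatever the answer to the \otomrtt{} instance. Verifying strong temporalisability is then a case analysis over pairs of nodes, in the style of Table~\ref{tbl:pair-scheduleconnectivity}.

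The ``yes'' direction would be the easy one: given a temporalisation $\sigma$ of $\tnet$ making $t$ $\sigma$-reachable from $s$, I would combine $\sigma$ on the $D$-part of $X$ with the schedule of the remaining trips used in the strong-temporalisability argument for the column hosting $X$, obtaining a temporalisation of $(D',\tnet')$ under which all $\Theta(r^2)$ feeding nodes reach $s$, $s$ reaches $t$, and $t$ reaches all $\Theta(r^2)$ fed nodes, hence $\Omega(r^4)=\Omega(n^2)$ connected pairs. The ``no'' direction is where the work lies: assuming $(D,\tnet)$ is not $(s,t)$-temporalisable, I would re-run the argument of the Claim in the proof of Theorem~\ref{thm:pairscheduleconnected} — any walk between two nodes lying in different descending columns is forced through the lower gadget, then $c_{r+1}$, then $c_r$, then the upper gadget, so that for a fixed $\tau$ at most one ordered pair of columns is crossable, and inside $X$ a feeding node cannot reach a fed node since that would require a temporal $s$-to-$t$ path in $D$ — together with a check that the added backbone and chain edges open no new productive route. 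This caps the $\tau$-reachability of every temporalisation at $O(r^3)=O(n^{3/2})$ and, with the constant made explicit as in Theorem~\ref{thm:pairscheduleconnected}, below $\frac{\sqrt n}{12}$ times the ``yes'' value.

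I expect the main obstacle to be the joint design of the auxiliary detour edges and of the block $X$ so that two conflicting demands are met at once: the whole network must be strongly temporalisable even when the formula is unsatisfiable — and the instance of Theorem~\ref{thm:otomrtthard} is, on its own, essentially a DAG, hence very far from being temporally strongly connected — and yet no single temporalisation may use a detour (in particular, the shortcut from the bottom of $X$ back up through the frame's lower and upper gadgets to $t$) to connect many feeding nodes to many fed nodes, which would collapse the gap; the point that forces this is that any such shortcut must itself traverse $D$ from $s$ to $t$. Getting this structural balance right, and then keeping the reachability bounds sharp enough to land on the constant $12$ rather than an unspecified polynomial factor, is the delicate part; the strong-temporalisability case analysis and the re-derivation of the Claim are otherwise routine adaptations of the proof of Theorem~\ref{thm:pairscheduleconnected}.
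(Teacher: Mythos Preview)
Your proposal is a plan rather than a proof: you explicitly flag the ``main obstacle'' (designing $X$ and the detour edges so that strong temporalisability holds while no single temporalisation can bypass the $s$-to-$t$ obstruction via the frame) and do not resolve it. That is a genuine gap, not just a missing routine check. Concretely, if $X$ replaces a single column and carries $\Theta(r^2)$ feeding and fed nodes, then in the ``no'' case a temporalisation could make $X$'s column simultaneously the $i_{\min}$ and the $i_{\max}$ column; the frame then provides a route from the bottom of $X$ back to its top, and whether this lets feeding nodes reach fed nodes depends entirely on where in $X$ you attach the $u$- and $l$-connections relative to $s$, $t$, and the fed/feeding layers --- details you never pin down. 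Your assertion that ``any such shortcut must itself traverse $D$ from $s$ to $t$'' is exactly the property that needs to be engineered and verified, and it is not obvious.

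The paper's construction sidesteps this difficulty by an inversion of roles that you may find instructive. It does \emph{not} add $\Theta(r^2)$ feeding/fed nodes inside a block; instead it sets $r=p+1$ (so $r$ is linear in the \otomrtt{} instance, not a large polynomial), identifies the nodes of $(D,\tnet)$ with the $r$-th descending column itself, and keeps the other $r-1$ columns (already $\Theta(r^2)$ nodes) as the payload. Three replacement trips $T_U$, $T_L$, $T_{\uparrow\downarrow}$ are introduced so that, in the ``yes'' case, every node of columns $1,\ldots,r-1$ can reach every other such node by first descending to its $l$-node, sliding along $T_L$ to $l_r$, going \emph{through} the $(s,t)$-path inside $D$, sliding along $T_U$ from $t=u_r$ to any $u_j$, and descending again --- yielding $((r-1)r)^2$ pairs. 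In the ``no'' case the bridge through column $r$ is blocked, and since \emph{every} cross-column path must use it (this is what the added chains in $T_U$ and $T_L$ and the structure of $T_{\uparrow\downarrow}$ enforce), the Claim from Theorem~\ref{thm:pairscheduleconnected} applies essentially unchanged to columns $1,\ldots,r-1$, capping reachability at $3rn+7r^2(r-1)$. The \otomrtt{} instance thus plays the role of a mandatory \emph{bridge} for the frame, not of an endpoint block with its own bulk --- which is precisely why there is no bypass-through-the-frame issue to worry about, and why the constants come out cleanly enough to give the factor $\sqrt{n}/12$.
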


\begin{proof}
  As in the proof of Theorem~\ref{thm:mrtthard}, we use the gap technique by reducing in polynomial time the \otomrtt{} decision problem to the \mrtt{} problem. Consider an instance $\langle(D=(V,E),\tnet),s,t\rangle$ of the \otomrtt{} problem, where $V=\{t=x_1,\ldots,x_p=s\}$ (without loss of generality, we assume that $p>22$). We then define a trip network $(D'=(V',E'),\tnet')$ as follows (see Figure~\ref{fig:stronglytemporalisablenonapproximability}). Let $(D_{r}=(V_{r},E_{r}),\tnet_r)$, with $r=p+1$, be the trip network constructed in the proof of Theorem~\ref{thm:pairscheduleconnected} (note that $r>23$): in the following, in order to more easily define $E'$ we identify each node $x_{i}\in V$ with the node $d_{r}^{i}$ of the last descending gadget of $D_{r}$. Note that $l_r=d_{r}^{r}$ is not a node in $V$.
  We then set $V'=V_r$ and
  \begin{eqnarray*}
  E' & = & E_r\cup \{(d_r^i,d_r^j) : (x_i,x_j)\in E\} \cup \{(d_r^{i+1},d_r^{i}) : i\in [r-1]\}\\
  & & \cup \{(u_{i+1},u_i) : i\in [r-1]\}\cup \{(l_i,l_{i+1}) : i\in [r-1]\}\cup \{(u_{1},c_{1}), (c_{2r},l_{1})\}
  \end{eqnarray*}
  Note that, according to the definition of $V'$ and $E'$, each trip in $\tnet$ can be considered as a walk in $D'$. We then set \[\tnet'=\tnet\cup\tnet_r\setminus\{T_r^U,T_r^L,T_r^{\downarrow\mathrm{l}},T_r^{\downarrow\mathrm{r}}\}\cup\{T_U,T_L,T_{\uparrow\downarrow}\},\]
  where $T_U,T_L,T_{\uparrow\downarrow}$ are the following three trips.
  
  \begin{itemize}
  \item $T_U=\langle t,u_{r-1},\dots,u_1,c_1,t,c_2, c_1, c_3, c_2, \dots, c_{r - 1},c_{r - 2}, c_{r}$, $c_{r-1}\rangle$ (intuitively, $T_{U}$ replaces $T^U_r$: it first visits $t=u_{r},\ldots,u_1$, it then goes to $c_{1}$, and it finally continues exactly as $T^U_r$).
  
  \item $T_L=\langle c_{r+2},c_{r+1}, c_{r+3}, c_{r+2},\dots,c_{2r-2},c_{2r}$, $c_{2r -1}, l_r, c_{2r},l_1,l_2,\ldots,l_r\rangle$ (intuitively, $T_{L}$ replaces $T^L_r$: it first starts exactly as $T^L_r$, and it then visits $l_1,\ldots,l_r$).
  
  \item $T_{\uparrow\downarrow}=\langle s,d_r^{r-2},\ldots,t,d_r^2,\ldots,d_r^{r-2},s,l_r,s\rangle$ (intuitively, $T_{\uparrow\downarrow}$ replaces both $T_r^{\downarrow\mathrm{l}}$ and $T_r^{\downarrow\mathrm{r}}$).
  \end{itemize}

\medskip
\noindent\textbf{$(D',\tnet')$ is strongly temporalisable} (even if $(D,\tnet)$ is not). The proof is similar to the one proving that $(D_r,\tnet_r)$ is strongly temporalisable (see the proof of Theorem~\ref{thm:pairscheduleconnected}). Indeed, whenever $h=r$ or $k=r$ in Table~\ref{tbl:pair-scheduleconnectivity}, we can replace $T^U_h$ or $T^U_k$ by $T_U$ (respectively, $T^L_h$ or $T^L_k$ by $T_L$ and $T_h^{\downarrow\mathrm{l}}$ or $T_k^{\downarrow\mathrm{l}}$ by $T_{\uparrow\downarrow}$) in the schedules included in the table. Since $T^U_r$ (respectively, $T^L_r$ and $T_r^{\downarrow\mathrm{l}}$) is included in $T_U$ (respectively, $T_L$ and $T_{\uparrow\downarrow}$), by doing so we have that all the reachability properties of the table are still satisfied apart from the last case of the last cell of the table itself, with $h=k=r$. However, in this case, for any temporalisation $\tau$, we have that $d_r^{l_2}\in V^{\downarrow}$ is $\tau$-reachable from $d_r^{l_1}\in V^{\downarrow}$ with $l_{2} < l_{1}$, since we can just use the trip $T_{\uparrow\downarrow}$ (which allows to go up from $d_r^{l_1}$ to $d_r^{l_2}$).

\medskip
\textbf{If $(D,\tnet)$ is $(s,t)$-temporalisable, then there exists a temporalisation $\tau'$ of $(D',\tnet')$ with $\tau'$-reachability at least $((r-1)r)^2$}. To this aim, first note that, for any temporalisation $\tau'$ of $(D',\tnet')$, if $\tau'(T_{\uparrow\downarrow}) = t'$, then the trip $T_{\uparrow\downarrow}$ arrives at $l_{r}$ at time $t'+2r-3$, and terminates in $s$ at time $t'+2r-2$. If $(D,\tnet)$ is $(s,t)$-temporalisable, then there exists a temporalisation $\tau$ of $(D,\tnet)$ such that $t$ is $\tau$-reachable from $s$. Let $P$ be any temporal path in $G[D,\tnet,\tau]$ from $s$ to $t$, and let $t_{s}$ (respectively, $t_{a}$) be the starting (respectively, arrival) time of $P$ from $s$ (respectively, in $t$). We then define a temporalisation $\tau'$ of $(D',\tnet')$ as follows. For any $T\in\tnet{}$, $\tau'(T)=\tau(T)$. Moreover, for any $h\in[r-1]$, $\tau'(T_{h}^{\downarrow\mathrm{l}})=t_{s}-2r+1$, and $\tau'(T_{L})=t_s-1-|T_L|=t_s-3r$. This allows the trips $T_{h}^{\downarrow\mathrm{l}}$ to ``meet'' the trip $T_{L}$ in $l_{h}$ at time $t_{s}-(r-h)-1$: note that $T_{L}$ arrives in $l_r$ at time $t_{s}-1$. Hence, we set $\tau'(T_{\uparrow\downarrow})=t_{s}-2r+2$ so that $T_{\uparrow\downarrow}$ arrives in $l_{r}$ also at time $t_{s}-1$. By using the (last edge of the) trip $T_{\uparrow\downarrow}$, and then the path $P$, we can arrive in $t$ at time $t_{a}$. By setting $\tau'(T_{U})=t_{a}$ and, for any $h\in[r-1]$, $\tau'(T_{h}^{\downarrow\mathrm{r}})=t_{a}+r-h$, we can arrive at any node $d_{h}^{k}$ at time $t_{a}+r-h+k-1$. In other words, we have shown that all nodes of the first $r-1$ descending gadgets are $\tau'$-reachable one from the other. That is, the $\tau'$-reachability is at least $((r-1)r)^2$.

\medskip
\textbf{If $(D,\tnet)$ is not $(s,t)$-temporalisable, then the $\tau'$-reachability of any temporalisation $\tau'$ of $(D',\tnet')$ is at most $3rn+7r^2(r-1)$}. Let $\tau'$ be a temporalisation of $(D',\tnet')$. First note that $\tau'$ induces a temporalisation $\tau$ of $(D,\tnet)$. Since $(D,\tnet)$ is not $(s,t)$-temporalisable, we have that $t$ is not $\tau$-reachable from $s$. Moreover, since the edge $(l_{r}, s)$ is the last edge in the trip $T_{\uparrow\downarrow}$, it cannot be used before the other edges in this trip. As a consequence, $t$ is not $\tau''$-reachable from $l_r$, where $\tau''$ is the temporalisation induced by $\tau'$ on $(D',\tnet\cup \{T_{\uparrow\downarrow}\})$. The topology of $D'$ thus implies that, for all $i,j\in [r]$, all temporal paths from $l_i$ to $u_j$ in $G[D',\tnet',\tau']$ must pass through the nodes $c_{2r},c_{2r-1},\ldots,c_{r+1}$, the edge $(c_{r+1},c_r)$, and the nodes $c_{r},c_{r-1},\ldots,c_{1}$ .

Let $T_{i_{\min}}^L$ be one of the trips with minimum starting time according to $\tau'$ among all the trips in the lower gadget ($i_{\min}=r$ if $T_L$ is the only trip with minimum starting time), and let $T_{i_{\max}}^U$ be one of the trips with maximum starting time according to $\tau'$ among all the trips in the upper gadget ($i_{\max}=r$ if $T_{U}$ is the only trip with maximum starting time). Similar to the proof of Theorem \ref{thm:pairscheduleconnected}, we can then show that there is no temporal path in $G[D',\tnet',\tau']$  from $l_h$ to $c_{r+1}$ for $h\in [r]\setminus\{i_{\min}\}$ nor from $c_r$ to $u_k$ for $k\in [r]\setminus\{i_{\max}\}$. Note that the additional part $\langle c_{2r},l_1,\ldots,l_r\rangle$ of $T_L$ (respectively, $\langle t,u_{r-1},\dots,u_1,c_1\rangle$ of $T_U$), compared to $T^L_r$ (respectively, $T^{U}_{r}$), does not change the reasoning as it comes at the end (respectively, the beginning) of the trip and, in particular, after edge $(l_r,c_{2r})$ (respectively, before the edge $(c_{1},l_{1})$).

We can thus similarly conclude that $d_{h_{2}}^{l_{2}}$ is not $\tau'$-reachable from $d_{h_{1}}^{l_{1}}$ for all $h_{1},h_{2}\in [r-1]$ with $h_{1}\neq i_{\min}$ or $h_{2}\neq i_{\max}$ and all $l_{1},l_{2}$ with $1<l_{1},l_{2}<r$. Note that the situation is different from the previous construction for $l_{1}=l_{2}=r$ or $l_{1}=l_{2}=1$ as $T_L$ makes $d_{h_2}^{r}=l_{h_{2}}$ $\tau'$-reachable from $d_{h_1}^{r}=l_{h_{1}}$ for $h_{1}<h_{2}$ and that $T_U$ makes $d_{h_{1}}^{1}=u_{h_1}$ $\tau'$-reachable from $d_{h_{2}}^{1}=u_{h_{2}}$ for $h_{1}<h_{2}$. Overall, this means that only nodes in
\[\{d_{h_1}^{1},\ldots d_{h_1}^{r},l_{1},\ldots,l_{r},c_{1},\ldots,c_{2r},u_{1},\ldots,u_{r},d_{i_{\max}}^{1},\ldots,d_{i_{\max}}^{r},d_{r}^{1},\ldots,d_{r}^{r}\}\]
can be $\tau'$-reachable from $d_{h_{1}}^{l_{1}}$ for $h_{1}\in [r-1]$ and $l_{1}\in [r]$. Thus, the nodes in the first $r-1$ descending gadgets have $\tau'$-reachability at most $7r$. The other $3r$ nodes have  $\tau'$-reachability at most $n$.

\medskip
\textbf{The MRTT problem cannot be approximated within a factor less than $\frac{\sqrt{n}}{12}$}. We now prove that any polynomial-time algorithm $\mathcal{A}$ solving \mrtt{} with approximation ratio $\rho < \frac{\sqrt{n}}{12}$ would allow us to decide in polynomial-time whether $(D,\tnet)$ is $(s,t)$-temporalisable. As this latter problem is NP-complete, as stated by Theorem~\ref{thm:otomrtthard}, this will conclude the proof of the theorem. If $(D,\tnet)$ is $(s,t)$-temporalisable, then then there exists a temporalisation of $(D',\tnet')$ whose reachability is at least $((r-1)r)^2$. This implies that $\mathcal{A}$, with input the trip network $(D',\tnet')$, has to provide a temporalisation $\tau'$ with $\tau'$-reachability at least $\frac{((r-1)r)^2}{\rho} > \frac{12r^2(r-1)^2}{r+1} > 11r^2(r-1)$ (note that $\sqrt{n}=\sqrt{r^2+2r}<r+1$ and $\frac{r-1}{r+1}=1-\frac{2}{r+1} > \frac{11}{12}$ for $r>23$). On the other hand, if $(D,\tnet)$ is not $(s,t)$-temporalisable, then the $\tau'$-reachability of any temporalisation $\tau'$ of $(D',\tnet')$ is at most $3rn+7r^2(r-1) = 10r^3-r^2 < 11r^2(r-1)$ (note that $r^3 > 10r^2$ for $r>10$). In summary, $(D,\tnet)$ is $(s,t)$-temporalisable if and only if $\mathcal{A}$, with input the trip network $(D',\tnet')$, returns a temporalisation whose reachability is greater than $11r^2(r-1)$.\qed
\end{proof}

\begin{theorem}\label{th:inapprox-ssmrtt}
  Unless $P=NP$, the \ssmrtt{} problem cannot be approximated within a factor less than $\frac{\sqrt{n}}{12}$ even if the input trip network is strongly temporalisable.
\end{theorem}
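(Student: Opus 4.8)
The plan is to reduce the \otomrtt{} problem (which is NP-complete by Theorem~\ref{thm:otomrtthard}) to \ssmrtt{}, re-using almost verbatim the reduction from the proof of Theorem~\ref{th:inapprox-mrtt}; the single new ingredient is the choice of an appropriate source node. Given an \otomrtt{} instance $\langle(D,\tnet),s,t\rangle$ with $p=|V|>22$, I would build the strongly temporalisable trip network $(D',\tnet')$ of that proof (so that $r=p+1$ and $n=|V'|=r^2+2r$), and take as the \ssmrtt{} source the node $\sigma=d_1^2$, i.e.\ an interior node of the first descending gadget (it is interior because $r>23$, so $1<2<r$). The reason this works is that both reachability bounds established for Theorem~\ref{th:inapprox-mrtt} are already phrased per source node, so they transfer immediately to a single-source objective.

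First I would recall, without reproving it, that $(D',\tnet')$ is strongly temporalisable (this is shown in the proof of Theorem~\ref{th:inapprox-mrtt}). Then I would argue the two directions. Completeness: if $(D,\tnet)$ is $(s,t)$-temporalisable, the temporalisation $\tau'$ exhibited there makes all $(r-1)r$ nodes of the first $r-1$ descending gadgets mutually $\tau'$-reachable, hence $\sigma$ $\tau'$-reaches all of them and the optimal $\sigma$-reachability of $(D',\tnet')$ is at least $(r-1)r$. Soundness: if $(D,\tnet)$ is not $(s,t)$-temporalisable, the same proof shows that for \emph{every} temporalisation the only nodes reachable from a node $d_{h_1}^{l_1}$ with $h_1\in[r-1]$ belong to the set $\{d_{h_1}^1,\dots,d_{h_1}^r,l_1,\dots,l_r,c_1,\dots,c_{2r},u_1,\dots,u_r,d_{i_{\max}}^1,\dots,d_{i_{\max}}^r,d_r^1,\dots,d_r^r\}$, which has at most $7r$ elements; since $\sigma=d_1^2$ has $h_1=1\in[r-1]$, the $\sigma$-reachability of any temporalisation of $(D',\tnet')$ is at most $7r$.

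The gap step is then a short computation. Suppose $\mathcal{A}$ were a polynomial-time $\rho$-approximation algorithm for \ssmrtt{} with $\rho<\sqrt{n}/12$, and run it on $((D',\tnet'),\sigma)$. If $(D,\tnet)$ is $(s,t)$-temporalisable, then $\mathcal{A}$ must output a temporalisation with $\sigma$-reachability at least $(r-1)r/\rho > 12(r-1)r/\sqrt{n} > 12(r-1)r/(r+1) > 7r$, where I used $\sqrt{n}=\sqrt{r^2+2r}<r+1$ and $12(r-1)>7(r+1)$ for $r>23$; if $(D,\tnet)$ is not $(s,t)$-temporalisable, then $\mathcal{A}$ outputs at most $7r$. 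So testing whether the returned value exceeds $7r$ decides \otomrtt{} in polynomial time, forcing $\ptime=\nptime$.

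I expect the ``hard part'' to be essentially a sanity check rather than a real obstacle: one must make sure the two bounds imported from the proof of Theorem~\ref{th:inapprox-mrtt} are genuinely localised at $\sigma$. The completeness bound only needs $\sigma$ to be one of the mutually reachable descending nodes, while the soundness bound only needs $\sigma$ to lie in a descending gadget indexed in $[r-1]$ and at an interior level --- which is precisely the regime in which that proof rules out reachability of interior descending nodes of other gadgets. Hence no part of the reduction and neither of the two bounds has to be reproved, and the theorem follows as a minor variant of the previous one.
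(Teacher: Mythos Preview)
Your proposal is correct and essentially identical to the paper's own proof: both reuse the construction of Theorem~\ref{th:inapprox-mrtt} verbatim, observe that the per-source bounds there (at least $r(r-1)$ in the yes-case, at most $7r$ in the no-case, for any source in one of the first $r-1$ descending gadgets) directly yield a gap for \ssmrtt{}, and finish with the same inequality $12r(r-1)/(r+1)>7r$ for $r>23$. The only cosmetic difference is that you explicitly name a concrete source $\sigma=d_1^2$, whereas the paper leaves the choice implicit; your small remark that $\sigma$ must lie ``at an interior level'' is slightly over-cautious, since the paper's $7r$ bound in fact applies to all $l_1\in[r]$.
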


\begin{proof}
The reduction is exactly the same as the one used in the proof of the previous theorem. According to that proof, if $(D,\tnet)$ is $(s,t)$-temporalisable, then there exists a temporalisation $\tau'$ of $(D',\tnet')$ such that any source in one of the first $r-1$ descending gadgets has $\tau'$-reachability at least $r(r-1)$. On the other hand, if $(D,\tnet)$ is not $(s,t)$-temporalisable, then, for any temporalisation $\tau'$ of $(D',\tnet')$, any source in one of the first $r-1$ descending gadgets has $\tau'$-reachability at most $7r$. Any polynomial-time algorithm $\mathcal{A}$ solving \ssmrtt{} with approximation ratio $\rho < \frac{\sqrt{n}}{12}$ would then allow us to decide, in polynomial-time, whether $(D,\tnet)$ is $(s,t)$-temporalisable, since $\frac{r(r-1)}{\rho}>\frac{12r(r-1)}{r+1}>11r>7r$ for $r>23$. This concludes the proof of the theorem.\qed
\end{proof}

\subsection{Symmetric and strongly temporalisable trip networks}
\label{sec:symmetricstronglytempooralisable}

Because of the last theorem, we now focus on symmetric \textit{and} strongly temporalisable trip networks.

\begin{fact}\label{obs:one-to-all}
Let $(D,\tnet)$ be a symmetric trip network. For any node $u$, there exists a schedule $S$ of $(D,\tnet)$ such that, for any node $v$ reachable from $u$ in $D$, $v$ is $S$-reachable from $u$.
\end{fact}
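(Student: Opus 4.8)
\medskip\noindent\emph{Proof plan.} The plan is to build $S$ incrementally by a breadth-first-style exploration of $D$ from $u$, where each step appends one reverse pair of trips to the schedule being built. Throughout I would maintain a set $X\subseteq V$ of nodes ``already reached'' (initialised to $\{u\}$) and a prefix $\sigma$ of the schedule made of an even number of trips grouped into reverse pairs (initialised empty), keeping the following invariant: letting $\Delta$ be the sum of the durations of the trips in $\sigma$, in the temporal graph induced by \emph{any} schedule extending $\sigma$, every node of $X$ is reached from $u$ by a temporal path whose last edge arrives no later than time $\Delta$ (note that $\Delta$ is exactly the starting time of whatever trip comes right after $\sigma$). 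The invariant holds initially, as $\Delta=0$ and $u$ is reachable from itself.

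The single step is: while there is an unscheduled reverse pair $\{T,\symtrip{$T$}\}$ such that $T$ visits some node of $X$ (equivalently \symtrip{$T$} does, since the two trips visit the same set of nodes, which I denote $V(T)$), append first $T$ and then \symtrip{$T$} to $\sigma$, and set $X\leftarrow X\cup V(T)$. This preserves the invariant: picking $x\in V(T)\cap X$, by the invariant $x$ is reached by time $\Delta$; the newly appended copy of $T$ starts exactly at time $\Delta$, so one leaves $x$ along $T$ and reaches the last node $a$ of $T$ at time $\Delta+\delta(T)$; then \symtrip{$T$} starts exactly at time $\Delta+\delta(T)$, so from $a$ one traverses \symtrip{$T$} in full and reaches \emph{every} node of $T$ by time $\Delta+2\delta(T)$, and since $T$ and \symtrip{$T$} have equal duration this is precisely the sum of durations of the trips now in $\sigma$. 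Since each step consumes one reverse pair, the process terminates; let $\sigma$ be the final prefix and $S$ any schedule extending it. It remains to show that $X$ then contains every node reachable from $u$ in $D$, which completes the proof. Indeed, if some reachable $v$ were missing, pick a walk $u=y_0,\dots,y_\ell=v$ in $D$ and the largest index $i$ with $y_i\in X$; then $y_{i+1}\notin X$ and the edge $(y_i,y_{i+1})$ lies in some trip $T$ of $\tnet$ (recall every edge of $D$ belongs to some trip). If the pair $\{T,\symtrip{$T$}\}$ were unscheduled, then $T$ visits $y_i\in X$ and the process would not have stopped; if it were scheduled, then $y_{i+1}\in V(T)\subseteq X$. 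Either way we get a contradiction.

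The step I expect to be the crux is the reason for appending \emph{reverse pairs} of trips rather than single trips — this is exactly where the symmetry hypothesis is used. With single trips, scheduling $T$ would let us add to $X$ only the nodes of $T$ lying on or after the point where $T$ first meets $X$, and the final contradiction above would then break in the case where that first meeting point lies strictly after $y_i$ along $T$, so that $y_{i+1}$ need not have been added. Running \symtrip{$T$} right after $T$ is what fixes this: we enter $T$ at $x$, run to its end, then walk \symtrip{$T$} back to its start, recovering \emph{all} of $V(T)$ regardless of the entry point. Two routine points would still need to be written out: that the timing always works because, in a schedule, each trip starts precisely when the preceding one ends, so the first edge of a trip may be traversed right after the last edge of the trip before it; and that appending the still-unscheduled trips to $\sigma$ in any order only adds temporal edges with larger starting times and so destroys none of the temporal paths guaranteed by the invariant.
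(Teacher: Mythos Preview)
Your proof is correct and shares the paper's core idea: schedule each trip together with its reverse so that once one node of $V(T)$ is reached, all of $V(T)$ becomes reachable. The paper organises this via a BFS tree rooted at $u$, scheduling at each level all trip pairs covering the tree edges into that level, and argues that for any target node $v'\in V(T)$ \emph{one} of $T$ or $\symtrip{$T$}$ contains a walk from the already-reached node $u'$ to $v'$. You instead process one reverse pair at a time greedily (any pair touching the current reached set), and use \emph{both} trips in sequence---ride $T$ to its end, then $\symtrip{$T$}$ back---to sweep all of $V(T)$. Your version avoids building the BFS tree and has a slightly cleaner termination argument; the paper's level-by-level processing makes the invariant a bit easier to state. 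Both are equally elementary, and your remark that the equal-duration claim is inessential (the new $\Delta$ is just $\Delta+\delta(T)+\delta(\symtrip{$T$})$) is correct.
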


\begin{proof}
We first note that, due to the symmetricity, any schedule $S$ of $(D,\tnet)$ is such that each node of a trip $T \in \tnet{}$ is $S$-reachable from any other node of $T$. Let $\tree{u}$ be a breadth-first search tree in the induced multidigraph $M$ rooted at $u$, whose height is $h_{\tree{u}}$. By using $\tree{u}$, we will now define a schedule $S$ of $(D,\tnet)$ such that, for any node $v$ in $\tree{u}$, $v$ is $S$-reachable from $u$. This will prove the fact. 

We will construct nested partial schedules $S_0,\ldots,S_{h_{\tree{u}}}=S$ where a larger and larger subset of $\tnet{}$ is scheduled. Given a partial schedule $S_\ell$, we say that an edge is ``covered'' by $S_\ell$ if it belongs to one of the trips scheduled in $S_\ell$. Similarly, a node is ``covered'' by $S_\ell$ if it is the head or the tail of an edge covered by $S_\ell$. At the beginning, we consider an empty schedule $S_0$. For each level $\ell$ of $\tree{u}$ with $\ell\in[h_{\tree{u}}]$, let $e_{1},\ldots,e_{k_\ell}$ be the edges connecting a node at level $\ell-1$ to a node at level $\ell$ which are not yet covered by $S_{\ell-1}$. Let $T_{\ell,1},\ldots,T_{\ell,k_\ell}$ be $k_\ell$ (not necessarily distinct) trips in \tnet{}, which contain the edges $e_{1},\ldots,e_{k_\ell}$, respectively.  Recall that $\symtrip{$T$}_{\ell,1},$\ldots$,\symtrip{$T$}_{\ell,k_\ell}$ denote their respective reverse trips. Let $\tnet_\ell$ denote the set of trips in $\{T_{\ell,1}, \symtrip{$T$}_{\ell,1},\ldots,T_{\ell,k_\ell},\symtrip{$T$}_{\ell,k_\ell}\}$. As the edges $e_1,\ldots,e_{k_\ell}$ were not covered by $S_{\ell-1}$, trips in $\tnet_\ell$ are not included in $S_{\ell-1}$. We can thus define $S_\ell$ as $S_{\ell-1}$ followed by the trips in $\tnet_\ell$ in an arbitrary order. Note that all edges from level $\ell-1$ to level $\ell$ are now covered by $S_\ell$. We continue similarly for the next levels and define $S=S_{h_{\tree{u}}}$ as the schedule obtained for the last layer.

To prove that any node $v$ in $\tree{u}$ is $S$-reachable, we show that the following invariant is preserved: after processing each level $\ell$, any node covered by $S_\ell$ is $S_\ell$-reachable. Consider an edge $e$ which is covered by $S_\ell$ but not by $S_{\ell-1}$ and let $v'$ be any extremity (head or tail) of $e$ which is not covered by $S_{\ell-1}$.
%its head or tail.
Let $T_{\ell,i},\symtrip{$T$}_{\ell,i}$ denote the trip pair in $\tnet_\ell$ that contains $e$. Consider the edge $e_i$ from level $\ell-1$ to $\ell$ that belongs to $T_{\ell,i}$.  The tail $u'$ of $e_i$ is either $u$ or the head of an edge from level $\ell-2$ to $\ell-1$. As such an edge is covered by $S_{\ell-1}$, $u'$ is thus $S_{\ell-1}$-reachable according to the invariant. As $T_{\ell,i}$ or $\symtrip{$T$}_{\ell,i}$ contains a walk from $u'$ to $v'$, and both $T_{\ell,i}$ and $\symtrip{$T$}_{\ell,i}$ are scheduled after $S_{\ell-1}$ in $S_\ell$, $v'$ is $S_\ell$-reachable. The conclusion follows from the fact that all nodes in $\tree{u}$ are covered by $S=S_{h_{\tree{u}}}$.\qed
\end{proof}

\begin{fact}\label{obs:all-to-one}
Let $(D,\tnet)$ be a symmetric trip network. For any node $u$, there exists a schedule $S$ of $(D,\tnet)$ such that, for any node $v$ such that $u$ is reachable from $v$ in $D$, $u$ is $S$-reachable from $v$.
\end{fact}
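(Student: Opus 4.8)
The plan is to prove Fact~\ref{obs:all-to-one} by mirroring the proof of Fact~\ref{obs:one-to-all}, reversing every direction. I would first observe that the set of nodes $v$ from which $u$ is reachable in $D$ equals the set of nodes from which $u$ is reachable in the induced multidigraph $M$ (every edge of $D$ occurs in some trip), and that these are exactly the nodes of a breadth-first search tree $\tree{u}$ rooted at $u$ in $M^R$, the multidigraph obtained from $M$ by reversing all edges. Writing $\ell$ for the level of a node in $\tree{u}$ (i.e.\ its distance to $u$ in $M$) and $h$ for the height of $\tree{u}$, an edge of $M$ underlying a tree edge of $\tree{u}$ goes from a level-$\ell$ node to a level-$(\ell-1)$ node, i.e.\ ``towards'' $u$.

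As in Fact~\ref{obs:one-to-all}, the workhorse will be that symmetricity lets a temporal path move freely inside a trip pair: if a trip pair $(T,\symtrip{$T$})$ is scheduled (in either internal order) right after a partial schedule, then every node of $T$ can temporally reach every other node of $T$ using a single one of $T,\symtrip{$T$}$. The only structural change is that the schedule must be grown \emph{from the levels closest to $u$ outwards}, with the trips covering low-level (near-$u$) edges ending up \emph{last}, so that a temporal path descending $\tree{u}$ from $v$ down to $u$ traverses its trips in increasing order of starting time. Concretely, I would build partial schedules $\check S_0,\ldots,\check S_h$ with $\check S_0$ empty: at step $\ell\in[h]$, take every edge of $M$ from a level-$\ell$ node to a level-$(\ell-1)$ node not yet covered by $\check S_{\ell-1}$, pick for each such edge $e_i=(a_i,b_i)$ a trip $T_{\ell,i}$ containing it, let $\tnet_\ell$ consist of all these $T_{\ell,i}$ together with their reverses $\symtrip{$T$}_{\ell,i}$, and set $\check S_\ell$ to be $\tnet_\ell$ (in any order) followed by $\check S_{\ell-1}$; finally let $S$ be $\check S_h$ followed by all remaining trips in any order. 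Since trips are always added in reverse-pairs and $e_i$ is uncovered, no trip of $\tnet_\ell$ already lies in $\check S_{\ell-1}$; moreover, prepending $\tnet_\ell$ shifts the starting times of all trips of $\check S_{\ell-1}$ by the same constant, so any temporal path using only trips of $\check S_{\ell-1}$ stays valid for $\tau_{\check S_\ell}$ (and hence for $\tau_S$).

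The heart of the argument will be the invariant: \emph{after step $\ell$, every node covered by $\check S_\ell$ can $\check S_\ell$-reach $u$}. For $\ell=1$ this is immediate, since $\tnet_1$ covers edges into $u$, so every node covered by $\check S_1$ lies on a trip pair one of whose members contains a walk to $u$. For the step ($\ell\ge 2$), a node $v'$ covered by $\check S_\ell$ but not by $\check S_{\ell-1}$ lies on a pair $(T_{\ell,i},\symtrip{$T$}_{\ell,i})$ of $\tnet_\ell$ chosen to cover an edge into the level-$(\ell-1)$ node $b_i$; one of $T_{\ell,i},\symtrip{$T$}_{\ell,i}$ contains a walk from $v'$ to $b_i$, and these two trips sit at the very front of $\check S_\ell$, so every edge of $\tnet_\ell$ has a strictly smaller starting time than every edge of $\check S_{\ell-1}$. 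As $b_i$ is at level $\ell-1$, the edge of $M$ from $b_i$ to its parent (from level $\ell-1$ to level $\ell-2$) is covered by $\check S_{\ell-1}$, hence $b_i$ is covered by $\check S_{\ell-1}$ and, by the induction hypothesis, $b_i$ can $\check S_{\ell-1}$-reach $u$ using only trips of $\check S_{\ell-1}$. Concatenating the temporal path $v'\rightsquigarrow b_i$ (which uses only $\tnet_\ell$-trips, so it finishes before any $\check S_{\ell-1}$-trip starts) with the temporal path $b_i\rightsquigarrow u$ (which uses only $\check S_{\ell-1}$-trips) yields a temporal path $v'\rightsquigarrow u$ for $\tau_{\check S_\ell}$; nodes already covered by $\check S_{\ell-1}$ are handled by the prepending remark. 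Finally, any node $v\neq u$ from which $u$ is reachable in $D$ sits at some level $\ell\in[h]$, hence (being the tail of the $M$-edge to its parent, which is covered by step $\ell$) is covered by $\check S_\ell$; by the invariant $v$ can $\check S_\ell$-reach $u$, and this path remains valid for $\tau_{\check S_h}$ and for $\tau_S$, since the trips of $\check S_\ell$ form a uniformly time-shifted block inside $\check S_h$, whose trips in turn form a prefix of $S$.

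The main obstacle, compared with Fact~\ref{obs:one-to-all}, will be exactly this reversal of the direction in which the schedule grows: there the invariant tracks reachability \emph{from} $u$ and the schedule grows by appending, which trivially preserves that reachability; here the invariant tracks reachability \emph{to} $u$, the root $u$ becomes covered only at the first step, and the schedule must grow by prepending, so one must check carefully that prepending preserves the already-built temporal paths and that the freshly added pairs sit early enough to let a node of $\tnet_\ell$ reach level $\ell-1$ before the lower-level trips fire. (Alternatively, the statement follows from Fact~\ref{obs:one-to-all} applied to the reversed trip network $(D^R,\tnet^R)$, which is again symmetric, together with the ``time-reversal'' sending a temporalisation $\tau$ of $(D^R,\tnet^R)$ to the temporalisation of $(D,\tnet)$ that assigns to each trip $T$ the starting time $-(\tau(\symtrip{$T$})+\delta(T))$: up to an overall shift this sends a schedule of $(D^R,\tnet^R)$ to the reverse schedule of $(D,\tnet)$ and exchanges temporal reachability from $u$ with temporal reachability to $u$.)
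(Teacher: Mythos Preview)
Your proposal is correct and is precisely the detailed argument the paper alludes to: the paper's own proof consists of the single sentence ``The proof is similar to the proof of Fact~\ref{obs:one-to-all}'', and what you have written is exactly that mirrored construction (BFS toward $u$, build the schedule by \emph{prepending} layers so that farther-away trip pairs fire first, and maintain the invariant that every covered node $\check S_\ell$-reaches $u$). Your care with the two points that change under reversal---prepending shifts the already-built block uniformly so earlier temporal paths survive, and the freshly added $\tnet_\ell$ block finishes before any $\check S_{\ell-1}$ trip starts---is exactly what is needed; the alternative one-line derivation via Fact~\ref{obs:one-to-all} applied to the reversed (still symmetric) trip network with time reversal is also valid.
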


\begin{proof}
The proof is similar to the proof of Fact~\ref{obs:one-to-all}.\qed
\end{proof}
\begin{corollary}\label{cor:connected}
Let $(D,\tnet)$ be a symmetric trip network. Then, $(D,\tnet)$ is strongly temporalisable if and only if $D$ is strongly connected.
\end{corollary}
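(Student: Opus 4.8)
The plan is to derive both implications directly from Facts~\ref{obs:one-to-all} and~\ref{obs:all-to-one}, together with the elementary observation that any temporal path projects onto a walk in the underlying multidigraph.

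For the ``only if'' direction, I would proceed as follows. Assume $(D,\tnet)$ is strongly temporalisable and fix an arbitrary ordered pair of nodes $(s,t)$. By definition there is a temporalisation $\tau$ and a temporal path $P=e_1,\ldots,e_k$ from $s$ to $t$ in $G[D,\tnet,\tau]$. Each temporal edge $e_i$ is induced by an edge of some trip in $\tnet$, hence it has the same tail and head as an edge of $D$; since consecutive temporal edges of $P$ meet head-to-tail, the corresponding edges of $D$ form a walk from $s$ to $t$, so $t$ is reachable from $s$ in $D$. As $(s,t)$ was arbitrary, $D$ is strongly connected. Note that symmetry is not used in this direction.

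For the ``if'' direction, assume $D$ is strongly connected and fix an arbitrary ordered pair $(s,t)$; then $t$ is reachable from $s$ in $D$. Applying Fact~\ref{obs:one-to-all} with $u=s$ yields a schedule $S$ of $(D,\tnet)$ such that every node reachable from $s$ in $D$ — in particular $t$ — is $S$-reachable from $s$. Since $S$ induces the temporalisation $\tau_S$, under which $t$ is $\tau_S$-reachable from $s$, the network $(D,\tnet)$ is $(s,t)$-temporalisable; as $(s,t)$ was arbitrary, $(D,\tnet)$ is strongly temporalisable. (One could equally invoke Fact~\ref{obs:all-to-one} with $u=t$; either fact suffices, and this is where symmetry enters.)

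There is essentially no hard step here: the corollary is a short combination of Facts~\ref{obs:one-to-all} and~\ref{obs:all-to-one}. The only point meriting a moment's care is the projection of a temporal path to a walk in $D$ in the ``only if'' direction, which is immediate once one recalls that each temporal edge of $G[D,\tnet,\tau]$ carries the tail and head of the edge of $D$ that induced it, and that, by the standing assumption that every node of $V$ and every edge of $E$ occurs in some trip, reachability in $D$ coincides with reachability in the induced multidigraph $M$ used in Fact~\ref{obs:one-to-all}.
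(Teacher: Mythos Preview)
Your proof is correct and follows essentially the same route as the paper: the ``only if'' direction projects a temporal path to a walk in $D$, and the ``if'' direction is a direct application of Fact~\ref{obs:one-to-all}. Your version is slightly more explicit about the projection step and helpfully notes that Fact~\ref{obs:all-to-one} would work equally well, but the argument is the same.
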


\begin{proof}
If $(D,\tnet)$ is strongly temporalisable, then, for any two nodes $u$ and $v$ in $D$, there exists a temporalisation $\tau_{u,v}$ of $(D,\tnet)$, such that $v$ is $\tau_{u,v}$-reachable from $u$. In other words, there exists a temporal path $P_{u\rightarrow v}$ from $u$ to $v$ in $G[D,\tnet, \tau_{u,v}]$. Hence, $v$ is reachable from $u$ in $D$. The converse implication is a direct consequence of Fact~\ref{obs:one-to-all}.\qed
\end{proof}

We now prove that the \mrtt{} problem remains NP-hard, even when we assume that the underlying multidigraph is strongly connected and that the trip network is symmetric (from the previous corollary, this implies that the trip networks is strongly temporalisable).

\begin{theorem}\label{th:sym-np}
The \mrtt{} problem problem is NP-hard, even if $(D,\tnet)$ is a symmetric trip network and $D$ is strongly connected.
\end{theorem}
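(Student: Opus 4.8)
The plan is to reduce \tsat{} in polynomial time to the decision version of \mrtt{} --- given a trip network $(D,\tnet)$ and an integer $k$, is the maximum $\tau$-reachability at least $k$? --- in such a way that the produced trip network $(D,\tnet)$ is always symmetric and has strongly connected underlying multidigraph $D$. By Corollary~\ref{cor:connected} such a network is automatically strongly temporalisable, so this reduction settles both cells labelled ``NP-hard'' in Table~\ref{tbl:results}. Since \mrtt{} is in \npotime{} (as already observed), only \nptime-hardness has to be proved.

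Given a \tsat{} formula $\Phi$ with variables $x_1,\dots,x_n$ and clauses $c_1,\dots,c_m$, I would assemble $(D,\tnet)$ from four kinds of blocks. \emph{(a) Variable blocks}: for each $x_i$ a small gadget containing a trip $T_i$; its reverse \symtrip{$T_i$} belongs to $\tnet$ by symmetry, and from a temporalisation $\tau$ we read a truth value $\alpha_\tau(x_i)$ according to whether a distinguished edge of $T_i$ appears before or after a distinguished edge of \symtrip{$T_i$} --- the internal offsets are chosen so that, apart from a measure-zero tie broken by convention, exactly one of $\alpha_\tau(x_i)=\true$ and $\alpha_\tau(x_i)=\false$ holds for every $\tau$. \emph{(b) Clause blocks}: for each $c_j$ two nodes $s_j,t_j$ and three parallel ``literal passages'' between them, the $h$-th passage consisting of two consecutive edges, one in $T_i$ and the next in \symtrip{$T_i$} (or vice versa, depending on the sign of the $h$-th literal of $c_j$, which involves variable $x_i$), so that a temporal path can cross the $h$-th passage exactly when the required order of $T_i$ and \symtrip{$T_i$} holds, i.e.\ exactly when $\alpha_\tau$ satisfies that literal. \emph{(c) A beacon block}: $q$ extra ``source beacons'' and $q$ extra ``sink beacons'', joined by one-edge trips so that each source beacon reaches every $s_j$ and every $t_j$ reaches every sink beacon; $q$ is a polynomially bounded parameter chosen large enough that failing to connect a single pair $(s_j,t_j)$ costs more reachability than can possibly be recovered elsewhere. \emph{(d) A return cycle} of one-edge trips through the endpoints of all blocks, making $D$ strongly connected. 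Every trip is inserted together with its reverse, so $(D,\tnet)$ is symmetric; we let $k$ be the exact $\tau$-reachability attained when all $m$ pairs $(s_j,t_j)$ --- hence all source-beacon/sink-beacon pairs --- are simultaneously connected.

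For the forward direction, given a satisfying assignment $\alpha$ I would choose $\tau$ realising $\alpha$ on the variable blocks, and use the freedom left on the remaining trips to schedule so that, for each $j$, the passage of one literal satisfying $c_j$ is temporally traversable, and the beacon edges are placed just before the $s_j$-parts and just after the $t_j$-parts; then every $(s_j,t_j)$ pair is connected, hence so is every source-beacon/sink-beacon pair, and the reachability reaches $k$. The converse is where the parameters do the work: if the maximum $\tau$-reachability is at least $k$, then by the choice of $q$ and $k$ every pair $(s_j,t_j)$ must be connected in the corresponding temporal graph; consequently, for each clause $c_j$ at least one literal passage is traversable under $\tau$, so the assignment $\alpha_\tau$ read off $\tau$ satisfies $\Phi$. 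Combined with membership in \npotime{}, this proves the theorem once $\ptime\neq\nptime$ is invoked.

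The main obstacle, and the place where most of the care goes, is exactly the side-effect of symmetry and of strong connectivity: the reverse trips and the return cycle introduce a host of backward edges, and one must prove that a temporal path from $s_j$ to $t_j$ genuinely cannot dodge the three literal passages of $c_j$, nor use the \emph{reverse} of a literal-passage trip to sidestep the ordering constraint it is meant to enforce. I would control this by (i) designing each clause block so that the three literal passages form the only connection between its $s_j$-side and its $t_j$-side; (ii) making the beacon block and the return cycle into one-way ``traps'' --- each edge of $D$ lies in exactly one trip pair, so once a temporal path leaves a clause block through them it cannot re-enter without re-using an already-used temporal edge; and (iii) a short case analysis, in the spirit of the consistency argument in the proof of Theorem~\ref{thm:otomrtthard}, showing that crossing the $h$-th passage via \symtrip{$T_i$} forces the same relative order of $T_i$ and \symtrip{$T_i$} as crossing it via $T_i$, so neither orientation leaks information the reduction does not intend. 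Verifying that this exhausts all detours in the symmetrised, strongly connected network is the technically delicate part; everything else --- polynomial size, symmetry, strong connectivity of $D$ (via Corollary~\ref{cor:connected}) --- is immediate.
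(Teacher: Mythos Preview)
Your high-level plan --- reduce from \tsat{} with variable gadgets encoding truth values, clause gadgets checking satisfaction, and an amplifying block --- matches the paper's architecture, but two concrete gaps keep the sketch from being a proof.

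First, the variable encoding is not consistent as described. You call the variable block ``a small gadget containing a trip $T_i$'', yet the clause passages are built from edges \emph{of that same} $T_i$ and its reverse, so $T_i$ must in fact thread through every clause containing $x_i$. Whether a given passage is temporally traversable then depends on $\tau(T_i)-\tau(\symtrip{$T$}_i)$ relative to a threshold fixed by the \emph{positions} of the two passage edges inside the trips; those positions differ from clause to clause, so nothing prevents a single $\tau$ from opening the $x_i$-passage in $c_1$ while closing it in $c_2$, breaking the intended truth-value semantics. The paper avoids this entirely by making the variable gadget a fixed seven-node widget with three trip pairs $T_i^{\mathrm t},T_i^{\mathrm f},T_i^{\mathrm a}$ whose mutual-exclusion property (activate $(t_i^1,t_i^2)$ or $(f_i^1,f_i^2)$, never both) is purely local and independent of how many clauses contain $x_i$; separate per-clause trips then link the four terminals $t_i^1,t_i^2,f_i^1,f_i^2$ to the relevant clause gadgets.

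Second, and more seriously, your amplification does not give the leverage needed for the converse. The beacons only penalise a missing $(s_j,t_j)$ connection; they do nothing to prevent an optimal temporalisation from realising that connection via an unintended route --- through the return cycle, through another clause block, or by riding a reverse trip back in. The paper's mechanism is different in kind: it hangs tail paths of length $L>(7n+m(m+3))^2$ off two hub nodes, so that any temporalisation with reachability at least $Q$ must connect \emph{every} tail node to \emph{every} node, and this forces eight global synchronisation identities (labelled C1--C8 in the proof) on the starting times of all top, bottom-variable and bottom-clause trips. Only once those identities hold can the detour analysis be carried out, each of the four possible escape routes being killed by a specific constraint among C1--C8. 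Your one-edge beacons and return cycle impose no such timing discipline, so the ``short case analysis'' you promise would have to exclude an unbounded family of detours with no synchronisation to lean on, and you have not indicated how that could be done.
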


\begin{proof}
We reduce \tsat{} to \mrtt{} as follows. Let us consider a \tsat{} formula $\Phi$, with $n$ variables $x_1, \dots, x_n$ and $m$ clauses $c_1, \dots, c_m$. Without loss of generality, we will assume that each variable appears positive in at least one clause and negative in at least one clause, that no literal appears in all clauses, and that there are at least two clauses. We set $l=\lceil(7n+m(m+3))^2/(m+2)\rceil+1$ and $L=(7n+m(m+3))^2+1$, and we define the digraph $D=(V,E)$ as the union of the following gadgets (in the following, for each edge included in $E$, its reverse edge is implicitly also included in $E$).

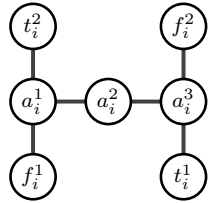
\begin{figure}[ht]
\centering{\SetVertexStyle[FillColor=white]
\begin{tikzpicture}
  \Vertex[x=2,y=0,label=$t_{i}^{1}$]{ti1}
  \Vertex[x=2,y=1,label=$a_{i}^{3}$]{ai3}
  \Vertex[x=2,y=2,label=$f_{i}^{2}$]{fi2}
  \Vertex[x=1,y=1,label=$a_{i}^{2}$]{ai2}
  \Vertex[x=0,y=1,label=$a_{i}^{1}$]{ai1}
  \Vertex[x=0,y=0,label=$f_{i}^{1}$]{fi1}
  \Vertex[x=0,y=2,label=$t_{i}^{2}$]{ti2}
  \Edge (ti1)(ai3)
  \Edge (ai3)(fi2)
  \Edge (ai3)(ai2)
  \Edge (ai2)(ai1)
  \Edge (fi1)(ai1)
  \Edge (ai1)(ti2)
\end{tikzpicture}}
\caption{The variable gadget in the reduction of \tsat{} to symmetric \mrtt{} (see the proof of Theorem~\ref{th:sym-np}).}
\label{fig:symvariablegadget}
\end{figure}

\begin{figure}[ht]
\centering{\SetVertexStyle[FillColor=white,TextFont=\scriptsize,MinSize=0.75\DefaultUnit]
\begin{tikzpicture}
  \Vertex[x=2.2,y=-0.2,label=$t_{i_1}^{1}$]{ti11}
  \Vertex[x=2.2,y=1,label=$a_{i_1}^{3}$]{ai13}
  \Vertex[x=2.2,y=2.2,label=$f_{i_1}^{2}$]{fi12}
  \Vertex[x=1,y=1,label=$a_{i_1}^{2}$]{ai12}
  \Vertex[x=-0.2,y=1,label=$a_{i_1}^{1}$]{ai11}
  \Vertex[x=-0.2,y=-0.2,label=$f_{i_1}^{1}$]{fi11}
  \Vertex[x=-0.2,y=2.2,label=$t_{i_1}^{2}$]{ti12}
  \Edge[style={dotted}](ti11)(ai13)
  \Edge[style={dotted}](ai13)(fi12)
  \Edge[style={dotted}](ai13)(ai12)
  \Edge[style={dotted}](ai12)(ai11)
  \Edge[style={dotted}](fi11)(ai11)
  \Edge[style={dotted}](ai11)(ti12)
  \begin{scope}[xshift=4cm]
  \Vertex[x=2.2,y=-0.2,label=$t_{i_2}^{1}$]{ti21}
  \Vertex[x=2.2,y=1,label=$a_{i_2}^{3}$]{ai23}
  \Vertex[x=2.2,y=2.2,label=$f_{i_2}^{2}$]{fi22}
  \Vertex[x=1,y=1,label=$a_{i_2}^{2}$]{ai22}
  \Vertex[x=-0.2,y=1,label=$a_{i_2}^{1}$]{ai21}
  \Vertex[x=-0.2,y=-0.2,label=$f_{i_2}^{1}$]{fi21}
  \Vertex[x=-0.2,y=2.2,label=$t_{i_2}^{2}$]{ti22}
  \Edge[style={dotted}](ti21)(ai23)
  \Edge[style={dotted}](ai23)(fi22)
  \Edge[style={dotted}](ai23)(ai22)
  \Edge[style={dotted}](ai22)(ai21)
  \Edge[style={dotted}](fi21)(ai21)
  \Edge[style={dotted}](ai21)(ti22)
  \begin{scope}[xshift=4cm]
  \Vertex[x=2.2,y=-0.2,label=$t_{i_3}^{1}$]{ti31}
  \Vertex[x=2.2,y=1,label=$a_{i_3}^{3}$]{ai33}
  \Vertex[x=2.2,y=2.2,label=$f_{i_3}^{2}$]{fi32}
  \Vertex[x=1,y=1,label=$a_{i_3}^{2}$]{ai32}
  \Vertex[x=-0.2,y=1,label=$a_{i_3}^{1}$]{ai31}
  \Vertex[x=-0.2,y=-0.2,label=$f_{i_3}^{1}$]{fi31}
  \Vertex[x=-0.2,y=2.2,label=$t_{i_3}^{2}$]{ti32}
  \Edge[style={dotted}](ti31)(ai33)
  \Edge[style={dotted}](ai33)(fi32)
  \Edge[style={dotted}](ai33)(ai32)
  \Edge[style={dotted}](ai32)(ai31)
  \Edge[style={dotted}](fi31)(ai31)
  \Edge[style={dotted}](ai31)(ti32)
  \end{scope}
  \end{scope}
  \begin{scope}[xshift=5cm]
  \Vertex[x=0,y=-1.5,label=$c_{j}^{1}$]{cj1}
  \Vertex[x=0,y=3.5,label=$c_{j}^{2}$]{cj2}
  \Edge (cj1)(ti11)
  \Edge (cj1)(fi21)
  \Edge (cj1)(ti31)
  \Edge (cj2)(ti12)
  \Edge (cj2)(fi22)
  \Edge (cj2)(ti32)
  \end{scope}
  \begin{scope}[xshift=2cm,yshift=3.5cm]
  \Vertex[x=0,y=-0.25,label=$e_{j}^{i_1}$]{ej1}
  \Vertex[x=0,y=0.6,label=$e_{j}^{i_2}$]{ej2}
  \Vertex[x=0,y=1.45,label=$e_{j}^{i_3}$]{ej3}
  \Edge (cj2)(ej1)
  \Edge (cj2)(ej2)
  \Edge (cj2)(ej3)
  \end{scope}
  \begin{scope}[xshift=5cm,yshift=5cm]
  \Vertex[x=0,y=0,label=$g_{j}^{1}$]{gj1}
  \Vertex[x=0.75,y=0,label=$\cdots$,style={color=white},size=0.4]{gji1}
  \Vertex[x=1.5,y=0,label=$g_{j}^{j-1}$]{gjj-1}
  \Vertex[x=2.5,y=0,label=$g_{j}^{j+1}$]{gjj+1}
  \Vertex[x=3.25,y=0,label=$\cdots$,style={color=white},size=0.4]{gji2}
  \Vertex[x=4,y=0,label=$g_{j}^{m}$]{gjm}
  \Edge (cj2)(gj1)
  \Edge (cj2)(gjj-1)
  \Edge (cj2)(gjj+1)
  \Edge (cj2)(gjm)
  \end{scope}
  \begin{scope}[xshift=-0.5cm,yshift=-1.5cm]
  \Vertex[x=1,y=0,label=$d_{j}^{1}$]{dj1}
  \Vertex[x=2,y=0,label=$d_{j}^{2}$]{dj2}
  \Vertex[x=3,y=0,label=$\cdots$,style={color=white},size=0.4]{dji}
  \Vertex[x=4,y=0,label=$d_{j}^{l}$]{djk}
  \Edge (cj1)(djk)
  \Edge (djk)(dji)
  \Edge (dji)(dj2)
  \Edge (dj2)(dj1)
  \end{scope}
\end{tikzpicture}}
\caption{The clause gadget in the reduction of \tsat{} to symmetric \mrtt{} (see the proof of Theorem~\ref{th:sym-np}), corresponding to the clause $c_j=x_{i_1}\vee \neg x_{i_2}\vee x_{i_3}$ (the dotted edges are included in the variable gadgets corresponding to the variables $x_{i_1}$, $x_{i_2}$, and $x_{i_3}$). Note that, for each $h\in[m]$ with $h\neq j$, $E$ includes also the edge $(c_{j}^{1}, c_{h}^{2})$ (and its reverse edge).}
\label{fig:symclausegadget}
\end{figure}
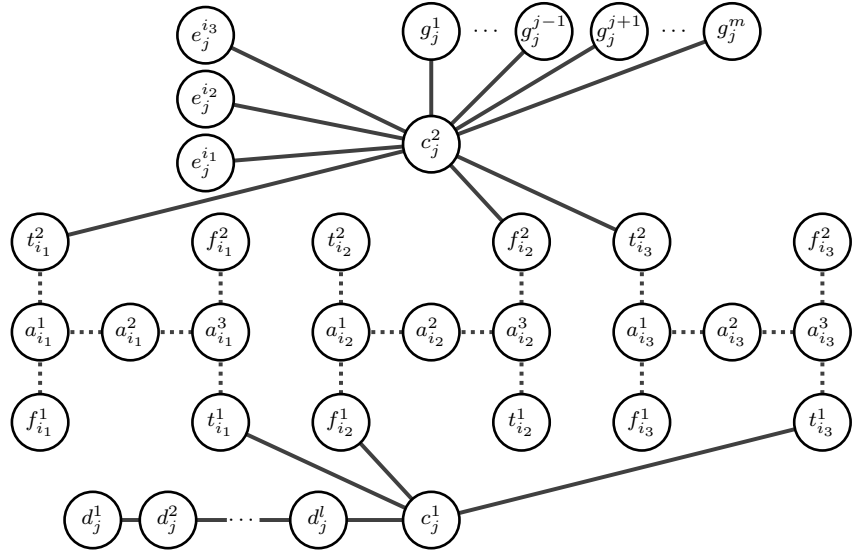

\begin{description}
\item[\bf Variable gadgets] (see Figure~\ref{fig:symvariablegadget}). For each variable $x_i$ of $\Phi$ with $i\in[n]$, $V$ contains the seven \emph{variable nodes} $t_i^1$, $t_i^2$, $f_i^1$, $f_i^2$, $a_i^1$, $a_i^2$, and $a_i^3$, and $E$ contains the six edges $(t_i^1,a_i^3)$, $(a_i^3,f_i^2)$, $(a_i^1,a_i^2)$, $(a_i^2,a_i^3)$, $(f_i^1,a_i^1)$, and $(a_i^1,t_i^2)$. 

\item[\bf Clause gadgets] (see Figure~\ref{fig:symclausegadget}). For each clause $c_j$ of $\Phi$ with $j\in[m]$, $V$ contains the two \emph{clause nodes} $c_j^1$ and $c_j^2$ (we will call $c_1^1,\ldots,c_m^1$ the \emph{bottom} clause nodes and $c_1^2,\ldots,c_m^2$ the \emph{top} clause nodes), and the \textit{middle nodes} $d_j^k$ for $k \in [l]$. For each variable $x_i$ which appears (positive or negative) in $c_j$, $V$ contains the \textit{head} node $e_j^i$. Finally, for each $h\in[m]$ with $h \neq j$, $V$ contains the \textit{head} node $g_j^h$. Concerning the edges, for each variable $x_i$ which appears positive in $c_j$, $E$ contains the edges $(c_j^1,t_i^1)$ and $(t_i^2,c_j^2)$, while, for each variable $x_i$ which appears negative in $c_j$, $E$ contains the edges $(c_j^1,f_i^1)$ and $(f_i^2,c_j^2)$. In both cases, $E$ contains the edge $(c_j^2,e_j^i)$. For each $h\in[m]$ with $h \neq j$, $E$ contains the edges $(c_j^1,c_h^2)$ and $(c_{j}^{2},g_{j}^{h})$. Finally, $E$ contains the edges $(d_j^k,d_j^{k+1})$, for $k \in [l-1]$, and the edge $(d_j^{l},c_j^1)$.

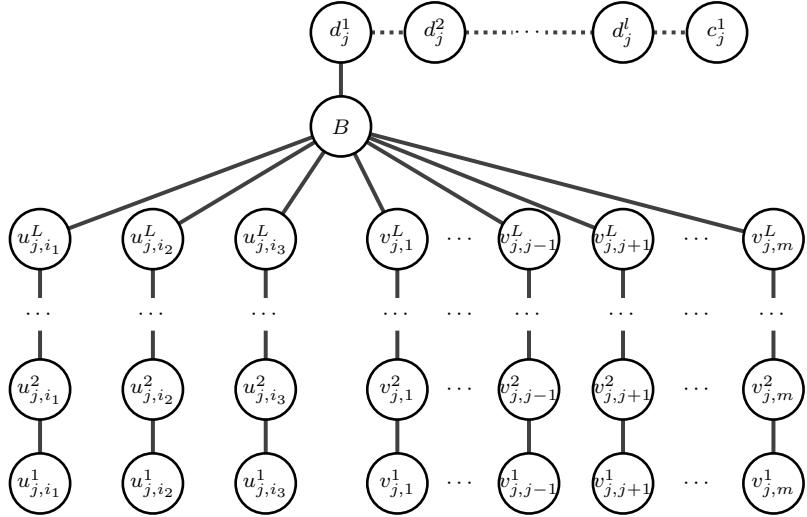
\begin{figure}[ht]
\centering{\SetVertexStyle[FillColor=white,TextFont=\scriptsize,MinSize=0.8\DefaultUnit]
\begin{tikzpicture}
  \Vertex[x=0,y=1.25,label=$d_{j}^1$]{dj1}
  \Vertex[x=1.25,y=1.25,label=$d_{j}^2$]{dj2}
  \Vertex[x=2.5,y=1.25,label=$\cdots$,style={color=white},size=0.4]{dji}
  \Vertex[x=3.75,y=1.25,label=$d_{j}^{l}$]{djk}
  \Vertex[x=5,y=1.25,label=$c_{j}^{1}$]{cj1}
  \Vertex[x=0,y=0,label=$B$]{b}
  \Edge (b)(dj1)
  \Edge[style={dotted}](dj1)(dj2)
  \Edge[style={dotted}](dj2)(dji)
  \Edge[style={dotted}](dji)(djk)
  \Edge[style={dotted}](djk)(cj1)
  \begin{scope}[xshift=-4cm,yshift=-1.5cm]
  \Vertex[x=0,y=0,label=$u_{j,i_1}^{L}$]{uji1k}
  \Vertex[x=0,y=-1,label=$\cdots$,style={color=white},size=0.4]{uji1i}
  \Vertex[x=0,y=-2,label=$u_{j,i_1}^{2}$]{uji12}
  \Vertex[x=0,y=-3.25,label=$u_{j,i_1}^{1}$]{uji11}
  \Edge (uji11)(uji12)
  \Edge (uji12)(uji1i)
  \Edge (uji1i)(uji1k)
  \Edge (uji1k)(b)
  \end{scope}
  \begin{scope}[xshift=-2.5cm,yshift=-1.5cm]
  \Vertex[x=0,y=0,label=$u_{j,i_2}^{L}$]{uji2k}
  \Vertex[x=0,y=-1,label=$\cdots$,style={color=white},size=0.4]{uji2i}
  \Vertex[x=0,y=-2,label=$u_{j,i_2}^{2}$]{uji22}
  \Vertex[x=0,y=-3.25,label=$u_{j,i_2}^{1}$]{uji21}
  \Edge (uji21)(uji22)
  \Edge (uji22)(uji2i)
  \Edge (uji2i)(uji2k)
  \Edge (uji2k)(b)
  \end{scope}
  \begin{scope}[xshift=-1cm,yshift=-1.5cm]
  \Vertex[x=0,y=0,label=$u_{j,i_3}^{L}$]{uji3k}
  \Vertex[x=0,y=-1,label=$\cdots$,style={color=white},size=0.4]{uji3i}
  \Vertex[x=0,y=-2,label=$u_{j,i_3}^{2}$]{uji32}
  \Vertex[x=0,y=-3.25,label=$u_{j,i_3}^{1}$]{uji31}
  \Edge (uji31)(uji32)
  \Edge (uji32)(uji3i)
  \Edge (uji3i)(uji3k)
  \Edge (uji3k)(b)
  \end{scope}
  \begin{scope}[xshift=0.75cm,yshift=-1.5cm]
  \Vertex[x=0,y=0,label=$v_{j,1}^{L}$]{vj1k}
  \Vertex[x=0,y=-1,label=$\cdots$,style={color=white},size=0.4]{vj1i}
  \Vertex[x=0,y=-2,label=$v_{j,1}^{2}$]{vj12}
  \Vertex[x=0,y=-3.25,label=$v_{j,1}^{1}$]{vj11}
  \Edge (vj11)(vj12)
  \Edge (vj12)(vj1i)
  \Edge (vj1i)(vj1k)
  \Edge (vj1k)(b)
  \end{scope}
  \begin{scope}[xshift=1.6cm,yshift=-1.5cm]
  \Vertex[x=0,y=0,label=$\cdots$,style={color=white},size=0.1]{vjii}
  \Vertex[x=0,y=-1,label=$\cdots$,style={color=white},size=0.1]{vjii}
  \Vertex[x=0,y=-2,label=$\cdots$,style={color=white},size=0.1]{vjii}
  \Vertex[x=0,y=-3.25,label=$\cdots$,style={color=white},size=0.1]{vjii}
  \end{scope}
  \begin{scope}[xshift=2.5cm,yshift=-1.5cm]
  \Vertex[x=0,y=0,label=$v_{j,j-1}^{L}$]{vjj-1k}
  \Vertex[x=0,y=-1,label=$\cdots$,style={color=white},size=0.4]{vjj-1i}
  \Vertex[x=0,y=-2,label=$v_{j,j-1}^{2}$]{vjj-12}
  \Vertex[x=0,y=-3.25,label=$v_{j,j-1}^{1}$]{vjj-11}
  \Edge (vjj-11)(vjj-12)
  \Edge (vjj-12)(vjj-1i)
  \Edge (vjj-1i)(vjj-1k)
  \Edge (vjj-1k)(b)
  \end{scope}
  \begin{scope}[xshift=3.75cm,yshift=-1.5cm]
  \Vertex[x=0,y=0,label=$v_{j,j+1}^{L}$]{vjjk}
  \Vertex[x=0,y=-1,label=$\cdots$,style={color=white},size=0.4]{vjji}
  \Vertex[x=0,y=-2,label=$v_{j,j+1}^{2}$]{vjj2}
  \Vertex[x=0,y=-3.25,label=$v_{j,j+1}^{1}$]{vjj1}
  \Edge (vjj1)(vjj2)
  \Edge (vjj2)(vjji)
  \Edge (vjji)(vjjk)
  \Edge (vjjk)(b)
  \end{scope}
  \begin{scope}[xshift=4.75cm,yshift=-1.5cm]
  \Vertex[x=0,y=0,label=$\cdots$,style={color=white},size=0.1]{vjii}
  \Vertex[x=0,y=-1,label=$\cdots$,style={color=white},size=0.1]{vjii}
  \Vertex[x=0,y=-2,label=$\cdots$,style={color=white},size=0.1]{vjii}
  \Vertex[x=0,y=-3.25,label=$\cdots$,style={color=white},size=0.1]{vjii}
  \end{scope}
  \begin{scope}[xshift=5.75cm,yshift=-1.5cm]
  \Vertex[x=0,y=0,label=$v_{j,m}^{L}$]{vjmk}
  \Vertex[x=0,y=-1,label=$\cdots$,style={color=white},size=0.4]{vjmi}
  \Vertex[x=0,y=-2,label=$v_{j,m}^{2}$]{vjm2}
  \Vertex[x=0,y=-3.25,label=$v_{j,m}^{1}$]{vjm1}
  \Edge (vjm1)(vjm2)
  \Edge (vjm2)(vjmi)
  \Edge (vjmi)(vjmk)
  \Edge (vjmk)(b)
  \end{scope}
\end{tikzpicture}}
\caption{The part of the bottom hub gadget in the reduction of \tsat{} to symmetric \mrtt{} (see the proof of Theorem~\ref{th:sym-np}), corresponding to the clause $c_j$ which contains the variables $x_{i_1}$, $x_{i_2}$, and $x_{i_3}$ (the dotted edges are included in the clause gadget corresponding to the clause $c_j$).}
\label{fig:symbottomhubgadget}
\end{figure}

\item[\bf Bottom hub gadget] (see Figure~\ref{fig:symbottomhubgadget}). $V$ contains the \textit{bottom hub node} $B$. For each clause $c_j$ of $\Phi$ with $j\in[m]$, and for each variable $x_i$ that appears (positive or negative) in $c_j$, $V$ contains the set  $A_{j,i}^\mathrm{u} = \{u_{j,i}^k : k \in [L]\}$. Moreover, for each $h\in[m]$, such that $h \neq j$, $V$ contains the set $A_{j,h}^\mathrm{v} = \{v_{j,h}^k : k \in [L]\}$. We will refer to the nodes in $A_{j,i}^\mathrm{u}$ and in $A_{j,h}^\mathrm{v}$ as the \emph{bottom tail nodes}. Concerning the edges, $E$ contains the edge $(B, d_j^1)$, and, for each variable $x_i$ that appears (positive or negative) in $c_j$, the set $\{(u_{j,i}^{k},u_{j,i}^{k+1}) : k \in [L-1]\}$ and the edge $(u_{j,i}^L,B)$. Finally, for each $h\in[m]$, such that $h \neq j$, $E$ contains the set $\{(v_{j,h}^{k},v_{j,h}^{k+1}) : k \in [L-1]\}$ and the edge $(v_{j,h}^{L},B)$.

\item[\bf Top hub gadget] (see Figure~\ref{fig:symtophubgadget}) $V$ contains the \textit{top hub node} $U$. For each clause $c_j$ of $\Phi$ with $j\in[m]$, and for each variable $x_i$ that appears (positive or negative) in $c_j$, $V$ contains the set $A_{j,i}^\mathrm{w} = \{w_{j,i}^{k} : k \in [L]\}$. We will refer to the nodes in $A_{j,i}^\mathrm{w}$ as the \emph{top tail nodes}. Concerning the edges, for each variable $x_i$ that appears (positive or negative) in $c_j$, $E$ contains the set $\{(w_{j,i}^{k},w_{j,i}^{k+1}) : k \in [L-1]\}$ and the edge $(w_{j,i}^L,U)$. Finally, if $x_i$ appears positive in $c_j$, $E$ contains the edge $(U,t_{i}^{2})$, while if $x_i$ appears negative in $c_j$, $E$ contains the edge $(U,f_i^2)$. 
\end{description}

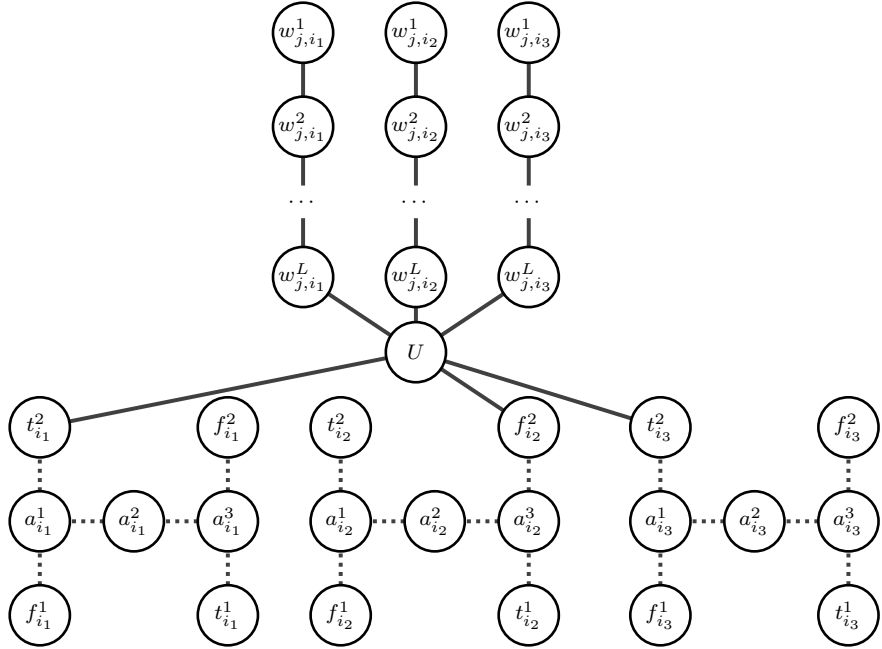
\begin{figure}[ht]
\centering{\SetVertexStyle[FillColor=white,TextFont=\scriptsize,MinSize=0.8\DefaultUnit]
\begin{tikzpicture}
  \Vertex[x=2.5,y=-0.5,label=$t_{i_1}^{1}$]{ti11}
  \Vertex[x=2.5,y=0.75,label=$a_{i_1}^{3}$]{ai13}
  \Vertex[x=2.5,y=2,label=$f_{i_1}^{2}$]{fi12}
  \Vertex[x=1.25,y=0.75,label=$a_{i_1}^{2}$]{ai12}
  \Vertex[x=0,y=0.75,label=$a_{i_1}^{1}$]{ai11}
  \Vertex[x=0,y=-0.5,label=$f_{i_1}^{1}$]{fi11}
  \Vertex[x=0,y=2,label=$t_{i_1}^{2}$]{ti12}
  \Edge[style={dotted}](ti11)(ai13)
  \Edge[style={dotted}](ai13)(fi12)
  \Edge[style={dotted}](ai13)(ai12)
  \Edge[style={dotted}](ai12)(ai11)
  \Edge[style={dotted}](fi11)(ai11)
  \Edge[style={dotted}](ai11)(ti12)
  \begin{scope}[xshift=4cm]
  \Vertex[x=2.5,y=-0.5,label=$t_{i_2}^{1}$]{ti21}
  \Vertex[x=2.5,y=0.75,label=$a_{i_2}^{3}$]{ai23}
  \Vertex[x=2.5,y=2,label=$f_{i_2}^{2}$]{fi22}
  \Vertex[x=1.25,y=0.75,label=$a_{i_2}^{2}$]{ai22}
  \Vertex[x=0,y=0.75,label=$a_{i_2}^{1}$]{ai21}
  \Vertex[x=0,y=-0.5,label=$f_{i_2}^{1}$]{fi21}
  \Vertex[x=0,y=2,label=$t_{i_2}^{2}$]{ti22}
  \Edge[style={dotted}](ti21)(ai23)
  \Edge[style={dotted}](ai23)(fi22)
  \Edge[style={dotted}](ai23)(ai22)
  \Edge[style={dotted}](ai22)(ai21)
  \Edge[style={dotted}](fi21)(ai21)
  \Edge[style={dotted}](ai21)(ti22)
  \begin{scope}[xshift=4.25cm]
  \Vertex[x=2.5,y=-0.5,label=$t_{i_3}^{1}$]{ti31}
  \Vertex[x=2.5,y=0.75,label=$a_{i_3}^{3}$]{ai33}
  \Vertex[x=2.5,y=2,label=$f_{i_3}^{2}$]{fi32}
  \Vertex[x=1.25,y=0.75,label=$a_{i_3}^{2}$]{ai32}
  \Vertex[x=0,y=0.75,label=$a_{i_3}^{1}$]{ai31}
  \Vertex[x=0,y=-0.5,label=$f_{i_3}^{1}$]{fi31}
  \Vertex[x=0,y=2,label=$t_{i_3}^{2}$]{ti32}
  \Edge[style={dotted}](ti31)(ai33)
  \Edge[style={dotted}](ai33)(fi32)
  \Edge[style={dotted}](ai33)(ai32)
  \Edge[style={dotted}](ai32)(ai31)
  \Edge[style={dotted}](fi31)(ai31)
  \Edge[style={dotted}](ai31)(ti32)
  \end{scope}
  \end{scope}
  \begin{scope}[xshift=5cm,yshift=3cm]
  \Vertex[x=0,y=0,label=$U$]{u}
  \Edge (u)(ti12)
  \Edge (u)(fi22)
  \Edge (u)(ti32)
  \end{scope}
  \begin{scope}[xshift=3.5cm,yshift=4cm]
  \Vertex[x=0,y=0,label=$w_{j,i_1}^{L}$]{wji1k}
  \Vertex[x=0,y=1,label=$\cdots$,style={color=white},size=0.4]{wji1i}
  \Vertex[x=0,y=2,label=$w_{j,i_1}^{2}$]{wji12}
  \Vertex[x=0,y=3.25,label=$w_{j,i_1}^{1}$]{wji11}
  \Edge (wji11)(wji12)
  \Edge (wji12)(wji1i)
  \Edge (wji1i)(wji1k)
  \Edge (wji1k)(u)
  \end{scope}
  \begin{scope}[xshift=5cm,yshift=4cm]
  \Vertex[x=0,y=0,label=$w_{j,i_2}^{L}$]{wji1k}
  \Vertex[x=0,y=1,label=$\cdots$,style={color=white},size=0.4]{wji1i}
  \Vertex[x=0,y=2,label=$w_{j,i_2}^{2}$]{wji12}
  \Vertex[x=0,y=3.25,label=$w_{j,i_2}^{1}$]{wji11}
  \Edge (wji11)(wji12)
  \Edge (wji12)(wji1i)
  \Edge (wji1i)(wji1k)
  \Edge (wji1k)(u)
  \end{scope}
  \begin{scope}[xshift=6.5cm,yshift=4cm]
  \Vertex[x=0,y=0,label=$w_{j,i_3}^{L}$]{wji1k}
  \Vertex[x=0,y=1,label=$\cdots$,style={color=white},size=0.4]{wji1i}
  \Vertex[x=0,y=2,label=$w_{j,i_3}^{2}$]{wji12}
  \Vertex[x=0,y=3.25,label=$w_{j,i_3}^{1}$]{wji11}
  \Edge (wji11)(wji12)
  \Edge (wji12)(wji1i)
  \Edge (wji1i)(wji1k)
  \Edge (wji1k)(u)
  \end{scope}
\end{tikzpicture}}
\caption{The part of the top hub gadget in the reduction of \tsat{} to symmetric \mrtt{} (see the proof of Theorem~\ref{th:sym-np}), corresponding to the clause $c_j=x_{i_1}\vee \neg x_{i_2}\vee x_{i_3}$ (the dotted edges are included in the variable gadgets corresponding to the variables $x_{i_1}$, $x_{i_2}$, and $x_{i_3}$).}
\label{fig:symtophubgadget}
\end{figure}

\medskip
\noindent\textbf{Number of nodes.}
Let us first compute the cardinality of $V$. There are $7n$ variable nodes, $2m$ clause nodes, $3m + m(m-1)$ head nodes, $ml$ middle nodes, $2$ hub nodes, $3mL +m(m-1)L $ bottom tail nodes, and $3mL$ top tail nodes. Thus, $|V| = 2 + 7n + m(L+1)(m+4) +ml+ mL$.

\medskip
\noindent\textbf{Trips.}
We now define the trip collection \tnet{} in D (see Figure~\ref{fig:symtrips}). In the following, for each trip $T$ included in \tnet{}, we implicitly assume that the symmetric trip \symtrip{$T$} is also included in \tnet{}.

\begin{description}
\item[\textit{Variable trips}.] For each $i \in [n]$, \tnet{} contains the trips $T_{i}^{\mathrm{t}}=\langle t_i^1, a_i^3, f_i^2\rangle$, $T_{i}^{\mathrm{f}}=\langle f_i^1, a_i^1, t_i^2\rangle$, and $T_{i}^{\mathrm{a}}=\langle a_i^1, a_i^2, a_i^3\rangle$ (see Figure~\ref{fig:symtrips}(a)).

\item[\textit{Bottom-variable trips}.] For each clause $c_j$ of $\Phi$ with $j\in[m]$, if $c_j$ contains the literal $x_i$, \tnet{} contains the trip $T_{j,i}^{\mathrm{u}} = \langle u_{j,i}^1, \dots, u_{j,i}^L$, $B, d_j^1, \dots, d_j^l, c_j^1, t_i^1\rangle$ (see Figure~\ref{fig:symtrips}(b)), while if $c_j$ contains the literal $\neg x_i$, \tnet{} contains the trip $T_{j,i}^{\mathrm{u}} = \langle u_{j,i}^1, \dots, u_{j,i}^L, B, d_j^1, \dots, d_j^l, c_j^1, f_i^1 \rangle$ (see Figure~\ref{fig:symtrips}(c)). 

\item[\textit{Bottom-clause trips}.] For each $(j,h) \in [m]^2$ such that $j \neq h$, \tnet{} contains the trip $T_{j,h}^{\mathrm{v}} = \langle v_{j,h}^1, \dots, v_{j,h}^L, B, d_j^1, \dots, d_j^l, c_j^1, c_h^2, g_{h}^{j}\rangle$ (see Figure~\ref{fig:symtrips}(d)).

\item[\textit{Top trips}.] For each clause $c_j$ of $\Phi$ with $j\in[m]$, if $c_j$ contains the literal $x_i$, \tnet{} contains the trip $T_{j,i}^{\mathrm{w}} = \langle w_{j,i}^1, \dots, w_{j,i}^L, U, t_i^2, c_j^2, e_j^i \rangle$ (see Figure~\ref{fig:symtrips}(e)), while if $c_j$ contains the literal $\neg x_i$, \tnet{} contains the trip $T_{j,i}^{\mathrm{w}} = \langle w_{j,i}^1, \dots, w_{j,i}^L, U, f_i^2, c_j^2, e_j^i \rangle$ (see Figure~\ref{fig:symtrips}(f)).
\end{description}

\begin{figure}[ht]
\centering{\SetVertexStyle[FillColor=white,TextFont=\scriptsize]
\begin{tikzpicture}
  \Vertex[x=-4,y=0,label={(a)},style={color=white},size=0.4]{a}
  \Vertex[x=-4,y=-1,label={(b)},style={color=white},size=0.4]{a}
  \Vertex[x=-4,y=-2,label={(c)},style={color=white},size=0.4]{a}
  \Vertex[x=-4,y=-3,label={(d)},style={color=white},size=0.4]{a}
  \Vertex[x=-4,y=-4,label={(e)},style={color=white},size=0.4]{a}
  \Vertex[x=-4,y=-5,label={(f)},style={color=white},size=0.4]{a}
  \begin{scope}[xshift=-3cm,yshift=0cm]
  \Vertex[x=0,y=0,label=$t_{i}^{1}$]{dj1}
  \Vertex[x=1,y=0,label=$a_{i}^{3}$]{cj1}
  \Vertex[x=2,y=0,label=$f_{i}^{2}$]{cj2}
  \Vertex[x=2.65,y=0,label={$T_{i}^{\mathrm{t}}$},style={color=white},size=0.4]{tit}
  \Edge (dj1)(cj1)
  \Edge (cj1)(cj2)
  \end{scope}
  \begin{scope}[xshift=0.5cm,yshift=0cm]
  \Vertex[x=0,y=0,label=$f_{i}^{1}$]{dj1}
  \Vertex[x=1,y=0,label=$a_{i}^{1}$]{cj1}
  \Vertex[x=2,y=0,label=$t_{i}^{2}$]{cj2}
  \Vertex[x=2.65,y=0,label={$T_{i}^{\mathrm{f}}$},style={color=white},size=0.4]{tif}
  \Edge (dj1)(cj1)
  \Edge (cj1)(cj2)
  \end{scope}
  \begin{scope}[xshift=4cm,yshift=0cm]
  \Vertex[x=0,y=0,label=$a_{i}^{1}$]{dj1}
  \Vertex[x=1,y=0,label=$a_{i}^{2}$]{cj1}
  \Vertex[x=2,y=0,label=$a_{i}^{3}$]{cj2}
  \Vertex[x=2.65,y=0,label={$T_{i}^{\mathrm{a}}$},style={color=white},size=0.4]{tia}
  \Edge (dj1)(cj1)
  \Edge (cj1)(cj2)
  \end{scope}
  \begin{scope}[yshift=-1cm]
  \draw[decorate, decoration={calligraphic brace, raise = 2pt, amplitude = 4pt, mirror}] (5.5,-1.5) --  (5.5,0.5);
  \Vertex[x=6,y=-0.5,label={$T_{j,i}^{\mathrm{u}}$},style={color=white},size=0.4]{tjiu}
  \begin{scope}[xshift=0cm,yshift=-2cm]
  \Vertex[x=0,y=0,label=$B$]{b}
  \Vertex[x=1,y=0,label=$d_{j}^{1}$]{dj1}
  \Vertex[x=2,y=0,label=$\cdots$,style={color=white},size=0.4]{dji}
  \Vertex[x=3,y=0,label=$d_{j}^{l}$]{djl}
  \Vertex[x=4,y=0,label=$c_{j}^{1}$]{cj1}
  \Vertex[x=5,y=0,label=$c_{h}^{2}$]{cj2}
  \Vertex[x=6,y=0,label=$g_{h}^{j}$]{ghj}
  \Vertex[x=6.65,y=0,label={$T_{j,h}^{\mathrm{v}}$},style={color=white},size=0.4]{tjhv}
  \Edge (b)(dj1)
  \Edge (dj1)(dji)
  \Edge (dji)(djl)
  \Edge (djl)(cj1)
  \Edge (cj1)(cj2)
  \Edge (cj2)(ghj)
  \begin{scope}[xshift=-3cm,yshift=0cm]
  \Vertex[x=2,y=0,label=$v_{j,h}^{L}$]{vjhk}
  \Vertex[x=1,y=0,label=$\cdots$,style={color=white},size=0.4]{vjhi}
  \Vertex[x=0,y=0,label=$v_{j,h}^{1}$]{vjh1}
  \Edge (vjh1)(vjhi)
  \Edge (vjhi)(vjhk)
  \Edge (vjhk)(b)
  \end{scope}
  \end{scope}
  \begin{scope}[xshift=0cm,yshift=0cm]
  \Vertex[x=0,y=0,label=$B$]{b}
  \Vertex[x=1,y=0,label=$d_{j}^{1}$]{dj1}
  \Vertex[x=2,y=0,label=$\cdots$,style={color=white},size=0.4]{dji}
  \Vertex[x=3,y=0,label=$d_{j}^{l}$]{djl}
  \Vertex[x=4,y=0,label=$c_{j}^{1}$]{cj1}
  \Vertex[x=5,y=0,label=$t_{i}^{1}$]{cj2}
  \Edge (b)(dj1)
  \Edge (dj1)(dji)
  \Edge (dji)(djl)
  \Edge (djl)(cj1)
  \Edge (cj1)(cj2)
  \begin{scope}[xshift=-3cm,yshift=0cm]
  \Vertex[x=2,y=0,label=$u_{j,i}^{L}$]{vjhk}
  \Vertex[x=1,y=0,label=$\cdots$,style={color=white},size=0.4]{vjhi}
  \Vertex[x=0,y=0,label=$u_{j,i}^{1}$]{vjh1}
  \Edge (vjh1)(vjhi)
  \Edge (vjhi)(vjhk)
  \Edge (vjhk)(b)
  \end{scope}
  \end{scope}
  \begin{scope}[xshift=0cm,yshift=-1cm]
  \Vertex[x=0,y=0,label=$B$]{b}
  \Vertex[x=1,y=0,label=$d_{j}^{1}$]{dj1}
  \Vertex[x=2,y=0,label=$\cdots$,style={color=white},size=0.4]{dji}
  \Vertex[x=3,y=0,label=$d_{j}^{l}$]{djl}
  \Vertex[x=4,y=0,label=$c_{j}^{1}$]{cj1}
  \Vertex[x=5,y=0,label=$f_{i}^{1}$]{cj2}
  \Edge (b)(dj1)
  \Edge (dj1)(dji)
  \Edge (dji)(djl)
  \Edge (djl)(cj1)
  \Edge (cj1)(cj2)
  \begin{scope}[xshift=-3cm,yshift=0cm]
  \Vertex[x=2,y=0,label=$u_{j,i}^{L}$]{vjhk}
  \Vertex[x=1,y=0,label=$\cdots$,style={color=white},size=0.4]{vjhi}
  \Vertex[x=0,y=0,label=$u_{j,i}^{1}$]{vjh1}
  \Edge (vjh1)(vjhi)
  \Edge (vjhi)(vjhk)
  \Edge (vjhk)(b)
  \end{scope}
  \end{scope}
  \begin{scope}[xshift=0cm,yshift=-3cm]
  \draw[decorate, decoration={calligraphic brace, raise = 2pt, amplitude = 4pt, mirror}] (3.5,-1.5) --  (3.5,0.5);
  \Vertex[x=4,y=-0.5,label={$T_{j,i}^{\mathrm{w}}$},style={color=white},size=0.4]{tjiw}
  \Vertex[x=0,y=0,label=$U$]{b}
  \Vertex[x=1,y=0,label=$t_{i}^2$]{dj1}
  \Vertex[x=2,y=0,label=$c_{j}^{2}$]{cj1}
  \Vertex[x=3,y=0,label=$e_{j}^{i}$]{cj2}
  \Edge (b)(dj1)
  \Edge (dj1)(cj1)
  \Edge (cj1)(cj2)
  \begin{scope}[xshift=-3cm,yshift=0cm]
  \Vertex[x=2,y=0,label=$w_{j,i}^{L}$]{vjhk}
  \Vertex[x=1,y=0,label=$\cdots$,style={color=white},size=0.4]{vjhi}
  \Vertex[x=0,y=0,label=$w_{j,i}^{1}$]{vjh1}
  \Edge (vjh1)(vjhi)
  \Edge (vjhi)(vjhk)
  \Edge (vjhk)(b)
  \end{scope}
  \end{scope}
  \begin{scope}[xshift=0cm,yshift=-4cm]
  \Vertex[x=0,y=0,label=$U$]{b}
  \Vertex[x=1,y=0,label=$f_{i}^2$]{dj1}
  \Vertex[x=2,y=0,label=$c_{j}^{2}$]{cj1}
  \Vertex[x=3,y=0,label=$e_{j}^{i}$]{cj2}
  \Edge (b)(dj1)
  \Edge (dj1)(cj1)
  \Edge (cj1)(cj2)
  \begin{scope}[xshift=-3cm,yshift=0cm]
  \Vertex[x=2,y=0,label=$w_{j,i}^{L}$]{vjhk}
  \Vertex[x=1,y=0,label=$\cdots$,style={color=white},size=0.4]{vjhi}
  \Vertex[x=0,y=0,label=$w_{j,i}^{1}$]{vjh1}
  \Edge (vjh1)(vjhi)
  \Edge (vjhi)(vjhk)
  \Edge (vjhk)(b)
  \end{scope}
  \end{scope}
  \end{scope}
\end{tikzpicture}}
\caption{The trips in the reduction of \tsat{} to symmetric \mrtt{} (see the proof of Theorem~\ref{th:sym-np}): (a) variable trips, (b) and (c) bottom-variable trips, (d) bottom-clause trips, and (e) and (f) top trips. Each kind of trip is included in both directions (that is, the trip collection is symmetric).}
\label{fig:symtrips}
\end{figure}
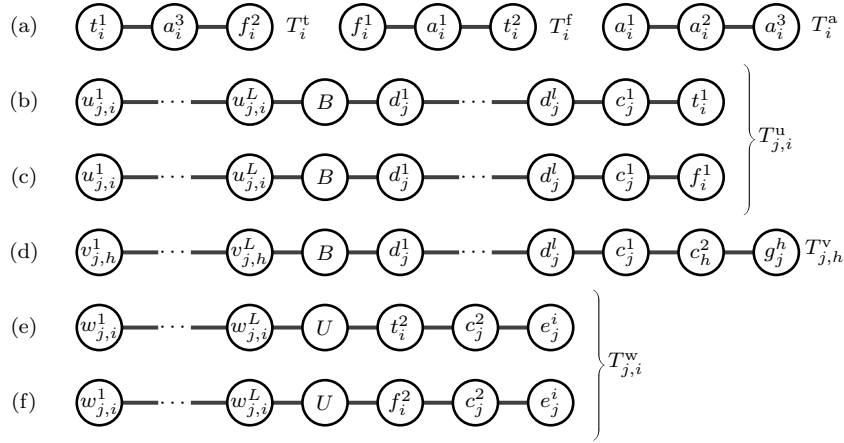

See Figure~\ref{fig:sym-reduction} for a global view of the trip network $(D,\tnet)$.

\begin{figure}[ht]
\centerline{\includegraphics[width=1.2\textwidth]{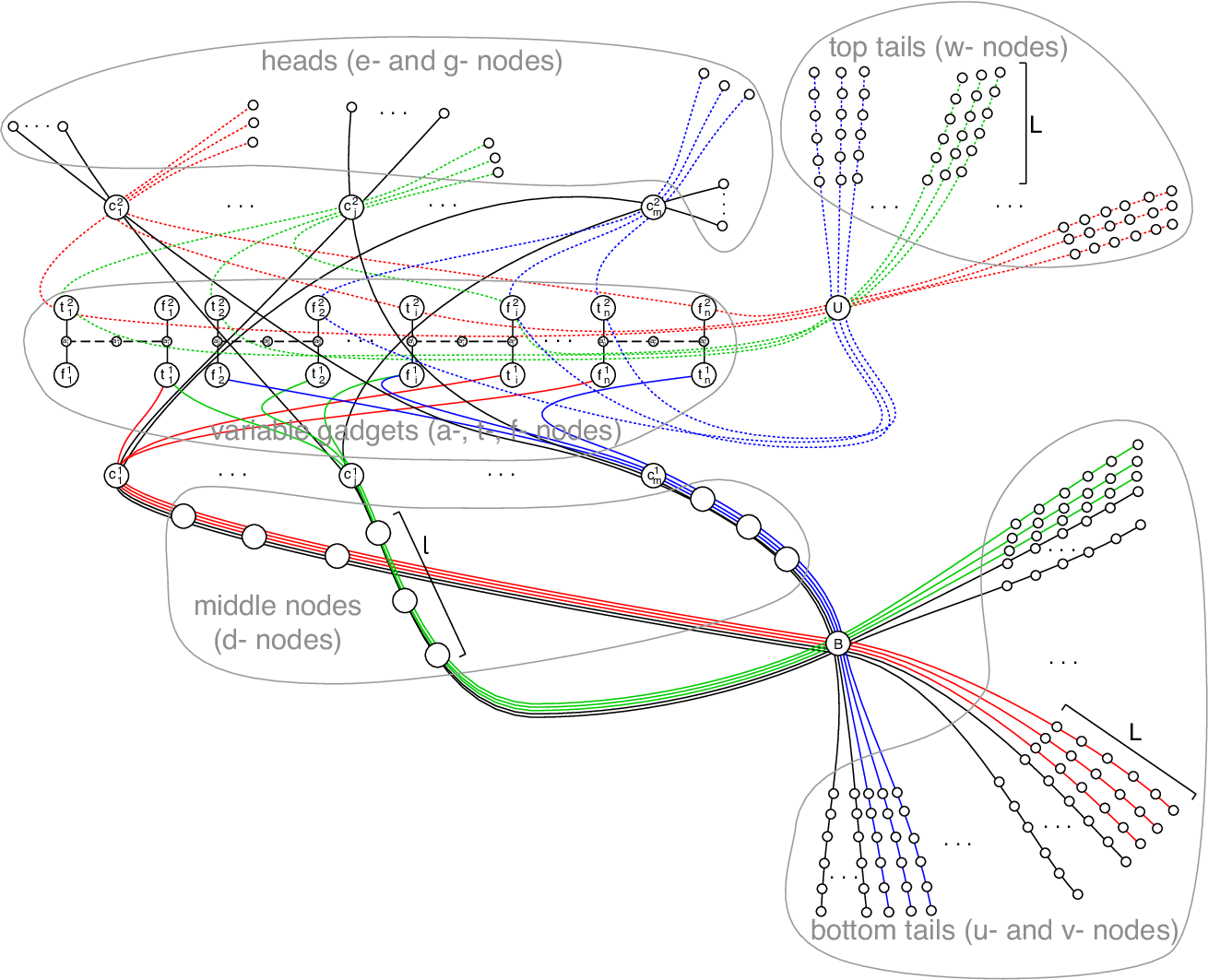}}
\caption{A global view of the reduction of \tsat{} to symmetric \mrtt{}. Here the clause $c_j=x_1\vee  x_2\vee \neg x_i$ is associated with two nodes $c_j^1$ and $c_j^2$ that are connected to variable gadgets for $x_1,x_2$ and $x_i$ through green lines (plain lines for bottom-variable trips and dotted lines for top trips). For a more detailed view of each gadget, see Figures~\ref{fig:symvariablegadget}, \ref{fig:symclausegadget}, \ref{fig:symbottomhubgadget}, \ref{fig:symtophubgadget}, and \ref{fig:symtrips}.}
\label{fig:sym-reduction}
\end{figure}

\medskip
\noindent\textbf{Basic idea of the reduction.} The temporal connections that are made possible by a temporalisation of the variable trips $T_{i}^{\mathrm{t}}$, $T_{i}^{\mathrm{f}}$, and $T_{i}^{\mathrm{a}}$ correspond to choosing whether the variable $x_i$ is set to true or false. Enabling temporal connections from the middle nodes, which are between the bottom hub $B$ and a bottom clause node $c_j^1$, to the head nodes connected to the top clause node $c_j^2$, corresponds to a ``reward'' in terms of reachability for satisfying the clause. The large size of the tails forces some constraints on the temporalisations with high reachability. This ensures that the ``reward'' is obtained only if the clause is satisfied. In particular, we will show that if there exists a satisfying assignment for $\Phi$, then it is possible to produce a temporalisation $\tau$ such that the $\tau$-reachability is at least $Q$, where $Q=|V|^2-(7n+m(m+3))^2$. Otherwise, if $\Phi$ is not satisfiable, then any temporalisation has reachability less than $Q$. More precisely, since $Q>|V|^2-L$ (recall that $L=(7n+m(m+3))^{2}+1$), we will show that any temporalisation that misses a connection from a bottom/top tail node to any node or from a node to any bottom/top tail node has reachability less than $Q$. Similarly, since $Q>|V^2|-(m+2)l$ (recall that $l=\lceil(7n+m(m+3))^2/(m+2)\rceil+1$), we will show that missing the connections from the middle nodes to the head nodes of the corresponding top clause node, also leads to a reachability less than $Q$.

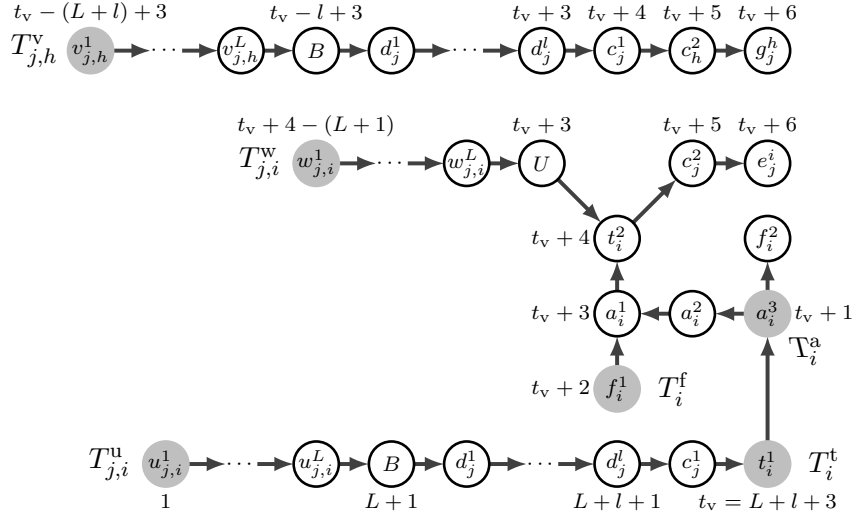
\begin{figure}[ht]
\centering{\SetVertexStyle[FillColor=white,TextFont=\scriptsize]
\begin{tikzpicture}
  \begin{scope}[yshift=-1cm]
  \begin{scope}[xshift=0cm,yshift=-2cm]
  \node at (-3.75,0) {$T_{j,i}^{\mathrm{u}}$};
  \Vertex[x=-3,y=-0.5,label={$1$},style={color=white},size=0.4]{}
  \Vertex[x=0,y=-0.5,label={$L+1$},style={color=white},size=0.4]{}
  \Vertex[x=3,y=-0.5,label={$L+l+1$},style={color=white},size=0.4]{}
  \Vertex[x=5,y=-0.5,label={$t_{\mathrm{v}}=L+l+3$},style={color=white},size=0.4]{}
  \node at (5.75,0) {$T_{i}^{\mathrm{t}}$};
  \Vertex[x=5,y=0,label=$t_{i}^{1}$,style={color=violet},fontcolor=white]{ti1}
  \Vertex[x=0,y=0,label=$B$]{b}
  \Vertex[x=1,y=0,label=$d_{j}^{1}$]{dj1}
  \Vertex[x=2,y=0,label=$\cdots$,style={color=white},size=0.4]{djdots}
  \Vertex[x=3,y=0,label=$d_{j}^{l}$]{djl}
  \Vertex[x=4,y=0,label=$c_{j}^{1}$]{cj1}
  \SetEdgeStyle[Color=red]
  \Edge[Direct](b)(dj1)
  \Edge[Direct](dj1)(djdots)
  \Edge[Direct](djdots)(djl)
  \Edge[Direct](djl)(cj1)
  \Edge[Direct](cj1)(ti1)
  \begin{scope}[xshift=-3cm,yshift=0cm]
  \Vertex[x=2,y=0,label=$u_{j,i}^{L}$]{ujiL}
  \Vertex[x=1,y=0,label=$\cdots$,style={color=white},size=0.4]{ujidots}
  \Vertex[x=0,y=0,label=$u_{j,i}^{1}$,style={color=red},fontcolor=white]{uji1}
  \Edge[Direct](uji1)(ujidots)
  \Edge[Direct](ujidots)(ujiL)
  \Edge[Direct](ujiL)(b)
  \end{scope}
  \begin{scope}[xshift=5cm,yshift=1cm,rotate=90]
  \Vertex[x=1,y=0,label=$a_{i}^{3}$,style={color=yellow},fontcolor=black]{ai3}
  \Vertex[x=2,y=0,label=$f_{i}^{2}$]{fi2}
  \SetEdgeStyle[Color=violet]
  \Edge[Direct](ti1)(ai3)
  \Edge[Direct](ai3)(fi2)
  \end{scope}
  \begin{scope}[xshift=3cm,yshift=2cm]
  \node at (2.5,-0.5) {\reflectbox{$T$}$_{i}^{\mathrm{a}}$};
  \Vertex[x=2.75,y=0,label={$t_{\mathrm{v}}+1$},style={color=white},size=0.4]{}
  \Vertex[x=0,y=0,label=$a_{i}^{1}$]{ai1}
  \Vertex[x=1,y=0,label=$a_{i}^{2}$]{ai2}
  \SetEdgeStyle[Color=yellow]
  \Edge[Direct](ai3)(ai2)
  \Edge[Direct](ai2)(ai1)
  \end{scope}
  \begin{scope}[xshift=3cm,yshift=1cm,rotate=90]
  \node at (0,-0.75) {$T_{i}^{\mathrm{f}}$};
  \Vertex[x=0,y=0.75,label={$t_{\mathrm{v}}+2$},style={color=white},size=0.4]{}
  \Vertex[x=1,y=0.75,label={$t_{\mathrm{v}}+3$},style={color=white},size=0.4]{}
  \Vertex[x=2,y=0.75,label={$t_{\mathrm{v}}+4$},style={color=white},size=0.4]{}
  \Vertex[x=0,y=0,label=$f_{i}^{1}$,style={color=orange}]{fi1}
  \Vertex[x=2,y=0,label=$t_{i}^{2}$]{ti2}
  \SetEdgeStyle[Color=orange]
  \Edge[Direct](fi1)(ai1)
  \Edge[Direct](ai1)(ti2)
  \end{scope}
  \end{scope}
  \begin{scope}[xshift=2cm,yshift=2cm]
  \node at (-3.75,0) {$T_{j,i}^{\mathrm{w}}$};
  \Vertex[x=-3,y=0.5,label={$t_{\mathrm{v}}+4-(L+1)$},style={color=white},size=0.4]{}
  \Vertex[x=0,y=0.5,label={$t_{\mathrm{v}}+3$},style={color=white},size=0.4]{}
  \Vertex[x=2,y=0.5,label={$t_{\mathrm{v}}+5$},style={color=white},size=0.4]{}
  \Vertex[x=3,y=0.5,label={$t_{\mathrm{v}}+6$},style={color=white},size=0.4]{}
  \Vertex[x=0,y=0,label=$U$]{b}
  \Vertex[x=2,y=0,label=$c_{j}^{2}$]{cj2}
  \Vertex[x=3,y=0,label=$e_{j}^{i}$]{ej1}
  \SetEdgeStyle[Color=red]
  \Edge[Direct,style={dashed}](b)(ti2)
  \Edge[Direct,style={dashed}](ti2)(cj2)
  \Edge[Direct,style={dashed}](cj2)(ej1)
  \begin{scope}[xshift=-3cm,yshift=0cm]
  \Vertex[x=2,y=0,label=$w_{j,i}^{L}$]{wjiL}
  \Vertex[x=1,y=0,label=$\cdots$,style={color=white},size=0.4]{wjidots}
  \Vertex[x=0,y=0,label=$w_{j,i}^{1}$,style={color=red},fontcolor=white]{wji1}
  \Edge[Direct,style={dashed}](wji1)(wjidots)
  \Edge[Direct,style={dashed}](wjidots)(wjiL)
  \Edge[Direct,style={dashed}](wjiL)(b)
  \end{scope}
  \end{scope}
  \begin{scope}[xshift=-1cm,yshift=3.5cm]
  \node at (-3.75,0) {$T_{j,h}^{\mathrm{v}}$};
  \Vertex[x=-3,y=0.5,label={$t_{\mathrm{v}}-(L+l)+3$},style={color=white},size=0.4]{}
  \Vertex[x=0,y=0.5,label={$t_{\mathrm{v}}-l+3$},style={color=white},size=0.4]{}
  \Vertex[x=3,y=0.5,label={$t_{\mathrm{v}}+3$},style={color=white},size=0.4]{}
  \Vertex[x=4,y=0.5,label={$t_{\mathrm{v}}+4$},style={color=white},size=0.4]{}
  \Vertex[x=5,y=0.5,label={$t_{\mathrm{v}}+5$},style={color=white},size=0.4]{}
  \Vertex[x=6,y=0.5,label={$t_{\mathrm{v}}+6$},style={color=white},size=0.4]{}
  \Vertex[x=0,y=0,label=$B$]{b}
  \Vertex[x=1,y=0,label=$d_{j}^{1}$]{dj1}
  \Vertex[x=2,y=0,label=$\cdots$,style={color=white},size=0.4]{djdots}
  \Vertex[x=3,y=0,label=$d_{j}^{l}$]{djl}
  \Vertex[x=4,y=0,label=$c_{j}^{1}$]{cj1}
  \Vertex[x=5,y=0,label=$c_{h}^{2}$]{ch2}
  \Vertex[x=6,y=0,label=$g_{h}^{j}$]{ghj}
  \SetEdgeStyle[Color=black]
  \Edge[Direct](b)(dj1)
  \Edge[Direct](dj1)(djdots)
  \Edge[Direct](djdots)(djl)
  \Edge[Direct](djl)(cj1)
  \Edge[Direct](cj1)(ch2)
  \Edge[Direct](ch2)(ghj)
  \begin{scope}[xshift=-3cm,yshift=0cm]
  \Vertex[x=2,y=0,label=$v_{j,h}^{L}$]{vjhL}
  \Vertex[x=1,y=0,label=$\cdots$,style={color=white},size=0.4]{vjhdots}
  \Vertex[x=0,y=0,label=$v_{j,h}^{1}$,style={color=black},fontcolor=white]{vjh1}
  \Edge[Direct](vjh1)(vjhdots)
  \Edge[Direct](vjhdots)(vjhL)
  \Edge[Direct](vjhL)(b)
  \end{scope}
  \end{scope}
  \end{scope}
\end{tikzpicture}}
\caption{The temporalisation of the ``forward'' trips obtained from a truth assignment $\alpha$ satisfying the Boolean formula $\Phi$ (here, we assume that the variable $x_i$ appears positive in the clause $c_j$, and that $\alpha(x_i)=\true$). The gray nodes are the starting nodes of the trips. Note that $T^u_{j,i}$ arrives in $B$ at time $L+1=t_v-l-2$.}
\label{fig:forwardtemporalisation}
\end{figure}

\medskip
\noindent\textbf{Activation of pairs of variable nodes.}
Consider a variable $x_i$ and the associated variable gadget (see Figure~\ref{fig:symvariablegadget}). For a given temporalisation $\tau$, let $t_1=\tau(T_{i}^{\mathrm{f}})$, $t_2=\tau(T_{i}^{\mathrm{a}})$, and $t_3=\tau(T_{i}^{\mathrm{t}})$. Note that $f_i^2$ is $\tau$-reachable from $f_i^1$ by using only the variable trips corresponding to the variable $x_i$ if and only if $t_1+1\leq t_2$ and $t_2+2\leq t_3+1$, as these conditions enable transfers at $a_{i}^{1}$ and $a_{i}^{3}$. When this is the case, we say that $\tau$ \emph{activates} $(f_i^1,f_i^2)$. Note that if $\tau$ \emph{activates} $(f_i^1,f_i^2)$, then $t_1\leq t_3-2$. Similarly, $t_i^2$ is $\tau$-reachable from $t_i^1$ by using only the variable trips corresponding to the variable $x_i$ if and only if $t_3+1\leq t_{4}$ and $t_{4}+2\leq t_1+1$, where $t_4=\tau(\symtrip{$T$}_{i}^{\mathrm{a}})$. When this is the case, we say that $\tau$ activates pair $(t_i^1,t_i^2)$. Note that if $\tau$ \emph{activates} $(t_i^1,t_i^2)$, then $t_3\leq t_1-2$. The key observation for the following is that \textit{no temporalisation can activate both $(f_i^1,f_i^2)$ and $(t_i^1,t_i^2)$}. We similarly say that $\tau$ activates pair $(f_i^2,f_i^1)$ (respectively, $(t_i^2,t_i^1)$) when $f_i^1$ (respectively, $t_i^1$) is $\tau$-reachable from $f_i^2$ (respectively, $t_i^2$), by using only the variable trips corresponding to the variable $x_i$. Once again, no temporalisation can activate both $(f_i^2,f_i^1)$ and $(t_i^2,t_i^1)$. However, one can easily see that it is possible to activate either both $(f_i^1,f_i^2)$ and $(f_i^2,f_i^1)$ or both $(t_i^1,t_i^2)$ and $(t_i^2,t_i^1)$.

\medskip
\noindent\textbf{Constructing a temporalisation from a satisfying assignment.} 
We first show how to construct a temporalisation $\tau$, when $\Phi$ is satisfiable, with reachability at least $Q$. Let $\alpha$ be a truth assignment to $x_1,\ldots,x_n$ that satisfies $\Phi$ (see Figures~\ref{fig:forwardtemporalisation} and~\ref{fig:backwardtemporalisation}, where we assume that $x_{i}$ appears positive in $c_j$ and that $\alpha(x_{i})=\true$).

For any clause $c_j$ and for any variable $x_i$ appearing in $c_j$, we set $\tau(T_{j,i}^{\mathrm{u}})= 1$. Note that $T_{j,i}^{\mathrm{u}}$ arrives in $B$ at time $L+1$, in $c_{j}^{1}$ at time $L+l+2$, and in the variable node connected to $c_{j}^{1}$ and included in the variable gadget corresponding to $x_{i}$ at time $t_{\mathrm{v}} = L+l+3$. For each $i \in [n]$, if $\alpha(x_i)=\true$, then we set $\tau(T_{i}^{\mathrm{t}}) = t_{\mathrm{v}}$, $\tau(\symtrip{$T$}_{i}^{\mathrm{a}}) = t_{\mathrm{v}}+1$, and $\tau(T_{i}^{\mathrm{f}}) = t_{\mathrm{v}}+2$, so that $\tau$ activates $(t_i^1,t_i^2)$ ($t_i^2$ being reachable at time $t_{\mathrm{v}}+4$). Otherwise (that is, $\alpha(x_i)=\false$), we set $\tau(T_{i}^{\mathrm{f}}) = t_{\mathrm{v}}$, $\tau(T_{i}^{\mathrm{a}}) = t_{\mathrm{v}}+1$, and $\tau(T_{i}^{\mathrm{t}}) = t_{\mathrm{v}}+2$ so that $\tau$ activates $(f_i^1,f_i^2)$ ($f_i^2$ being reachable at time $t_{\mathrm{v}}+4$). For any clause $c_j$ and for any variable $x_i$ appearing in $c_j$, we set $\tau(T_{j,i}^{\mathrm{w}}) = t_{\mathrm{v}}+4-(L+1)$, and, for any two clause $c_j$ and $c_h$ with $j\neq h$, we set $\tau(T_{j,h}^{\mathrm{v}}) = t_{\mathrm{v}}-(L+l)+3$: this implies that all these trips reach their top clause node at the same time, that is, $t_{\mathrm{v}}+5$. 

\begin{figure}[ht]
\centering{\SetVertexStyle[FillColor=white,TextFont=\scriptsize]
\begin{tikzpicture}
  \begin{scope}[yshift=-1cm]
  \begin{scope}[xshift=0cm,yshift=-2cm]
  \Vertex[x=-3,y=-0.5,label={$t_{\mathrm{v}}+14+l+L$},style={color=white},size=0.4]{}
  \Vertex[x=0,y=-0.5,label={$t_{\mathrm{v}}+14+l$},style={color=white},size=0.4]{}
  \Vertex[x=3,y=-0.5,label={$t_{\mathrm{v}}+14$},style={color=white},size=0.4]{}
  \Vertex[x=5,y=-0.5,label={$t_{\mathrm{v}}+12$},style={color=white},size=0.4]{}
  \node at (5.75,0) {\reflectbox{$T$}$_{j,i}^{\mathrm{u}}$};
  \Vertex[x=5,y=0,label=$t_{i}^{1}$,style={color=red},fontcolor=white]{ti1}
  \Vertex[x=0,y=0,label=$B$]{b}
  \Vertex[x=1,y=0,label=$d_{j}^{1}$]{dj1}
  \Vertex[x=2,y=0,label=$\cdots$,style={color=white},size=0.4]{djdots}
  \Vertex[x=3,y=0,label=$d_{j}^{l}$]{djl}
  \Vertex[x=4,y=0,label=$c_{j}^{1}$]{cj1}
  \SetEdgeStyle[Color=red]
  \Edge[Direct](dj1)(b)
  \Edge[Direct](djdots)(dj1)
  \Edge[Direct](djl)(djdots)
  \Edge[Direct](cj1)(djl)
  \Edge[Direct](ti1)(cj1)
  \begin{scope}[xshift=-3cm,yshift=0cm]
  \Vertex[x=2,y=0,label=$u_{j,i}^{L}$]{ujiL}
  \Vertex[x=1,y=0,label=$\cdots$,style={color=white},size=0.4]{ujidots}
  \Vertex[x=0,y=0,label=$u_{j,i}^{1}$]{uji1}
  \Edge[Direct](ujidots)(uji1)
  \Edge[Direct](ujiL)(ujidots)
  \Edge[Direct](b)(ujiL)
  \end{scope}
  \begin{scope}[xshift=5cm,yshift=1cm,rotate=90]
  \Vertex[x=2,y=-0.75,label={$t_{\mathrm{v}}+10$},style={color=white},size=0.4]{}
  \node at (2,0.65) {\reflectbox{$T$}$_{i}^{\mathrm{t}}$};
  \Vertex[x=1,y=0,label=$a_{i}^{3}$]{ai3}
  \Vertex[x=2,y=0,label=$f_{i}^{2}$,style={color=violet},fontcolor=white]{fi2}
  \SetEdgeStyle[Color=violet]
  \Edge[Direct](ai3)(ti1)
  \Edge[Direct](fi2)(ai3)
  \end{scope}
  \begin{scope}[xshift=3cm,yshift=2cm]
  \Vertex[x=2.75,y=0,label={$t_{\mathrm{v}}+11$},style={color=white},size=0.4]{}
  \node at (-0.5,0.5) {$T_{i}^{\mathrm{a}}$};
  \Vertex[x=0,y=0,label=$a_{i}^{1}$,style={color=yellow}]{ai1}
  \Vertex[x=1,y=0,label=$a_{i}^{2}$]{ai2}
  \SetEdgeStyle[Color=yellow]
  \Edge[Direct](ai2)(ai3)
  \Edge[Direct](ai1)(ai2)
  \end{scope}
  \begin{scope}[xshift=3cm,yshift=1cm,rotate=90]
  \Vertex[x=0,y=0.75,label={$t_{\mathrm{v}}+10$},style={color=white},size=0.4]{}
  \Vertex[x=1,y=0.75,label={$t_{\mathrm{v}}+9$},style={color=white},size=0.4]{}
  \Vertex[x=2,y=0.75,label={$t_{\mathrm{v}}+8$},style={color=white},size=0.4]{}
  \node at (2,-0.65) {\reflectbox{$T$}$_{i}^{\mathrm{f}}$};
  \Vertex[x=0,y=0,label=$f_{i}^{1}$]{fi1}
  \Vertex[x=2,y=0,label=$t_{i}^{2}$,style={color=orange}]{ti2}
  \SetEdgeStyle[Color=orange]
  \Edge[Direct](ai1)(fi1)
  \Edge[Direct](ti2)(ai1)
  \end{scope}
  \end{scope}
  \begin{scope}[xshift=2cm,yshift=2cm]
  \Vertex[x=-3,y=0.5,label={$t_{\mathrm{v}}+9+L$},style={color=white},size=0.4]{}
  \Vertex[x=0,y=0.5,label={$t_{\mathrm{v}}+9$},style={color=white},size=0.4]{}
  \Vertex[x=2,y=0.5,label={$t_{\mathrm{v}}+7$},style={color=white},size=0.4]{}
  \Vertex[x=3,y=0.5,label={$t_{\mathrm{v}}+6$},style={color=white},size=0.4]{}
  \node at (3.75,0) {\reflectbox{$T$}$_{j,i}^{\mathrm{w}}$};
  \Vertex[x=0,y=0,label=$U$]{b}
  \Vertex[x=2,y=0,label=$c_{j}^{2}$]{cj2}
  \Vertex[x=3,y=0,label=$e_{j}^{i}$,style={color=red},fontcolor=white]{eji}
  \SetEdgeStyle[Color=red]
  \Edge[Direct,style=dashed](ti2)(b)
  \Edge[Direct,style=dashed](cj2)(ti2)
  \Edge[Direct,style=dashed](eji)(cj2)
  \begin{scope}[xshift=-3cm,yshift=0cm]
  \Vertex[x=2,y=0,label=$w_{j,i}^{L}$]{wjiL}
  \Vertex[x=1,y=0,label=$\cdots$,style={color=white},size=0.4]{wjidots}
  \Vertex[x=0,y=0,label=$w_{j,i}^{1}$]{wji1}
  \Edge[Direct,style=dashed](wjidots)(wji1)
  \Edge[Direct,style=dashed](wjiL)(wjidots)
  \Edge[Direct,style=dashed](b)(wjiL)
  \end{scope}
  \end{scope}
  \begin{scope}[xshift=-1cm,yshift=3.5cm]
  \Vertex[x=-3,y=0.5,label={$t_{\mathrm{v}}+9+l+L$},style={color=white},size=0.4]{}
  \Vertex[x=0,y=0.5,label={$t_{\mathrm{v}}+9+l$},style={color=white},size=0.4]{}
  \Vertex[x=3,y=0.5,label={$t_{\mathrm{v}}+9$},style={color=white},size=0.4]{}
  \Vertex[x=4,y=0.5,label={$t_{\mathrm{v}}+8$},style={color=white},size=0.4]{}
  \Vertex[x=5,y=0.5,label={$t_{\mathrm{v}}+7$},style={color=white},size=0.4]{}
  \Vertex[x=6,y=0.5,label={$t_{\mathrm{v}}+6$},style={color=white},size=0.4]{}
  \node at (6.75,0) {\reflectbox{$T$}$_{j,h}^{\mathrm{v}}$};
  \Vertex[x=0,y=0,label=$B$]{b}
  \Vertex[x=1,y=0,label=$d_{j}^{1}$]{dj1}
  \Vertex[x=2,y=0,label=$\cdots$,style={color=white},size=0.4]{djdots}
  \Vertex[x=3,y=0,label=$d_{j}^{l}$]{djl}
  \Vertex[x=4,y=0,label=$c_{j}^{1}$]{cj1}
  \Vertex[x=5,y=0,label=$c_{h}^{2}$]{ch2}
  \Vertex[x=6,y=0,label=$g_{h}^{j}$,style={color=black},fontcolor=white]{ghj}
  \SetEdgeStyle[Color=black]
  \Edge[Direct](dj1)(b)
  \Edge[Direct](djdots)(dj1)
  \Edge[Direct](djl)(djdots)
  \Edge[Direct](cj1)(djl)
  \Edge[Direct](ch2)(cj1)
  \Edge[Direct](ghj)(ch2)
  \begin{scope}[xshift=-3cm,yshift=0cm]
  \Vertex[x=2,y=0,label=$v_{j,h}^{L}$]{vjhL}
  \Vertex[x=1,y=0,label=$\cdots$,style={color=white},size=0.4]{vjhdots}
  \Vertex[x=0,y=0,label=$v_{j,h}^{1}$]{vjh1}
  \Edge[Direct](vjhdots)(vjh1)
  \Edge[Direct](vjhL)(vjhdots)
  \Edge[Direct](b)(vjhL)
  \end{scope}
  \end{scope}
  \end{scope}
\end{tikzpicture}}
\caption{The temporalisation of the ``backward'' trips obtained from a truth assignment $\alpha$ satisfying the Boolean formula $\Phi$ (here, we assume that the variable $x_i$ appears positive in the clause $c_j$, and that $\alpha(x_i)=\true$). The gray nodes are the starting nodes of the trips and $t_{\mathrm{v}} = L+l+3$.}
\label{fig:backwardtemporalisation}
\end{figure}

For any clause $c_j$ and for any variable $x_i$ appearing in $c_j$, we set $\tau(\symtrip{$T$}_{j,i}^{\mathrm{w}}) = t_{\mathrm{v}}+6$, and, for any two clause $c_j$ and $c_h$ with $j\neq h$, we set $\tau(\symtrip{$T$}_{j,h}^{\mathrm{v}}) = t_{\mathrm{v}}+6$: this implies that all these trips reach their top clause node at the same time, that is, $t_{\mathrm{v}}+7$. For each $i \in [n]$, if $\alpha(x_i)=\true$, then we set $\tau(\symtrip{$T$}_{i}^{\mathrm{f}}) = t_{\mathrm{v}}+8$, $\tau(T_{i}^{\mathrm{a}}) = t_{\mathrm{v}}+9$, and $\tau(\symtrip{$T$}_{i}^{\mathrm{t}}) = t_{\mathrm{v}}+10$, so that $\tau$ activates $(t_i^2,t_i^1)$ ($t_i^1$ being reachable at time $t_{\mathrm{v}}+12$). Otherwise (that is, $\alpha(x_i)=\false$), we set $\tau(\symtrip{$T$}_{i}^{\mathrm{t}}) = t_{\mathrm{v}}+8$, $\tau(\symtrip{$T$}_{i}^{\mathrm{a}}) = t_{\mathrm{v}}+9$, and $\tau(\symtrip{$T$}_{i}^{\mathrm{f}}) = t_{\mathrm{v}}+10$ so that $\tau$ activates $(f_i^2,f_i^1)$ ($f_i^1$ being reachable at time $t_{\mathrm{v}}+12$).
For any clause $c_j$ and for any variable $x_i$ appearing in $c_j$, we set $\tau(\symtrip{$T$}_{j,i}^{\mathrm{u}})= t_{\mathrm{v}}+12$. Note that $\symtrip{$T$}_{j,i}^{\mathrm{u}}$arrives in $B$ at time $t_{\mathrm{v}}+14+l$.

We now show that the $\tau$-reachability is at least $Q$.  To this aim, let $G=G[D,\tnet{},\tau]$ be the temporal graph induced by $\tau$, and let $X$ be the set of the following nodes: the top and bottom tail nodes, the middle nodes, the two hub nodes, and the nodes $c_{j}^{1}$ for $j\in[m]$ (note that $V\setminus X$ contains all the head nodes, all the variable nodes, and the nodes $c_{j}^{2}$ for $j\in[m]$, and that, hence, $|V|-|X|=7n+m(m+3)$).

\begin{claim}\label{claim:XtoV}
For any node $x\in X$ and for any node $v\in V$, $v\in\reach{G}{x}$ and $x\in\reach{G}{v}$.
\end{claim}

\begin{proof}
Let us first show that, for any node $x\in X$ and for any node $v\in V\setminus X$, $x\in\reach{G}{v}$. First note that each top clause node $c_j^2$ can be reached at time at most $t_{\mathrm{v}}+7$ by the following set of nodes: $c_j^2$ itself, each head $e_{j}^{i}$ through the trip $\symtrip{$T$}_{j,i}^{\mathrm{w}}$ and each head $g_{j}^{h}$ with $j\neq h$ through the trip $\symtrip{$T$}_{h,j}^{\mathrm{v}}$ (see Figure~\ref{fig:backwardtemporalisation}). Notice that, since each variable appears positive in at least one clause and negative in at least one clause, each variable node of any gadget associated to a variable $x_i$ can reach \textit{some} $c_j^2$ at time $t_{\mathrm{v}}+5$ through the trips $T_{i}^{\mathrm{t}}$, $\symtrip{$T$}_{i}^{\mathrm{a}}$, and $T_{i}^{\mathrm{f}}$ or through the trips $T_{i}^{\mathrm{f}}$, $T_{i}^{\mathrm{a}}$, and $T_{i}^{\mathrm{t}}$ (see Figure~\ref{fig:forwardtemporalisation}). 
Hence, the set of nodes that can reach some $c_j^2$ at time at most $t_{\mathrm{v}}+7$ is equal to $V \setminus X$. Now we show that, by starting from $c_j^2$ at time $t_{\mathrm{v}}+7$, it is possible to reach each node in $X$, thus implying that, for any node $x\in X$ and for any node $v\in V\setminus X$, $x\in\reach{G}{v}$.
\begin{itemize}
    \item The top hub $U$ can be reached at time $t_{\mathrm{v}}+9$ through any trip $\symtrip{$T$}^w_{j,i}$ such that variable $x_i$  appears in $c_j$ (see Figure~\ref{fig:backwardtemporalisation}).

    \item All top tail nodes $w_{p,q}^r$ are reachable through the trips $\symtrip{$T$}_{p,q}^{\mathrm{w}}$, which all ``meet'' in $U$ at time $t_{\mathrm{v}}+9$ (see Figure~\ref{fig:backwardtemporalisation}).
    
    \item For any $h\in[m]$ with $h\neq j$, the bottom clause nodes $c_h^1$, the middle nodes $d_{h}^{r}$ for $r\in[l]$, the bottom hub $B$, and the bottom tail nodes $v_{h,j}^{s}$ for $s\in[L]$ are reachable through the trips $\symtrip{$T$}_{h,j}^{\mathrm{v}}$ (see Figure~\ref{fig:backwardtemporalisation}). Note that all these trips arrive in $B$ at time $t_{\mathrm{v}}+9+l$.
    
    \item Consider a variable $x_i$ appearing in $c_j$ and such that the associated literal has value \true{} according to the assignment $\alpha$ satisfying $\Phi$. The bottom clause node $c_j^1$, the middle nodes $d_{j}^{r}$ for $r\in[l]$, and all bottom tail nodes $u_{j,i}^r$ for $r\in[L]$ are reachable through the trips $\symtrip{$T$}_{i}^{\mathrm{f}}$, $T_{i}^{\mathrm{a}}$, $\symtrip{$T$}_{i}^{\mathrm{t}}$, and $\symtrip{$T$}_{j,i}^{\mathrm{u}}$ or through the trips $\symtrip{$T$}_{i}^{\mathrm{t}}$, $\symtrip{$T$}_{i}^{\mathrm{a}}$,  $\symtrip{$T$}_{i}^{\mathrm{f}}$, and $\symtrip{$T$}_{j,i}^{\mathrm{u}}$ (see Figure~\ref{fig:backwardtemporalisation}).
    
    \item All bottom tail nodes $u_{h,i}^r$ with $h\neq j$ and $r\in[L]$ are reachable through the trips $\symtrip{$T$}_{h,i}^{\mathrm{u}}$, which all ``meet'' in the bottom hub $B$ at time $t_{\mathrm{v}}+14+l$, that is, later than the bottom-clause trips  $\symtrip{$T$}_{h,j}^{\mathrm{v}}$ (see Figure~\ref{fig:backwardtemporalisation}).
\end{itemize}

Let us now prove that, for any node $x\in X$ and for any node $v\in V$, $v\in\reach{G}{x}$. First we prove that, for each $j \in [m]$, the top clause node $c_j^2$ is reachable at time $t_{\mathrm{v}}+5$ from each node in $X$.
\begin{itemize}
    \item The bottom hub $B$ and all the bottom tail nodes $u_{p,q}^r$ and $v_{p,q}^r$ can reach $B$ at time at most $t_{\mathrm{v}}-l+3$ through the trips $T_{p,q}^{\mathrm{u}}$ and $T_{p,q}^{\mathrm{v}}$, and, hence, can reach the top clause node $c_j^2$ at time $t_{\mathrm{v}}+5$ through the trips $T_{p,q}^{\mathrm{v}}$ (see Figure~\ref{fig:forwardtemporalisation}).
    
    \item The top hub $U$ and all the top tail nodes $w_{p,q}^r$ can reach $c_j^2$ at time $t_{\mathrm{v}}+5$ through the trips $T_{p,q}^{\mathrm{w}}$, which all ``meet'' in $U$ at time $t_{\mathrm{v}}+3$ (see Figure~\ref{fig:forwardtemporalisation}).
    
    \item The bottom clause nodes $c_{h}^{1}$, with $h \neq j$, and the middle nodes $d_{h}^{r}$ for $r\in[l]$ can reach the top clause node $c_{j}^{2}$ at time $t_{\mathrm{v}}+5$ through the trips $T_{h,j}^{\mathrm{v}}$ (see Figure~\ref{fig:forwardtemporalisation}).
    
    \item The bottom clause node $c_{j}^{1}$ and the middle nodes $d_{j}^{r}$ for $r\in[l]$ can reach $c_j^2$ at time $t_{\mathrm{v}}+5$ through the trips $T_{i}^{\mathrm{t}}$, $\symtrip{$T$}_{i}^{\mathrm{a}}$, and $T_{i}^{\mathrm{f}}$ or through the trips $T_{i}^{\mathrm{f}}$, $T_{i}^{\mathrm{a}}$, and $T_{i}^{\mathrm{t}}$ (see Figure~\ref{fig:forwardtemporalisation}), where $x_i$ is a variable whose truth assignment satisfies the clause $c_{j}$ (note that the satisfiability of the formula $\Phi$ is also required here).
\end{itemize}
Now we show that, by starting from $c_j^2$ at time $t_{\mathrm{v}}+5$, it is possible to reach the following set of nodes $S_j$: all nodes in $X$ (since we already proved that these nodes are reachable from $c_j^2$, starting at time $t_{\mathrm{v}}+7 > t_{\mathrm{v}}+5$), the head nodes $e_{j}^{r}$ and $g_{j}^{r}$ through the trips $T_{j,r}^{\mathrm{w}}$ and $T_{r,j}^{\mathrm{v}}$, and all variable nodes of the gadget associated to a variable $x_i$ appearing in $c_j$ through the trips $\symtrip{$T$}_{i}^{\mathrm{f}}$, $T_{i}^{\mathrm{a}}$, and $\symtrip{$T$}_{i}^{\mathrm{t}}$ or through the trips $\symtrip{$T$}_{i}^{\mathrm{t}}$, $\symtrip{$T$}_{i}^{\mathrm{a}}$, and $\symtrip{$T$}_{i}^{\mathrm{f}}$ (see Figure~\ref{fig:backwardtemporalisation}). Notice that $\bigcup_{j \in [m]}S_j = V$, and this concludes the proof of the claim.\qed
\end{proof}

We now determine a lower bound on the $\tau$-reachability by counting the number of nodes temporally reachable from different sources.
\begin{itemize}
    \item From the nodes in $X$, it is possible to reach each node in $V$. This adds $|X| \cdot |V|$ to the $\tau$-reachability.
    
    \item From the nodes in $V\setminus X$, it is possible to reach the nodes in $X$. This adds $(|V|-|X|)\cdot |X|$ to the $\tau$-reachability.
\end{itemize}
Hence, the $\tau$-reachability is at least equal to $2|X|\cdot|V|-|X|^2=|V|^2 - (|V|-|X|)^2=|V|^2-(7n+m(m+3))^2=Q$ (recall that $|V|-|X|=7n+m(m+3)$).

\medskip
\noindent\textbf{Bounding reachability when $\Phi$ is not satisfiable.}
Let $\tau$ be any trip temporalisation of the trip network $(D,\tnet{})$ and let $G=G[D,\tnet{},\tau]$ be the temporal graph induced by $\tau$.

\begin{claim}\label{claim:AtoV}
If the $\tau$-reachability is at least equal to $Q$, then, for any $v\in V$ and for any bottom/top tail node $x$, {we have} $v\in\reach{G}{x}$ and $x\in\reach{G}{v}$. 
\end{claim}

\begin{proof}
Without loss of generality, we prove the claim in the case in which $x\in A_{j,i}^{\mathrm{u}}$, for some clause $c_j$ of $\Phi$ with $j\in[m]$ and for some variable $x_i$ that appears (positive or negative) in $c_j$ (the proofs of the other cases are similar). First of all observe that all nodes in $A_{j,i}^{\mathrm{u}}$ have the same reachability set and belong to the same reachability sets. Formally, for any two nodes $u_{j,i}^{r}$ and $u_{j,i}^{s}$ in $A_{j,i}^{\mathrm{u}}$ with $r,s\in[L]$, we have that $\reach{G}{u_{j,i}^{r}}=\reach{G}{u_{j,i}^{s}}$, and that, for any node $v\in V$, $u_{j,i}^{r}\in\reach{G}{v}$ if and only if $u_{j,i}^{s}\in\reach{G}{v}$. This is due to the fact that the bottom-variable trips $T_{j,i}^{\mathrm{u}}$ and $\symtrip{$T$}_{j,i}^{\mathrm{u}}$ are the only trips passing through the nodes in $A_{j,i}^{\mathrm{u}}$. This observation implies that if there exists $v\in V$ such that either $v\not\in\reach{G}{u_{j,i}^{r}}$ or $u_{j,i}^{r}\not\in\reach{G}{v}$ for some bottom tail node $u_{j,i}^{r}$, then the $\tau$-reachability is at most $|V|^2 - L < Q$. Hence, the claim follows.\qed
\end{proof}

Let us now consider the following time constraints that, as a consequence of the above claim, need to be satisfied by $\tau$, if the $\tau$-reachability is at least equal to $Q$. For the sake of brevity, we will give a detailed proof of the first constraint only, since the proofs of the other ones are similar: intuitively, these proofs are based on the fact that the connections provided by some trips between two nodes use the minimum number of edges (that is, they are shortest paths with respect to the number of hops). 
\begin{description}
    \item[C1] For all clauses $c_j$ of $\Phi$ with $j\in[m]$ and for all variables $x_i$ that appear (positive or negative) in $c_j$, all trips $T_{j,i}^{\mathrm{w}}$ are assigned the same starting time $t^{\mathrm{w}}$, that is, $\tau(T_{j,i}^{\mathrm{w}}) = t^{\mathrm{w}}$ (this implies that all these trips reach node $U$ at time $t^{\mathrm{U}} = t^{\mathrm{w}}+L$ and the node $c_{j}^{2}$ at time $t^{\mathrm{U}}+2$). This constraint is needed in order to have any top tail node able to reach any head node at the end of a top trip. Indeed, if there exists two trips $T_{j_{1},i_{1}}^{\mathrm{w}}$ and $T_{j_{2},i_{2}}^{\mathrm{w}}$, with $j_{1},j_{2}\in[m]$, $i_{1},i_{2}\in[n]$, and the variable $x_{i_{1}}$ (respectively, $x_{i_{2}}$) appearing (positive or negative) in the clause $c_{j_{1}}$ (respectively, $c_{j_{2}}$), such that $\tau(T_{j_{1},i_{1}}^{\mathrm{w}})>\tau(T_{j_{2},i_{2}}^{\mathrm{w}})$, since there is no trip connecting $U$ to $c_{j_2}^2$ by using less than two edges, any top tail node in $A_{j_1,i_1}^{\mathrm{w}}$ reaches $c_{j_2}^{2}$ at time $\tau(T_{j_{1},i_{1}}^{\mathrm{w}})+L+2>\tau(T_{j_{2},i_{2}}^{\mathrm{w}})+L+2$, thus implying that it cannot reach the head node $e_{j_2}^{i_2}$ (since the edge $(c_{j_2}^{2},e_{j_2}^{i_2})$ appears at time $\tau(T_{j_{2},i_{2}}^{\mathrm{w}})+L+2$). Because of the previous claim, this contradicts the assumption that the $\tau$-reachability is at least equal to $Q$.
    
    \item[C2] For all clauses $c_j$ of $\Phi$ with $j\in[m]$ and for all variables $x_i$ that appear (positive or negative) in $c_j$, all trips $\symtrip{$T$}_{j,i}^{\mathrm{w}}$ are assigned the same starting time $t^{\mathrm{w,s}}$, that is, $\tau(\symtrip{$T$}_{j,i}^{\mathrm{w}}) = t^{\mathrm{w,s}}$ (this implies that all these trips reach node $U$ at time $t^{\mathrm{U,s}} = t^{\mathrm{w,s}}+3$). This constraint is needed in order to have any head node at the end of a top trip able to reach any top tail node.
    
    \item[C3] $t^{\mathrm{U}}\leq t^{\mathrm{U,s}}$. This constraint is needed in order to have any top tail node in $A_{j_1,i_1}^{\mathrm{w}}$ able to reach any top tail node in $A_{j_2,i_2}^{\mathrm{w}}$, by first using the trip $T_{j_{1},i_{1}}^{\mathrm{w}}$ (in order to reach $U$ at time $t^{\mathrm{U}}$, as stated in~C1) and then using the trip $\symtrip{$T$}_{j_{2},i_{2}}^{\mathrm{w}}$ (which passes through $U$ at time $t^{\mathrm{U,s}}$, as stated in~C2).
    
    \item[C4] For each clause $c_j$ of $\Phi$ with $j\in[m]$ and for any $h\in[m]$ with $h\neq j$, $\tau(T_{j,h}^{\mathrm{v}}) = t^{\mathrm{U}} - L - l$. This constraint is needed in order to have any top tail node able to reach any head node at the end of a bottom-clause trip and any bottom tail node able to reach any head node at the end of a top trip.
    
    \item[C5] For each clause $c_j$ of $\Phi$ with $j\in[m]$ and for any $h\in[m]$ with $h\neq j$, $\tau(\symtrip{$T$}_{j,h}^{\mathrm{v}}) = t^{\mathrm{U,s}} - 3$. This constraint is needed in order to have any head node able to reach both any top tail node and any bottom tail node.
    
    \item[C6] For each clause $c_j$ of $\Phi$ with $j\in[m]$ and for each variable $x_i$ that appears (positive or negative) in $c_j$, $\tau(T_{j,i}^{\mathrm{u}}) \leq t^{\mathrm{U}} - L - l$. This constraint is needed in order to have any bottom tail node able to reach any head node at the end of a bottom-clause trip.
    
    \item[C7] For each clause $c_j$ of $\Phi$ with $j\in[m]$ and for each variable $x_i$ that appears (positive or negative) in $c_j$, $\tau(\symtrip{$T$}_{j,i}^{\mathrm{u}})\geq t^{\mathrm{U,s}} - 2$. This constraint is needed in order to have any head node at the end of a bottom-clause trip able to reach any bottom tail node.
    
    \item[C8] For each clause $c_j$ of $\Phi$ with $j\in[m]$, for each variable $x_i$ that appears (positive or negative) in $c_j$, and for any $h\in[m]$ with $h\neq j$, $\tau(T_{j,h}^{\mathrm{v}}),\tau(T_{j,i}^{\mathrm{u}})\leq t^{\mathrm{U,s}} - L - l - 4$. These constraints are needed in order to have any bottom tail node able to reach any top tail node.
    
    \item[C9] For each clause $c_j$ of $\Phi$ with $j\in[m]$ and for any $h\in[m]$ with $h\neq j$, $t^U +1 \leq \tau(\symtrip{$T$}_{j,h}^{\mathrm{v}})$. This constraint is needed in order to have any top tail node able to reach any bottom-clause tail node.
    
    \item[C10] $t^{\mathrm{U}} +4\leq t^{\mathrm{U,s}}$. This is a consequence of constraints~C5 and~C9. Note that this constraint subsumes constraint~C3.
\end{description}

The above constraints have been derived by using the fact that $Q>|V|^{2}-L$. We now take advantage of the fact that $Q>|V|^{2}-(m+2)l$. Note that, since $\Phi$ is not satisfiable, for any truth-assignment to the variables of $\Phi$, there must exist a clause which is not satisfied by the assignment. Let us then consider the following truth-assignment $\alpha$, which is derived from $\tau$. For each variable gadget corresponding to a variable $x_i$, $\alpha(x_i)=\true$ if $\tau(T_{i}^{\mathrm{t}})\leq \tau(T_{i}^{\mathrm{f}})$, otherwise $\alpha(x_i)=\false$. Note that from the paragraph about the activation of pairs of variable nodes, it follows that if $\alpha(x_i)=\true$ (respectively, $\alpha(x_i)=\false$), then $\tau$ does not activate $(f_i^1,f_i^2)$ (respectively, $(t_i^1,t_i^2)$). Let $c_{j_\alpha}$ be a clause which is not satisfied by $\alpha$. We now show that the middle nodes $d_{j_\alpha}^k$, for $k\in[l]$, cannot reach the head nodes connected to $c_{j_\alpha}^{2}$. Intuitively, the main reason is that $c_{j_\alpha}^{2}$ cannot be reached from the middle nodes through any of the variable gadgets associated to the variables appearing in $c_{j_\alpha}$, as $\tau$ does not activate, in these gadgets, the pair connected to $c_{j_\alpha}^{1}$ and $c_{j_\alpha}^{2}$, and no other temporal path is possible. More formally, let us analyse all possible temporal paths from a middle node $x$ (with $x=d_{j_\alpha}^k$, for some $k\in[l]$) to a head node $h$ connected to $c_{j_\alpha}^{2}$.
\begin{itemize}
    \item Going through a node $c_k^2$ with $k\neq j_\alpha$ is not allowed by the above time constraints (in particular, by the synchronization of forward bottom-clause trips and forward top trips at $c_{j_\alpha}^{2}$ and $c_{k}^{2}$). Indeed, each temporal path from $x$ to $c_{k}^{2}$ reaches $c_{k}^{2}$ using either a top forward trip or a bottom-clause forward trip, which both arrive in $c_{k}^{2}$ at time $t^{\mathrm{U}}+2$ according to time constraints C1 and C4. Although $c_{j_\alpha}^{2}$ might be $\tau$-reachable from $c_k^2$, the arrival time will be greater than $t^{\mathrm{U}}+2$ while the trip to $h$ departs from $c_{j_\alpha}^{2}$ at time $t^{\mathrm{U}}+2$. 

    \item Using any trip to go to $B$ and then reach $h$ through a node $c_k^1$ with $k\neq {j_\alpha}$ is also not possible. Let us suppose we use a backward bottom-variable or bottom-clause trip $T_1$ to go from $x$ to $B$, arrive in $B$ at time $t$, and then use a forward bottom-variable or bottom-clause trip $T_2$ to reach $c_k^1$, and let $t'$ be the time $T_2$ goes through $B$. Clearly, $t' \geq t$. Because of the temporal constraints~C5 and~C7, we have that $t \geq t^{\mathrm{U,s}}+l$, which implies $t \geq t^{\mathrm{U}}+l$ because of constraint~C3. However, because of temporal constraints~C4 and~C6 we have that $t' \leq t^{\mathrm{U}}-l$ contradicting $t\le t'$. 

    \item Going through a variable node $t_i^1$, if we assume that $x_i$ appears positive in $c_{j_\alpha}$, is not possible either. As mentioned before the choice of $c_{j_\alpha}$ implies that $\tau$ does not activate pair $(t_i^1,t_i^2)$ and the path of length four through the variable gadget for $x_i$ from $t_i^1$ to $t_i^2$ is not $\tau$-compatible. Reaching node $U$ and finally node $c_{j_\alpha}^{2}$ is not possible. Indeed, to do this we would need to reach $U$ using a backward top trip, and reach $c_{j_\alpha}^{2}$ from $U$ using a forward top trip, which means $t^{U,s}\leq t^U$. However, this contradicts constraint~C10. We could also consider going from $t_i^1$ to another clause node $c_k^1$ such that $x_i$ also appears positive in $c_k$. Let $\symtrip{$T$}_k$ denote the backward bottom-variable trip which allows us to go from $t_i^1$ to $c_k^1$ and let $t$ denote the time when it arrives at $c_k^1$. The time constraint~C7 then implies that $t=\tau(\symtrip{$T$}_k)+1\ge t^{\mathrm{U,s}}-1$. Let $T$ be a bottom-variable or bottom-clause trip that we use to leave  $c_k^1$ and later reach $h$. $T$ has to arrive at $c_k^1$ at $t' \geq t$. Since $t' = \tau(T) + L + l +1$, from the time constraint~C8 it follows that $t' \leq t^{\mathrm{U,s}}-3$ contradicting $t'\geq t$.
    
    \item {Going through a variable node $f_i^1$, if we assume that $x_i$ appears negative in $c_{j_\alpha}$, is not possible either for similar reasons.}
\end{itemize}
We have thus proved that no head connected to $c_{j_\alpha}^2$ is $\tau$-reachable from any middle node between $B$ and $c_{j_\alpha}^1$: this implies that the reachability of $\tau$ is at most $|V|^2-l(m+2)<Q$. This concludes the proof of the theorem.\qed
\end{proof}

Our last result shows that, in symmetric strongly temporalisable trip networks, there is always a schedule whose reachability is quadratic with respect to the number of nodes. The general idea to prove this result is to find a somewhat central trip, and then schedule trips so that a constant fraction of nodes can reach the central trip and a constant fraction of nodes are reached from the central trip relying on Facts~\ref{obs:one-to-all} and~\ref{obs:all-to-one}. We will find this central trip as a centroid in a weighted tree.

Let us, then, first recall the definition of centroid. Given a node-weighted tree $R$, the weight of $R$ is defined as the sum of the weights of its nodes. We then define a \emph{weighted centroid} of a tree $R$ of weight $K$ as a node $c$ such that the removal of $c$ disconnects $R$ into subtrees of weight $2K/3$ at most. Such a centroid can be found efficiently as stated below.

\begin{lemma}[Folklore]\label{lem:centroid}
Given a node-weighted tree $R$, a centroid node $c$ can be found in linear time. Moreover, if the weight of $R$ is $K$ and the centroid $c$ has weight $2K/3$ at most, then there exists a partition $P_1,P_2$ of its pending subtrees such that both $P_1\cup\{c\}$ and $P_2$ have total weight $2K/3$ at most. Such a partition can be computed at the cost of sorting the subtrees by non-decreasing weight.
\end{lemma}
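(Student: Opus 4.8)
The plan is to prove the two halves of the statement separately, using the standard "move towards the heaviest neighbour" argument for existence and then a greedy bin-packing-style argument for the balanced partition. First I would recall the standard existence proof: root $R$ arbitrarily, and for each node $v$ let $\mathrm{sub}(v)$ denote the weight of the subtree rooted at $v$ (all these quantities are computed in linear time by one post-order traversal). Starting from the root, repeatedly move from the current node $v$ to the child $u$ maximising $\mathrm{sub}(u)$, as long as $\mathrm{sub}(u) > K/2$ (where $K$ is the total weight). This process terminates at some node $c$; I would then argue that removing $c$ splits $R$ into components each of weight at most $2K/3$ — in fact, each pending subtree of $c$ hanging below it has weight $\le K/2$ by the stopping rule, and the component containing the parent of $c$ has weight $K - \mathrm{sub}(c) \le K/2$ as well, since the walk only descended while subtree weights exceeded $K/2$, so $\mathrm{sub}(c) > K/2$ (or $c$ is the root, in which case there is no parent component). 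Hence every component has weight $\le K/2 \le 2K/3$. The whole walk touches each node at most once along a root-to-leaf path and does $O(1)$ work per step after the initial traversal, so this is linear time.

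Next I would handle the balanced-partition half. Assume the centroid $c$ has weight $w(c) \le 2K/3$. Let $A_1, \dots, A_p$ be the weights of the pending subtrees of $c$ (the connected components of $R - c$), each satisfying $A_i \le 2K/3$ by the first part, and $\sum_i A_i = K - w(c)$. The goal is a partition of $\{1,\dots,p\}$ into two blocks $B_1, B_2$ with $w(c) + \sum_{i \in B_1} A_i \le 2K/3$ and $\sum_{i \in B_2} A_i \le 2K/3$. I would sort the $A_i$ in non-decreasing order and greedily add subtrees to $P_1$ (together with $c$) as long as the running total stays $\le 2K/3$; put the remaining subtrees in $P_2$. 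To see this works: let $P_1 \cup \{c\}$ accumulate until adding the next subtree $A_j$ would exceed $2K/3$; at that moment the total $w(c) + \sum_{i \in P_1} A_i$ is some value $s \le 2K/3$, and $s + A_j > 2K/3$. Since the $A_i$ are sorted non-decreasingly and $A_j$ is the smallest remaining, every subtree left for $P_2$ has weight $\ge A_j$, hence $A_j \le $ (average of $P_2$-weights) $\le$ ... — more directly, $\sum_{i \in P_2} A_i = K - s < K - (2K/3 - A_j) = K/3 + A_j$. Now I use $A_j \le 2K/3$ is not quite tight enough, so instead I observe $s + A_j > 2K/3$ gives $s > 2K/3 - A_j$, and also every element of $P_2$ is $\ge A_j$; if $P_2$ contained $\ge 2$ elements its weight would be $\ge 2A_j$, and combined with $s > 2K/3 - A_j$ and $s + \sum_{P_2} = K$ this forces $A_j < K/3$, whence $\sum_{P_2} = K - s < K/3 + A_j < 2K/3$. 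If $P_2$ has exactly one element it is $A_j$ itself (everything before it fit into $P_1$), and $A_j \le 2K/3$ directly. Either way $\sum_{i \in P_2} A_i \le 2K/3$, as required. If the greedy process never stalls, all subtrees go to $P_1 \cup \{c\}$ with total $K \le 2K/3$ forcing $K = 0$, a trivial case. The cost is dominated by sorting the subtrees, matching the claimed bound.

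The main obstacle — really the only subtle point — is getting the arithmetic of the balanced partition exactly right: one must be careful that the greedy "first-fit on sorted weights" genuinely certifies $\sum_{P_2} A_i \le 2K/3$ and not merely some weaker bound, and the case split on $|P_2| = 1$ versus $|P_2| \ge 2$ (using that sortedness makes the leftover elements the largest ones) is what makes it go through. Everything else is a routine post-order DFS plus the textbook centroid walk, so I would keep those parts brief and spend the detail budget on the partition inequality.
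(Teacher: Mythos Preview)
Your proof is correct and follows essentially the same approach as the paper: the same classical walk-to-the-heavy-side algorithm for locating the centroid, and the same sort-then-greedy strategy for the balanced partition. The only difference is the stopping rule (you fill $P_1$ until the next subtree would push $P_1\cup\{c\}$ past $2K/3$, whereas the paper stops as soon as $P_1\cup\{c\}$ reaches weight at least $K/3$); both variants are valid, and your verification of the $P_2$ bound is in fact more detailed than the paper's one-line sketch.
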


\begin{proof}
Note that the classical algorithm for finding a centroid in an unweighted tree can easily be adapted to the weighted case. Recall that it consists in starting from any node $v$. If it is not a centroid, then move to a neighbor whose subtree has weight greater than $K/2$ and repeat the test until finding a centroid. The partition $P_1,P_2$ is obtained by trying to add subtrees in $P_1$ one after another by non-decreasing weight and stopping as soon as $P_1\cup\{c\}$ has weight $K/3$ or more.\qed
\end{proof}

\begin{theorem}\label{th:symmetric}
Let $(D,\tnet)$ be a symmetric and strongly temporalisable trip network. Then there exists a schedule $S$ such that the $S$-reachability of $(D,\tnet)$ is at least a fraction $2/9$ of all node pairs. Such a schedule can be computed in polynomial time.
\end{theorem}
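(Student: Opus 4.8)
The plan is to route a constant fraction of all node pairs through a single ``central'' trip, found as the centroid of an auxiliary weighted tree built on trip pairs. By Corollary~\ref{cor:connected}, since $(D,\tnet)$ is symmetric and strongly temporalisable, $D$ is strongly connected. First I would build the auxiliary structure: group $\tnet$ into the symmetric pairs $\{T,\symtrip{$T$}\}$, assign to each node $v$ one pair $\pi(v)$ whose trips contain $v$, and let $H$ be the graph whose vertices are the pairs, with an edge between two pairs whenever a trip of one and a trip of the other share a node of $D$. Since every node and edge lies in some trip and $D$ is strongly connected, $H$ is connected; take a spanning tree $R$ of $H$, weighting each pair $\pi$ by $|\pi^{-1}(\pi)|$, so that $R$ has total weight $n$. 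By Lemma~\ref{lem:centroid} compute in polynomial time a weighted centroid $\pi^{\star}=\{T^{\star},\symtrip{$T^{\star}$}\}$ of $R$.

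If $\pi^{\star}$ has weight more than $2n/3$, then $T^{\star}$ contains more than $2n/3$ nodes, and the schedule placing $T^{\star}$, then $\symtrip{$T^{\star}$}$, then the remaining trips in any order makes every node of $T^{\star}$ reach every node of $T^{\star}$ (ride $T^{\star}$ forward to its last node, then $\symtrip{$T^{\star}$}$ backward), so its reachability exceeds $(2n/3)^2>2n^2/9$ and we are done. Otherwise, by Lemma~\ref{lem:centroid} partition the pending subtrees of $\pi^{\star}$ in $R$ into $P_1,P_2$ so that both $P_1\cup\{\pi^{\star}\}$ and $P_2$ have weight at most $2n/3$; letting $A_1$ (resp. $A_2$) be the nodes assigned to a pair in $P_1\cup\{\pi^{\star}\}$ (resp. $P_2$), the complementary sets $A_1,A_2$ partition $V$ and satisfy $|A_1|,|A_2|\in[n/3,2n/3]$. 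Each pending subtree $Q$ attaches to $\pi^{\star}$ in $R$ through a pair that shares a node $b_Q$ with $\pi^{\star}$, so $b_Q\in V(T^{\star})$; moreover, since $Q$ is a subtree of $R$ (adjacent pairs share a node, and within a pair every node reaches every other in the digraph via the two reverse trips), the sub-multidigraph $M_Q$ spanned by the trips of $Q$ is strongly connected, contains $b_Q$, and contains every node assigned to a pair of $Q$.

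I would then define the schedule $S$ by concatenating, in this order: for each $Q\in P_2$, a sub-schedule $S_Q$ of the trips of $Q$ obtained by applying Fact~\ref{obs:all-to-one} to the symmetric strongly connected sub-network $(M_Q,\text{trips of }Q)$ with sink $b_Q$ (so every node of $M_Q$, in particular every node of $A_2$ lying in $Q$, is $S_Q$-reachable to $b_Q$); then the trip $T^{\star}$; then $\symtrip{$T^{\star}$}$; then, for each $Q'\in P_1$, a sub-schedule $S_{Q'}$ of the trips of $Q'$ obtained from Fact~\ref{obs:one-to-all} with source $b_{Q'}$ (so $b_{Q'}$ is $S_{Q'}$-reachable to every node of $M_{Q'}$). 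Distinct subtrees and $\pi^{\star}$ use disjoint trip sets, hence every trip of $\tnet$ occurs exactly once and $S$ is a schedule of $(D,\tnet)$. For $v\in A_2$ lying in subtree $Q$: via $S_Q$ (scheduled first) $v$ reaches $b_Q\in V(T^{\star})$ no later than the start of $T^{\star}$, then rides $T^{\star}$ to its last node and $\symtrip{$T^{\star}$}$ back, hence reaches every node of $V(T^{\star})$, in particular $b_{Q'}$ for every $Q'\in P_1$; from $b_{Q'}$, the sub-schedule $S_{Q'}$ (scheduled last) carries $v$ to every node of $M_{Q'}$, so $v$ reaches every node of $A_1$. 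Because the sub-schedules are placed back-to-back, each transfer (at $b_Q$, at the two ends of $T^{\star}$, and at each $b_{Q'}$) respects the temporal constraint, so this is a genuine temporal path in $G[D,\tnet,\tau_S]$. Thus $S$ connects every ordered pair in $A_2\times A_1$, giving $S$-reachability at least $|A_2|\cdot|A_1|$. Writing $a=|A_2|$, the function $a\,(n-a)$ is concave and equals $2n^2/9$ at $a=n/3$ and at $a=2n/3$, so $|A_2|\cdot|A_1|\ge 2n^2/9$, i.e. at least a $2/9$ fraction of the $n^2$ ordered node pairs; all steps run in polynomial time.

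The main obstacle I expect is not the overall idea but the careful bookkeeping inside the construction: verifying that each $M_Q$ is genuinely a symmetric, strongly connected trip network so that Facts~\ref{obs:one-to-all} and~\ref{obs:all-to-one} may be applied to it, checking precisely that $b_Q$ lies on $T^{\star}$ and is reached (respectively, departed from) at a time compatible with the immediately following sub-schedule, confirming that concatenating schedules of disjoint trip subsets yields a valid combined schedule in which cross-segment temporal paths survive, and handling the integer rounding of $n/3$ and $2n/3$ so that the $2/9$ bound is exact rather than merely asymptotic.
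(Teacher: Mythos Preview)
Your proposal is correct and follows essentially the same approach as the paper: build the transfer graph on trip pairs, take a spanning tree weighted by the node-assignment counts, find a centroid, and either the centroid pair alone already carries more than $2n/3$ nodes, or the pending subtrees split into two parts of weight in $[n/3,2n/3]$, with one side funnelled into the central trip via Fact~\ref{obs:all-to-one} and the other side reached from it via Fact~\ref{obs:one-to-all}. The only cosmetic difference is that you route $A_2\to A_1$ whereas the paper routes $V(P_1)\cup V(C)\to V(P_2)$, i.e.\ the roles of the two parts are swapped; this is immaterial, and the bookkeeping concerns you flag (strong connectivity and symmetry of each $M_Q$, $b_Q\in V(T^\star)\cap M_Q$, disjointness of trip sets across subtrees) are exactly the details the paper leaves implicit.
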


\begin{proof}
Without loss of generality, we can assume that all trips in \tnet{} are distinct. If this is not the case, we can keep only one of multiple copies of the same trip: this can only reduce the reachability of the modified trip network. We can then consider trips in pairs $(T,\symtrip{$T$})$, where \symtrip{$T$} is the reverse trip of $T$, and we denote by $\mathbb{TP}$ the set of such pairs.  For any trip $T$, we also denote by $V(T)\subseteq V$ the set of nodes which $T$ (and $\symtrip{$T$}$) passes through. Finally, we assign arbitrarily each node $v\in V$ to a single trip pair $(T,\symtrip{$T$})$ such that $v\in V(T)$, and let $n_{T}$ denote the number of nodes assigned to the pair $(T,\symtrip{$T$})$.

We now define the \textit{transfer} undirected graph $\mathbb{P}=(\mathbb{TP},\mathbb{EP})$, where two trip pairs are connected when they share a node, that is, $\{(T_1,\symtrip{$T$}_1),(T_2,\symtrip{$T$}_2)\}\in\mathbb{EP}$ if and only if $V(T_1)\cap V(T_2)\neq\emptyset$. Let $M$ denote the multidigraph induced by $(D,\tnet)$. According to Corollary~\ref{cor:connected}, $M$ is strongly connected and, hence, $\mathbb{P}$ is connected. We then compute in linear time a weighted spanning tree $R$ of $\mathbb{P}$, where each trip pair $(T,\symtrip{$T$})$ is weighted by the number $n_{T}$. Note that the weight of $R$ is the number $n=|V|$ of nodes in $D$. We then find a centroid $(C,\symtrip{$C$})$ of $R$ according to Lemma~\ref{lem:centroid}.

First suppose that $(C,\symtrip{$C$})$ has weight greater than $2n/3$. Let $S$ be any schedule of $(D,\tnet)$ that starts with $C$ followed by $\symtrip{$C$}$. Then, we have that, for any $u$ and $v$ in $V(C)$, $v$ is $S$-reachable from $u$, that is, the $S$-reachability is greater than $4n^2/9$. The theorem follows.

Conversely, let us suppose that $(C,\symtrip{$C$})$ has weight $2n/3$ at most. According to Lemma~\ref{lem:centroid}, we then consider a partition $P_1,P_2$ of $R\setminus\{(C,\symtrip{$C$})\}$ such that both $P_1\cup\{(C,\symtrip{$C$})\}$ and $P_2$ have weight $2n/3$ at most. For $i=1,2$, let $V(P_{i})$ denote the set of nodes assigned to $T$, for some trip $T$ such that $(T,\symtrip{$T$})\in P_i$. Let $B_1,\ldots,B_{|P_1|}$ be the subtrees in $P_1$, sorted in an arbitrary way. For each $i$ with $i\in[|P_{1}|]$, the subtree $B_{i}$ corresponds to a strongly connected set $V_i$ of $D$. Moreover, $V_i$ must contain a node $u_{i}\in V(C)$ as some trip pair of $B_{i}$ is connected to $(C,\symtrip{$C$})$ in $\mathbb{P}$. We can then define a schedule $S_{i}$ of the trip pairs in $B_i$ according to Fact~\ref{obs:all-to-one} so that, for any node $v$ in $V_i$, $u_i$ is $S_i$-reachable from $v$. Hence, the schedule $S=S_{1},\ldots,S_{|P_1|},C,\symtrip{$C$}$ is such that, for any $u\in V(P_{1})\cup V(C)$ and $c\in V(C)$, $c$ is $S$-reachable from $u$. Similarly, by reasoning on the subtrees in $P_2$, we can extend $S$, so that, for any $u\in V(P_{2})$ and $c\in V(C)$, $u$ is $S$-reachable from $c$. In other words, the final schedule $S$ is such that, for any $u\in V(P_1)\cup V(C)$ and $v\in V(P_{2})$, $v$ is $S$-reachable from $u$.

Let $n_2$ denote the weight of $P_2$, that is, $n_2=|V(P_{2})|$. As both $P_1\cup\{(C,\symtrip{$C$})\}$ and $P_2$ have weight $2n/3$ at most, we have that $n/3\le n_2\le 2n/3$. Hence, the number of pairs of nodes $u$ and $v$ such that $u\in V(P_1)\cup V(C)$ and $v\in V(P_{2})$ is at least $(n-n_2)n_2\ge \frac{2}{9}n^{2}$ for $n/3\le n_2\le 2n/3$. The theorem thus follows.\qed
\end{proof}

\vspace{-1em}

\section{Open problems}

From a theoretical point of view, it would be interesting to close the gap between Theorems~\ref{th:inapprox-mrtt} and~\ref{thm:mrtthard} with regard to the inapproximability of \mrtt{} in a strongly temporalisable network.
From a more applicative point of view, it would be worth exploring other restrictions of the problem where constant approximation is possible. For example, we leave as a future work the study of trip networks with single edge trips as they can be seen as a variation in directed graphs of label-connectivity as defined in \cite{Goebel1991}.

Finally, an interesting generalisation consists of allowing variable waiting times in-between two consecutive edges of a trip. In other words, a temporalisation would then assign an appearance time to each edge of a trip so that the trip becomes a valid temporal walk: each edge appears after the arrival time of the previous edge.

\end{document}